\newtheorem{theorem}{Theorem}[section]
\newtheorem{proposition}[theorem]{Proposition}
\newtheorem{lemma}[theorem]{Lemma}
\theoremstyle{definition}
\newtheorem{definition}[theorem]{Definition}
\newtheorem{example}[theorem]{Example}
\newcommand{\numbox}[1]{{\setlength\fboxrule{-1.5pt}\fbox{\tcbox[colframe=black,colback=gray!15,shrink tight,boxrule=0.5pt,extrude by=0.5mm]{\scriptsize #1}}}}
\newcommand{\numboxblk}[1]{{\setlength\fboxrule{-1.5pt}\fbox{\tcbox[colframe=black,colback=gray!50,coltext=black,shrink tight,boxrule=0.5pt,extrude by=0.5mm]{\scriptsize #1}}}}
\newcommand{\threecand}{$\numbox{3}$}
\newcommand{\bdzero}{$\numbox{0}$}
\newcommand{\uncon}{$\numbox{$\infty$}$}
\newcommand{\unary}{$\numboxblk{\text{U}}$}
\newcommand{\kibitz}[2]{\ifnum\Comments=1{\color{#1}{#2}}\fi}
\newcommand{\vv}{\mathbf{v}}
\newcommand{\tildevv}{\tilde{\mathbf{v}}}
\newcommand{\uu}{\mathbf{u}}
\newcommand{\tildeuu}{\tilde{\mathbf{u}}}
\newcommand{\SW}{\text{SW}}
\newcommand{\PV}{\mathrm{PV}}
\newcommand{\PD}{\mathrm{PD}}
\renewcommand{\deg}{\mathrm{deg}}
\newcommand{\calR}{\mathcal{R}}
\newcommand{\calA}{\mathcal{A}}
\newcommand{\calD}{\mathcal{D}}
\newcommand{\calS}{\mathcal{S}}
\newcommand{\calQ}{\mathcal{Q}}
\newcommand{\true}{\text{\em true}}
\newcommand{\false}{\text{\em false}}
\renewcommand{\arraystretch}{1.2}
\title{\bf Protecting Elections by Recounting Ballots\thanks{Edith Elkind and Alexandros A. Voudouris are supported by the European Research Council (ERC) under grant number 639945 (ACCORD). Jiarui Gan is supported by the EPSRC International Doctoral Scholars Grant EP/N509711/1. Svetlana Obraztsova is supported by the MOE AcRF-T1-RG23/18 grant. Zinovi Rabinovich is supported by the NTU SUG M4081985 grant.
}}
\author[1]{Edith Elkind}
\author[1]{Jiarui Gan}
\author[2]{Svetlana Obraztsova}
\author[2]{\\ Zinovi Rabinovich}
\author[1]{Alexandros A. Voudouris}
\affil[1]{Department of Computer Science, University of Oxford}
\affil[2]{School of Computer Science and Engineering, Nanyang Technological University}
\date{}
\begin{document}

\maketitle

\begin{abstract}
Complexity of voting manipulation is a prominent topic in computational
social choice. In this work, we consider a two-stage voting manipulation
scenario. First, a malicious party
(an attacker) attempts to manipulate the election outcome in favor of a preferred candidate
by changing the vote counts
in some of the voting districts.
Afterwards, another party (a defender), which cares
about the voters' wishes, demands
a recount in a subset of the manipulated districts, restoring their vote counts
to their original values.
We investigate the resulting Stackelberg game
for the case where votes are aggregated using two variants of the Plurality rule,
and obtain an almost complete picture of the complexity landscape,
both from the attacker's and from the defender's perspective.
\end{abstract}

\section{Introduction}\label{sec:intro}
Democratic societies use elections to select their leaders. However, in societies without
a strong democratic tradition, elections may be used as a way to legitimize the status quo:
voters are asked to cast their ballots, but the election authorities do not count these ballots
correctly, in order to produce an outcome that favors a specific candidate. There are multiple reports
of such cases in Russia\footnote{\url{https://reut.rs/2Gf2FD5}},
Congo\footnote{\url{https://on.ft.com/2SW7ggy}} and Colombia\footnote{\url{https://colombiareports.com/voting-fraud-in-colombia-how-elections-are-rigged/}},
as well as a number of other countries. Even when the election authorities are trustworthy,
election results may be corrupted by an external party, for instance,
by means of hacking electronic voting machines~\citep{SFD+14,HT15}.

There are several ways to counteract electoral fraud. One approach is to send observers to polling
stations, to ensure that only eligible voters participate in the elections and their ballots are counted
correctly. However, it may be infeasible for the party that wants to protect the elections
(the {\em defender}) to send observers to all polling stations. Consequently, the election manipulator
(the {\em attacker}) may observe which polling stations remain unprotected, and focus
their effort on these stations. Thus, under this approach the attacker benefits from the second-mover
advantage.

An alternative approach that the defender can explore is to request recounts in some
of the voting districts. While recounts cannot protect from all forms of attacks on election integrity
(e.g., a recount is of limited use if voters have been bribed to vote in a specific way,
or if the polling station has been burned down), they are feasible
in a range of settings and offer the defender the second-mover advantage. Indeed, there are several
examples where a recount changed the election outcome. For instance, in the 2008 United States
Senate election in Minnesota the Democratic candidate Al Franken won the seat after a recount revealed
that 953 absentee ballots were wrongly
rejected\footnote{\url{https://bit.ly/2S2PMxY}}, and in the 2004 race for governor
in Washington the Democratic candidate Gregoire was declared the winner after three
consecutive recounts\footnote{\url{https://bit.ly/2tnO4gG}}.

However, recounts can be costly. In Gregoire's case, the Democratic party paid \$730000 for a statewide
manual recount, and in the 2016 US Presidential Election the fee to initiate a recount in Wisconsin
was \$3.5 million. Thus, a party that would like to initiate a recount in order
to rectify the election results should allocate its budget carefully.
Of course, the attacker also incurs costs to carry out the fraud: local election officials
may need to be bribed or intimidated, and the more districts are corrupted, the higher is the risk
that the election results will not be accepted.

\begin{table*}[t]
\centering
\small
\renewcommand{\arraystretch}{1.4}
\setlength{\tabcolsep}{12pt}

\begin{tabular}{l  l  l  l}
\noalign{\hrule height 1pt}
&\bf {Plurality over Voters ($\PV$)} & \multicolumn{2}{l}{\bf Plurality over Districts ($\PD$)}  \\
&    & Unweighted & Weighted \\
\noalign{\hrule height 0.5pt}
\multirow{3}{*}{\sc Rec} & NP-c, Thm.~\ref{thm:PV-Rec-hardness} (i) \threecand & {P, Thm.~\ref{thm:PD-Rec-unweighted}} & NP-c, Thm.~\ref{thm:PD-Rec-hardness} (i) \threecand \\[-1mm]

& NP-c, Thm.~\ref{thm:PV-Rec-hardness} (ii) \unary & & NP-c, Thm.~\ref{thm:PD-Rec-hardness} (ii) \unary \\[-1mm]

& $O(n^{m+2})$, Thm.~\ref{thm:PV-Rec-poly} & & $O(n^{m+2})$, Thm.~\ref{thm:PD-Rec-poly} \\[2mm]

\multirow{2}{*}{\sc Man}  &  NP-h, {Thm.~\ref{thm:PV-Man-hardness} (i)} \threecand~\bdzero~\uncon & NP-c, {Thm.~\ref{thm:PD-Man-NPhardness2}} \unary & $\Sigma_2^P$-c, Thm.~\ref{thm:PD-Man-hardness} \threecand \\[-1mm]

& NP-h, Thm.~\ref{thm:PV-Man-hardness} (ii) \unary~\bdzero~\uncon  &  & NP-h, Thm.~\ref{thm:PD-Man-NPhardness1} \unary~\bdzero \\
\noalign{\hrule height 1pt}
\end{tabular}

\caption{Summary of our complexity results. {\sc Man} denotes the attacker's problem,
and {\sc Rec} denotes the defender's problem. 
Hardness results with {\unary}
hold even when the input is given in unary (the default is binary);
with {\threecand} hold even for three candidates;
with {\bdzero} hold even when the defender's budget is zero;
with {\uncon} hold even when the attacker can change as many votes as she wants in each district.
}
\label{fig:results}
\end{table*}

\paragraph{Our Contribution.}
In this paper we analyze the strategic game associated
with vote recounting. In our model, there are two players:
the attacker, who modifies some of the votes in order to make his preferred
candidate $p$ the election winner,
and the defender, who observes the attacker's actions and tries to restore the correct outcome
(or, more broadly, to ensure that a candidate who is better than $p$ wins the election)
by means of recounting some of the votes. We assume that the set of voters is partitioned
into electoral districts, and both the defender and the attacker
make their choices at the level of districts rather than individual votes.
The attacker selects a subset of at most $B_{\cal A}$ districts and changes the vote counts
in the selected districts, and the defender can then restore the vote counts in
at most $B_{\cal D}$ districts to their original values.
We assume that both players have full information about the true votes
and each other's budgets, and the defender can observe the attacker's actions.
While the full information assumption is not entirely realistic,
we note that in a district-based model both parties only need to know the vote counts in each district
rather than individual votes, and one can get fairly accurate district-level information
from independent polls.
Also, verifying whether the votes in a district have been tampered with is possible using 
risk-limiting audits~\cite{LS12,S16}.

For simplicity, we focus on the Plurality voting rule, where each voter
votes for a single candidate. We consider two implementations of this rule:
(1) Plurality over Voters, where districts are only used for the purpose of collecting the ballots
and the winner is selected among the candidates that receive the largest number of votes in total,
and
(2) Plurality over Districts, where each district selects a preferred candidate using the Plurality rule,
and the overall winner is chosen among the candidates supported by the largest number of districts;
we also consider a variant of the latter rule where districts have weights,
and the measure of a candidate's success is the total weight of districts that support her.
Both of these rules are widely used in practice. For example, Plurality over Voters
is commonly used in gubernatorial elections in the US, while Plurality
over Districts is used in the US Presidential elections.

We provide a detailed analysis of the computational complexity of the
algorithmic problems faced by the attacker and the defender.
Our main results are summarized in Table~\ref{fig:results}.
Briefly, assuming that the vote counts and the weights of the districts are specified
in binary, most of the problems we consider are computationally hard; however, the defender's problem appears
to be easier than that of the attacker, and we also get some tractability results for the former.
Towards the end of the paper,
we consider a variant of our model where the attacker is limited to only transferring votes
to his preferred candidate; we show that, while this assumption reduces the attacker's
ability to achieve his goals, it lowers the complexity of some of the problems we consider.

\paragraph{Related Work.}
There is a very substantial literature on voting manipulation and bribery; we point the readers
to the excellent surveys of \citet{CW16} and \citet{FR16}. In much of this literature
it is assumed that the malicious party can change some of the votes subject to
various constraints, and the challenge is to determine whether the attacker's task
is computationally feasible; there is no defender that can counteract the attacker's
actions.

While there is a number of papers that apply game-theoretic analysis
to the problem of voting manipulation, they typically consider interactions
between several manipulators, with possibly conflicting goals (e.g., see the recent book
by \citet{M18}), rather than a manipulator and a socially-minded actor.
An important exception, which is similar in spirit to our paper,
is the recent work of \citet{yin2018optimal},
who investigate a pre-emptive approach to protecting elections. In their model
the defender allocates resources
to guard some of the electoral districts, so that the votes there cannot be corrupted;
notably, in this model the defender has to commit to its strategy first, and the attacker
can observe the defender's actions before deciding on its response.
The leader-follower (defender-attacker) structure of this model
is in the spirit of a series of successful applications of Stackelberg
games to security resource allocation problems~\citep{tambe2011security}.
\citet{li2017protecting} analyze a variant of the model of Yin et al.~where the goal is to minimize
resource consumption, and \citet{chen2018protecting} study a similar scenario,
in which manipulation is achieved through bribing the voters.
The key difference between our work and the above papers is the action order of the players:
in all prior work on election protection that we are aware of the defender makes the first move.

\section{The Model}\label{sec:model}
We consider elections over a {\em candidate set} $C$, $|C|=m$.
There are $n$ {\em voters} who are partitioned into $k$
pairwise disjoint {\em districts} $D_1, \dots, D_k$, $k\le n$;
for each $i\in [k]$, let $n_i=|D_i|$.
For each $i\in [k]$,
district $D_i$ has a {\em weight $w_i$}, which is a positive integer;
we say that an election is {\em unweighted} if $w_i=1$ for all $i\in [k]$.
Each voter votes for a single candidate in $C$.
For each $i\in [k]$ and each $a\in C$ let $v_{ia}$ denote the
number of votes that candidate $a$ gets from voters in $D_i$;
we refer to the list $\vv = (v_{ia})_{i\in [k], a\in C}$ as the {\em vote profile}.

Let $\succ$ be a linear order over $C$; $a \succ b$ indicates that $a$ is favored over $b$.
We consider the following two voting rules, which take the vote profile
$\vv$ as their input.
\begin{itemize}
\item
\emph{Plurality over Voters} (PV).
We say that a candidate $a$ {\em beats} a candidate $b$ under PV
if $\sum_{i\in [k]}v_{ia} > \sum_{i\in [k]}v_{ib}$ 
or $\sum_{i\in [k]}v_{ia} = \sum_{i\in [k]}v_{ib}$ and $a\succ b$;
the winner is the candidate that beats all other candidates. 
Note that district weights $w_i$ are not relevant for this rule.
\item
\emph{Plurality over Districts} (PD).
For each $i\in [k]$ the winner $a_i$ in $D_i$
is chosen from the set $\arg\max_{a\in C} v_{ia}$, with ties broken according to $\succ$.
Then, for each $i\in [k]$, $a\in C$, we set $w_{ia} = w_i$ if $a=a_i$
and $w_{ia}=0$ otherwise.
We say that a candidate $a$ {\em beats} a candidate $b$ under PD
if $\sum_{i\in [k]}w_{ia} > \sum_{i\in [k]}w_{ib}$ 
or $\sum_{i\in [k]}w_{ia} = \sum_{i\in [k]}w_{ib}$ and $a\succ b$;
the winner is the candidate that beats all other candidates. 
\end{itemize}

For PV and PD, we define the {\em social welfare} of a candidate $a\in C$
as the total number of votes that $a$ gets and the total weight that $a$ gets, respectively:
$$
\SW^\PV(a) = \sum_{i \in [k]} v_{ia}, \quad
\SW^\PD(a) = \sum_{i \in [k]} w_{ia}.
$$
Hence, the winner under each voting rule is a candidate with the maximum social welfare.

We consider scenarios where an election may be manipulated by an {\em attacker}, who
wants to change the election result $a^*$ in favor of his preferred candidate $p\in C$.
The attacker has a budget $B_\calA \in [k]$, which means that he can manipulate
at most $B_\calA$ districts. For each $i\in [k]$, we are given an integer $\gamma_i$,
$0\le \gamma_i\le n_i$, which indicates how many votes the attacker can change
in district $i$ if he chooses to manipulate it.
Formally, a {\em manipulation}
is described by a set $M\subseteq [k]$, $|M|\le B_\calA$,
and a vote profile $\tildevv = (\tilde{v}_{ia})_{i\in [k], a\in C}$ such that
$\tilde{v}_{ia}=v_{ia}$ for all $i\not\in M$, $a\in C$, and for all $i\in [k]$
it holds that $\sum_{a\in C}\tilde{v}_{ia}=n_i$ and
$\sum_{a\in C}\max\{0, \tilde{v}_{ia}-v_{ia}\}\le \gamma_i$.

After the attack, a {\em defender} with budget $B_\calD \in \{0\}\cup[k]$ can demand a recount
in at most $B_\calD$ districts. Formally, a defender's strategy is a set $R\subseteq M$ with
$|R|\le B_\calD$; after the defender acts, the vote counts in all districts in $R$ are
restored to their original values, i.e., the resulting vote profile $\uu=(u_{ia})_{i\in [k], a\in C}$
satisfies $u_{ia}=v_{ia}$ for each $i\in R$, $a\in C$ and $u_{ia}=\tilde{v}_{ia}$
for each $i\in [k]\setminus R$, $a\in C$. Then the underlying voting rule $\calR\in\{\PV, \PD\}$
is applied to $\uu$ with ties broken according to $\succ$; let $a'$ denote the candidate selected
in this manner. The defender chooses her strategy $R$ so as to maximize $\SW^\calR(a')$,
breaking ties using $\succ$. 

We say that the attacker {\em wins}
if he has a strategy $(M, \tildevv)$ such that, once the
defender responds optimally, candidate $p$ is the winner in the resulting vote profile $\uu$;
otherwise we say that the attacker {\em loses}. We note that if
$B_\calD\ge B_\calA$, the defender can always ensure that $a'=a^*$, i.e., the winner at $\uu$
is the winner at the original vote profile $\vv$,
so in what follows we assume that the attacker's strategy satisfies $|M| > B_\calD$.

\begin{example}
Consider an election with five districts $D_1, \dots, D_5$ over a candidate set $C=\{a, b, p\}$,
where $p$ is the attacker's preferred candidate; suppose that ties are broken according
to the priority order $p \succ a \succ b$.
In each of $D_1$ and $D_2$ there are $7$ voters who vote for $a$, and in each of
$D_3$, $D_4$ and $D_5$ there are $3$ voters who vote for $b$. Suppose that $\gamma_i=n_i$
and $w_i=(n_i)^2$ for each $i\in [5]$, and $B_\calA=2$, $B_\calD=1$.

If the voting rule is $\PV$, then the attacker does not have a winning strategy.
Indeed, consider an attacker's strategy $(M, \tildevv)$. If $M\neq \{1, 2\}$,
the defender can set $R=M\cap\{1, 2\}$; in the recounted vote profile $a$ gets at least $14$
votes, so it is the election winner. If $M=\{1, 2\}$, the defender can set $R=\{1\}$:
in the recounted vote profile $p$ gets at most $7$ votes, while $b$ gets at least $9$ votes,
so the winner is $a$ or $b$ ($a$ can win if, e.g., the attacker chooses to transfer exactly $4$ votes
from $a$ to $p$ in $D_2$, in which case $a$ gets $10$ votes after the recount).
Note that even if the winner in $\uu$ is $b$ rather than $a$,
the defender still prefers recounting $D_1$ to no recounting:
even though she cannot restore the correct result, she prefers $b$ to $p$, since
$\SW^\PV(b)=9>0=\SW^\PV(p)$.

If the voting rule is $\PD$, then the attacker can win by choosing $M=\{1, 2\}$
and transferring a majority of votes from $a$ to $p$ in both districts.
Indeed, even if the defender demands a recount in one of these districts, $p$ still wins
the remaining district, leading to a vote weight of $49$ in the recounted profile.
Since $a$'s vote weight is $49$ and $b$'s vote weight is $27$, $p$
wins by the tie-breaking rule.
\hfill\qed
\end{example}

We assume that both the defender and the attacker have full information about the game.
Both parties know the true vote profile $\vv$, the parameters $w_i$ and $\gamma_i$
for each district $i\in [k]$ and each others' budgets.
Moreover, the defender observes the strategy $(M,\tildevv)$ of the attacker.

We can now define the following decision problems for each
$\calR \in\{\PV,\PD\}$:
\begin{itemize}
\item {\sc $\calR$-Man}:
Given a vote profile $\vv$, the attacker's preferred candidate $p$,
budgets $B_\calA$ and $B_\calD$, and district parameters $(w_i, \gamma_i)_{i\in [k]}$,
does the attacker have a winning strategy?

\item {\sc $\calR$-Rec}: Given a vote profile $\vv$,
a distorted vote profile $\tildevv$ with winner $b$,
a candidate $a \neq b$,
a budget $B_\calD$, and district weights $(w_i)_{i\in [k]}$,
can the defender recount the votes in at most $B_\calD$
districts so that $a$ gets elected?
\end{itemize}
We will also consider an optimization version of {\sc $\calR$-Rec},
where $c$ is not part of the input and
the goal is to maximize the social welfare of the eventual winner.

Unless specified otherwise, we assume that the vote counts $v_{ia}$
and the district weights $w_i$ are given in binary; we explicitly indicate
which of our hardness results still hold
if these numbers are given in unary.
All problems considered in this paper admit straightforward greedy algorithms
for $m=2$, so in what follows we focus on the case $m\ge 3$.
When the voting rule $\calR \in \{\PV,\PD\}$ is clear from context,
we write $\SW(a)$ instead of $\SW^\calR(a)$.

Next, we give formal definitions of the decision problems that are used throughout the paper
to show hardness of {\sc $\calR$-Rec} and {\sc $\calR$-Man} for $\calR \in \{\PV,\PD\}$,
under various constraints.

\begin{definition}[{\sc Subset Sum}]\label{def:SS}
An instance of {\sc Subset Sum} is given by a multiset $X$ of integers.
It is a yes-instance if there exists a non-empty subset $X' \subseteq X$
such that $\sum_{x \in X'} x = 0$, and a no-instance otherwise.
\end{definition}

\begin{definition}[{\sc Exact Cover By 3-Sets} (X3C)]\label{def:X3C}
An instance of X3C is given by a set $E$ of size $3\ell$ and a collection $\calS$
of $3$-element subsets of $E$.
It is a yes-instance if there exists a sub-collection $\calQ\subseteq \calS$ of size $\ell$
such that $\cup_{S\in \calQ}S=E$, and a no-instance otherwise.
\end{definition}

\begin{definition}[{\sc Independent Set}]\label{def:IS}
An instance of {\sc Independent Set} is a graph $G=(V,E)$ and an integer $\ell$.
It is a yes-instance if there exists a subset $V'\subseteq V$
of size $\ell$ that forms an independent set, i.e.,
$\{a, b\}\not\in E$ for all $a, b\in V'$,
and a no-instance otherwise.
\end{definition}

\begin{definition}[{\sc Partition}]\label{def:Partition}
An instance of {\sc Partition} is given by a multiset $X$ of positive integers.
It is a yes-instance if there exists a subset $X' \subseteq X$
such that $\sum_{x \in X'} x = \frac{1}{2} \sum_{x \in X} x$, and a no-instance otherwise.
\end{definition}

All of these problems are NP-complete~\citep{gj79}.
However, {\sc Subset Sum} and {\sc Partition} are NP-hard only when the input is given in binary;
for unary input, these problems can be solved in time polynomial in the size of the input.

\section{Plurality over Voters}\label{sec:pv}
In this section we focus on Plurality over Voters. We first take the perspective of the defender,
and then the perspective of the attacker.

Unfortunately, the defender's problem turns out to be computationally hard, even if
there are only three candidates or if the input vote counts are given in unary.

\begin{theorem} \label{thm:PV-Rec-hardness}
{\sc $\PV$-Rec} is {\em NP}-complete even when
\begin{itemize}
  \item[(i)] $m=3$, or
  \item[(ii)] the input vote profile is given in unary.
\end{itemize}
\end{theorem}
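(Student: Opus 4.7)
The plan is to establish membership in \textbf{NP} and then give two separate reductions, one for each hardness statement. Membership is immediate: a subset $R \subseteq [k]$ with $|R| \leq B_\calD$ is a polynomial-size certificate, since after restoring the true counts in $R$ one can recompute the PV winner in polynomial time.

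For part (i), I would reduce from {\sc Partition}. Given positive integers $x_1, \dots, x_n$ with $\sum_i x_i = 2T$, build a three-candidate instance with $C = \{a, b, c\}$ and tie-breaking priority $a \succ b \succ c$. Introduce an anchor district $D_0$ of $T$ voters who all vote for $a$ in both $\vv$ and $\tildevv$, and for each $i$ a district $D_i$ of $x_i$ voters who all vote for $c$ in $\vv$ and for $b$ in $\tildevv$. Set $B_\calD = n$. The distorted tallies are $(T, 2T, 0)$, so $b$ wins. For any recount $R \subseteq \{D_1, \dots, D_n\}$, writing $s = \sum_{i \in R} x_i$, the recounted tallies become $(T,\, 2T - s,\, s)$; by the priority tie-break, $a$ wins iff $T \geq 2T - s$ and $T \geq s$, i.e., iff $s = T$, which is precisely the {\sc Partition} condition. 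The numbers $x_i$ appear as the vote counts, and the overall instance has polynomial binary size, giving (i).

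For part (ii), I would reduce from {\sc X3C}. Given $E$ with $|E| = 3\ell$ and a collection $\calS$ of $3$-subsets, assume $\ell \geq 4$ WLOG (pad with disjoint trivial X3C instances). Introduce candidates $a$, $b$, an element candidate $c_e$ for each $e \in E$, and a per-set sink $d_S$ for each $S \in \calS$, with priority $a$ first and $b$ above every $d_S$. For each $S$ create a $4$-voter district $D_S$ whose true profile has one vote for $a$ and one for each $c_e$ with $e \in S$, and whose distorted profile has all four votes for $d_S$. Add base districts (true $=$ distorted, hence no recount effect): one with $\ell$ votes for $b$ and, for each $e$, one with $\ell - 1$ votes for $c_e$. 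Set $B_\calD = \ell$. The distorted tallies are $0$ for $a$, $\ell$ for $b$, $\ell-1$ for each $c_e$, and $4$ for each $d_S$, so $b$ wins. Since base districts have no effect, WLOG a defender's $R$ corresponds to some $R_S \subseteq \calS$; the recounted tallies are $\SW(a) = |R_S|$, $\SW(b) = \ell$, $\SW(c_e) = (\ell - 1) + \mathrm{cov}(e)$, and $\SW(d_S) \in \{0, 4\}$. Using $a$'s priority, $a$ wins iff $|R_S| \geq \ell$ (forcing $|R_S| = \ell$ by the budget) and $\mathrm{cov}(e) \leq 1$ for every $e$; combined with $\sum_e \mathrm{cov}(e) = 3|R_S| = |E|$, this forces $\mathrm{cov}(e) = 1$ for every $e$, i.e., $R_S$ is an exact cover. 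All vote counts are polynomial in $n + \ell$, so the reduction is valid under unary encoding.

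The main obstacle is realizing the intended per-district changes by valid non-negative vote profiles while keeping the global tallies under control. Per-set sink candidates $d_S$ in part (ii) are the key design choice: they locally absorb the $4$-vote surplus produced by each $D_S$ gadget, avoiding a single global sink whose distorted count would otherwise have to scale with $n$ (instead of $\ell$) and thereby force an infeasible base count whenever $n$ is much larger than $\ell$.
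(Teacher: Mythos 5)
Your proof is correct, and both reductions differ from the ones in the paper in ways worth noting. For part~(i) the paper reduces from {\sc Subset Sum} with signed integers and target $0$: the defender must leave \emph{un}-recounted a zero-sum subfamily, and ruling out near-misses requires the parity trick of doubling every integer so that $b$ and $p$, whose totals always sum to $2y$, must differ by at least $2$ unless they are exactly equal. Your {\sc Partition} gadget is more elementary: the recount moves $s=\sum_{i\in R}x_i$ votes from $b$ to $c$ while $a$'s anchor count $T$ stays fixed, so $a$ (favoured by tie-breaking) wins exactly when $s$ is squeezed to equal $T$ from both sides; no parity argument is needed and the correctness proof is a two-line equivalence. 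For part~(ii) both you and the paper reduce from X3C, and both close the argument with the same counting step ($\ell$ sets of size $3$ touching $3\ell$ elements with multiplicity constraints forces an exact cover), but the gadgets are dual: in the paper, recounting $D_S$ \emph{lowers} each covered $j_e$ by $2$ below $a$'s fixed score, so the constraint is ``cover every element at least once''; in yours, recounting $D_S$ \emph{raises} $a$ by $1$ and each covered $c_e$ by $1$, so the constraint is ``reach $\ell$ recounts'' plus ``cover every element at most once.'' Your per-set sink candidates $d_S$ (absorbing the $4$-vote surplus locally) replace the paper's use of $b$ as a global donor whose votes are redistributed to the $j_e$'s, at the cost of $|\mathcal{S}|$ extra candidates, which is harmless here. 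One cosmetic remark: in your part~(i) instance the target $a$ is not the true winner of $\vv$ (candidate $c$ is), which is permitted by the decision problem as defined but departs from the paper's convention of always asking to restore the true winner; if you wanted to match that convention you could add a small second anchor district for $a$, but nothing in the statement requires it.
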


\begin{proof}
This problem is clearly in NP. We give separate hardness proofs
for the case $m=3$ (part (i))
and for the case where the input is given in unary (part (ii)).

\paragraph{Part (i).}
To prove that {\sc $\PV$-Rec} is NP-hard for $m=3$,
we provide a reduction from {\sc Subset Sum}; see Definition~\ref{def:SS}.

Given an instance $X$ of {\sc Subset Sum} with $|X|=\ell$,
we construct an instance of $\PV$-{\sc Rec} as follows. Without loss of generality, we assume
that $x \neq 0$ for every $x \in X$ and $\sum_{x\in X}x> 0$,
and let $X^+=\{x\in X: x>0\}$, $X^-=\{x\in X: x<0\}$,
$y = \sum_{x\in X} 2|x|$.
We set $C=\{a, b, p\}$,
where $p$ is the attacker's preferred candidate. In what follows, we describe
each district $D_i$ by a tuple $(v_{ia}, v_{ib}, v_{ip})$.
There are $n = 12y\ell + 1$ voters distributed over $\ell+3$
districts, which are further partitioned into two sets $I_1$ and $I_2$ as follows:
\begin{itemize}
\item
For each $x \in X^+$ there is a district in $I_1$
with votes $(0, 2x, 0)$, which are distorted to $(0, 0, 2x)$, and
for each $x \in X^-$ there is a district in $I_1$
with votes $(0, 0, -2x)$, which are distorted to $(0, -2x, 0)$.
Note that $|I_1|=\ell$.
\item
$I_2$ contains three districts with votes $(y+1,0,0)$, $(0,y-\sum_{x\in X^+}2x,0)$,
and $(0,0,y + \sum_{x\in X^-}2x)$, respectively. The votes in these districts are not distorted.
\end{itemize}
Finally, $B_\calD = \ell - 1$.

Before the manipulation, $a$ gets $y+1$ votes and $b$ and $p$ get $y$ votes each.
After the manipulation, $a$ gets $y+1$ votes, $b$ gets
$y-\sum_{x\in X}2x$ and $p$ gets $y+\sum_{x\in X}2x$ votes;
thus, by our assumption that $\sum_{x\in X}x > 0$, candidate $p$ is the winner
in the manipulated profile. The goal is to restore the true winner $a$. 

Now, assume that there exists a subset $X' \subseteq X$ with $|X'| \geq 1$ such that $\sum_{x \in X'}x = 0$.
Then, by recounting the $\ell - |X'|$ districts of $I_1$ that correspond to the integers in
$X\setminus X'$, the defender can ensure that both $b$ and $p$ get $y$ votes. Since $a$ always gets
$y+1$ votes from the non-manipulated districts, she is successfully restored as the winner.

Conversely, assume that there is no non-empty subset $X' \subseteq X$ such that $\sum_{x \in X'} x = 0$. Then,
since the votes of $b$ and $p$ always add up to exactly $2y$, and each of them gets
an even number of votes from each district, one of them must get at least $y+2$
votes. Therefore, $a$ cannot be restored as the winner.

\paragraph{Part (ii).}
We give a reduction from {\sc Exact Cover By 3-Sets (X3C)}; see Definition~\ref{def:X3C}.
Given an instance of X3C,
we construct the following {\sc $\PV$-Rec} instance.
Without loss of generality, we assume that
$\cup_{S\in \calS}S= E$, and let $s=|\calS|$.
\begin{itemize}
\item Let $C = \{ j_e: e\in E \} \cup \{a,b\}$, $|C|=3\ell+2$.

\item For each subset $S\in \calS$, there is a district $D_S$,
where $a$ gets $2$ votes, $b$ gets $6$ votes, for each $e\notin S$
candidate $j_e$ gets $2$ votes, and for each $e\in S$ candidate $j_e$
gets $0$ votes. The attacker distorts the votes in $D_S$
by transferring two votes from $b$ to each candidate $j_e$ with $e\in S$,
so that in the distorted profile
$b$ gets $0$ votes in $D_S$ and every other candidate
gets $2$ votes in $D_S$.

\item There is a district $D_0$ where
$a$ receives $6\ell s$ votes, $b$ receives $0$ votes and
for every $e \in E$ candidate $j_e$ receives $6\ell s +1$ votes;
the votes in this district are not distorted.

\item The budget of the defender is $B_\calD=\ell$.
\end{itemize}
Candidate $a$ is the true winner with $2s+ 6\ell s$ votes,
compared to the $6s$ votes of $b$ and the
$2|\{S \in \calS: e \notin S\}| + 6\ell s+1 \leq 2 s+6\ell s-1$ votes of $j_e$ for every $e \in E$.
In the distorted profile $\tildevv$ candidate $a$ gets $2s+6\ell s$ votes, candidate $b$
gets $0$ votes, and each candidate in $C\setminus\{a, b\}$ gets
$2s+ 6\ell s+1$ votes.

Recounting a district $D_S$ reduces by $2$ the votes of each candidate $j_e$ such
that $e \in S$, leading to $a$ getting more votes than these candidates; $b$ cannot get more
than $6s$ votes no matter what the defender does.
Therefore, $a$ can be restored as the winner by recounting
$\ell$ districts if and only if $E$ can be covered by $\ell$ sets from $\calS$.
\end{proof}

If the number of candidates is bounded by a constant and the input
is given in unary, an optimal set of districts to recount can be identified in
time polynomial in the input size by means of dynamic programming.

\begin{theorem}\label{thm:PV-Rec-poly}
{\sc $\PV$-Rec} can be solved in time $O(k \cdot B_\calD \cdot (n+1)^m)$.
\end{theorem}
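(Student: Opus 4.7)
The plan is a straightforward dynamic programming over districts, tracking how many recounts have been used and the partial vote totals of every candidate.

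First I would set up the state. For $i \in \{0, 1, \ldots, k\}$, $j \in \{0, 1, \ldots, B_\calD\}$, and a vector $\zz = (z_1, \ldots, z_m) \in \{0, 1, \ldots, n\}^m$, define $f(i, j, \zz) = \true$ if there is a subset $R \subseteq [i]$ with $|R| = j$ such that the partial vote profile obtained by using $\vv_{i'}$ for $i' \in R$ and $\tildevv_{i'}$ for $i' \in [i]\setminus R$ yields vote totals exactly $\zz$. Since each $z_a$ is bounded by $n$ and the defender's decision for each district is binary (recount or not), the table has at most $(k+1)(B_\calD+1)(n+1)^m$ cells.

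Next I would give the recurrence. The base case is $f(0, 0, \mathbf{0}) = \true$. For $i \geq 1$, set $f(i, j, \zz) = \true$ if either $f(i-1, j, \zz - \tildevv_i) = \true$ (district $i$ is left distorted) or $j \geq 1$ and $f(i-1, j-1, \zz - \vv_i) = \true$ (district $i$ is recounted); here entries with a negative coordinate are treated as $\false$. Each cell is filled by inspecting two predecessors, so the running time is $O(k \cdot B_\calD \cdot (n+1)^m)$, matching the claimed bound.

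Finally I would read off the answer. For the decision version, the target candidate $a$ can be made the winner if and only if there exist $j \leq B_\calD$ and $\zz$ with $f(k, j, \zz) = \true$ such that $z_a > z_b$ for every $b \neq a$, or $z_a = z_b$ with $a \succ b$; this is a single scan over the table. For the optimization version one instead maximizes $\SW^\PV$ of the resulting winner over all reachable $(k, j, \zz)$ with $j \leq B_\calD$, breaking ties by $\succ$. Since there is essentially no subtlety beyond bookkeeping, the only point requiring care is making sure the tie-breaking rule $\succ$ is applied consistently when identifying the winner at each terminal state; this is immediate from the definition of PV.
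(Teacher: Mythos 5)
Your proposal is correct and is essentially the same dynamic program as the paper's: both track (district index, recount budget used, vector of vote totals) with a table of size $O(k \cdot B_\calD \cdot (n+1)^m)$, differing only in that you accumulate partial totals forward from $\mathbf{0}$ while the paper starts from the fully distorted totals $\tildeuu$ and swaps in $\vv_i - \tildevv_i$ for each recounted district. The final scan over terminal states to identify a winning (or welfare-maximizing) vote vector matches the paper's argument as well.
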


\begin{proof}
Consider an instance of {\sc $\PV$-Rec} with a candidate set $C$, $|C|=m$,
and $n$ voters that are distributed over $k$ districts.
For each $i\in [k]$, let
$\vv_i = (v_{ia})_{a \in C}$ and
$\tildevv_i = (\tilde{v}_{ia})_{a \in C}$
denote, respectively, the true and distorted votes in district $i$.
Let $B_\calD$ be the budget of the defender.

We present a dynamic programming algorithm that given a candidate $c\in C$,
decides whether $c$ can be made the election winner by recounting at most $B_\calD$ districts.
Our algorithm fills out a table $T$ containing entries of the form $T(i, \ell, \uu)$,
for each $i \in \{0,1,\dots, k\}$, $\ell \in \{0,1,\dots,B_\calD\}$,
and $\uu= (u_a)_{a \in C} \in \{0, \dots, n\}^m$;
thus, $|T|=O(k \cdot B_\calD \cdot (n+1)^m)$.
We define $T(i, \ell, \uu) = \true$ if
we can recount at most $\ell$ of the first $i$ districts so
that the vote count of candidate $a$ equals $u_a$ for each $a \in C$;
otherwise we define $T(i, \ell, \uu) = \false$.
There exists a recounting strategy that restores $c$ if and only if there
exists a $\uu$ such that $T(k, B_\calD, \uu) = \true$,
$u_c \ge u_a$ for all $a \in C$,
and for all $a\in C\setminus\{c\}$ such that $u_c=u_a$
the tie-breaking rule favors $c$ over $a$.

For each $a\in C$, let $\tilde{u}_a = \sum_{i \in [k]} \tilde{v}_{ia}$
be the number of votes that candidate $a$ gets after manipulation,
and let $\tildeuu=(\tilde{u}_a)_{a\in C}$.
We fill out $T$ according to the following rule:
\begin{align*}
T(i, \ell, \uu ) =
\begin{cases}
\true, ~\quad \text{if } \uu = \tildeuu \\
\false, \quad  \text{if $i=0$ or $\ell=0$, and $\uu \neq \tildeuu$}  \\
T(i-1, \ell, \uu) \vee (\uu-\vv_i+\tildevv_i\in \{0, \dots, n\}^m~\mbox{and}\\
 \qquad T \left( i-1, \ell-1, \uu - \vv_i + \tildevv_i \right)), \ \text{otherwise.}
\end{cases}
\end{align*}
This completes the proof.
\end{proof}

We obtain similar hardness results for the attacker's problem. However,
it is not clear if {\sc $\PV$-Man} is in NP. Indeed,
it may belong to a higher level of the polynomial hierarchy: it is not hard to see
that {\sc $\PV$-Man} is in $\Sigma_2^P$, and it is plausible that
this problem is hard for this complexity class.

\begin{theorem}\label{thm:PV-Man-hardness}
{\sc $\PV$-Man} is {\em NP}-hard even when $B_\calD=0$,  $\gamma_i=n_i$ for all $i\in [k]$ and
\begin{itemize}
  \item[(i)] $m=3$, or
  \item[(ii)] the input vote profile is given in unary.
\end{itemize}
\end{theorem}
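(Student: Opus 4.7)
The plan is to give two separate reductions, both starting from the same simplification: with $B_\calD = 0$ and $\gamma_i = n_i$, the attacker's best play in any district $i \in M$ is to redirect all $n_i$ votes to $p$. Writing $\alpha_i^c := n_i - v_{ip} + v_{ic}$ for each $c \ne p$, the attacker wins iff there exists $M \subseteq [k]$ with $|M| \le B_\calA$ such that $\sum_{i \in M} \alpha_i^c \ge S_c - S_p$ for every $c \ne p$ (with ties broken in favor of $p$). Thus {\sc $\PV$-Man} with these restrictions is a multi-dimensional knapsack-cover feasibility problem, and the task reduces to encoding an NP-hard combinatorial problem into this feasibility structure.

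For part~(i), I would reduce from {\sc Partition} (Definition~\ref{def:Partition}). Given $X = \{x_1, \ldots, x_\ell\}$ with $\sum_i x_i = 2Y$, first pre-process by appending $\ell$ dummy zeros to form $X' = (x_1, \ldots, x_\ell, 0, \ldots, 0)$ of length $2\ell$; a {\sc Partition} solution for $X$ corresponds exactly to a subset of $X'$ of size $\ell$ summing to $Y$. Then take candidates $\{a, b, p\}$ with $p$ winning ties. For each entry $x_i'$ of $X'$ create an item-district $D_i$ with $v_{ia} = x_i'$, $v_{ib} = C - x_i'$, $v_{ip} = 0$ for a constant $C > \max(\max_i x_i, \ell)$, yielding $\alpha_i^a = x_i' + C$ and $\alpha_i^b = 2C - x_i'$. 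Add single-voter filler districts supporting $a$ (respectively $b$) to calibrate the baselines so that $S_a - S_p = \ell C + Y$ and $S_b - S_p = 2\ell C - Y$, and set $B_\calA = \ell$. Adding the two attacker inequalities yields $|M_D| + (|M_{F^a}| + |M_{F^b}|)/C \ge \ell$; together with the budget constraint and the choice of $C$, this forces $|M_D| = \ell$ and $|M_{F^a}| = |M_{F^b}| = 0$, after which the two inequalities collapse to $\sum_{i \in M_D} x_i' = Y$, i.e., a {\sc Partition} solution.

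For part~(ii), I would reduce from {\sc X3C} (Definition~\ref{def:X3C}), which is strongly NP-hard. Given $(E, \calS)$ with $|E| = 3\ell$, use candidates $\{p, a^*\} \cup \{a_e : e \in E\}$ with $p$ winning ties. For each $S \in \calS$ create a set-district $D_S$ with $v_{S a_e} = 1$ iff $e \in S$ and $v_{Sp} = v_{S a^*} = 0$, so $n_S = 3$, $\alpha_{D_S}^{a^*} = 3$, and $\alpha_{D_S}^{a_e} = 3 + [e \in S]$. Add $3\ell$ single-voter fillers for $a^*$ and, for each $e \in E$, $3\ell - 2$ single-voter fillers for $a_e$, so that $S_{a^*} - S_p = 3\ell$ and $S_{a_e} - S_p = 3\ell + 1$. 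Set $B_\calA = \ell$. Since every set-district contributes the maximum possible $3$ toward the $a^*$-threshold while every filler contributes at most $2$, the only way to meet this threshold with budget $\ell$ is to spend all $\ell$ slots on set-districts; under this choice, the $a_e$-threshold reduces to $|\{S \in M : e \in S\}| \ge 1$ for every $e$, i.e., the chosen sets form an exact cover of $E$. All vote counts are $O(\ell)$, so the construction is polynomial in unary.

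The main obstacle lies in part~(i): with only three candidates, the realizable pairs $(\alpha_i^a, \alpha_i^b)$ satisfy $\alpha_i^a + \alpha_i^b = 3 n_i$, so the two attacker inequalities are structurally coupled rather than independent. The trick is to turn this coupling into leverage by keeping $n_i = C$ constant across item-districts, so that the sum of the two inequalities (together with the budget) pins down $|M_D|$, after which the two inequalities become tight and force $\sum_{i \in M_D} x_i' = Y$. Verifying that the large-$C$ argument truly rules out any alternative manipulation strategy exploiting filler districts is the technically delicate step.
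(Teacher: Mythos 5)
Your opening reduction is a correct and useful observation: with $B_\calD=0$ and $\gamma_i=n_i$ the attacker may as well give all $n_i$ votes to $p$ in every manipulated district, so {\sc $\PV$-Man} becomes the covering feasibility problem $\sum_{i\in M}\alpha_i^c\ge \SW(c)-\SW(p)$ for all $c\ne p$. Your part~(ii) is essentially sound and close in spirit to the paper's own X3C reduction; the only slip is the filler count for $a_e$, which should be $3\ell+1-|\{S\in\calS: e\in S\}|$ rather than the uniform $3\ell-2$ (your number is right only if every element occurs in exactly three sets, so you should either compute the count per element or invoke the standard restriction of X3C in which each element occurs in at most three sets, to keep the counts nonnegative).

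Part~(i), however, has a genuine gap: the reduction is not polynomial-time. Your calibration requires raising $\SW(a)-\SW(p)$ from $2Y$ to $\ell C+Y$ and $\SW(b)-\SW(p)$ from $2\ell C-2Y$ to $2\ell C-Y$ using \emph{single-voter} districts, i.e., $\ell C-Y$ fillers for $a$ and $Y$ for $b$ --- about $\ell C$ districts in total. Since you need $C>\max_i x_i$ and {\sc Partition} is only NP-hard for binary-encoded integers, $C$ (and hence the number of districts) is exponential in the input size. The single-voter restriction is not incidental: your counting argument needs every filler to contribute $O(1)$ to each covering inequality, while the calibration needs the fillers to contribute $\Theta(\ell C)$ total votes, and these two requirements jointly force exponentially many districts. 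Replacing the fillers by a few large districts breaks the argument, because such districts would have $\alpha$-values comparable to the item districts and the attacker could manipulate them instead. The paper's proof of part~(i) resolves exactly this tension differently: it reduces from {\sc Subset Sum} using a constant number of additional district \emph{types} whose sizes are roughly half those of the item districts, and then argues globally that $p$ needs at least $n/3$ votes while gaining at most $4y$ per district, which pins the manipulation to exactly $\ell$ full-size districts without any unit-size padding. You would need an argument of this flavour (or some other way to shift the thresholds by an exponentially large amount using polynomially many districts that the attacker provably does not want to touch) to repair part~(i).
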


\begin{proof}
We prove the two claims separately.

\paragraph{Part (i).}
To prove that {\sc $\PV$-Man} is NP-hard for $m=3$,
we provide a reduction from {\sc Subset Sum};
see Definition~\ref{def:SS}.

Given an instance $X$ of {\sc Subset Sum} with $|X|=\ell$,
we construct an instance of PV-{\sc Man} as follows. We can assume without loss of generality
that $\ell\ge 2$ and $x\neq 0$ for every $x\in X$, and let $y=\max_{x\in X}2|x|$; by our assumptions, $y\ge 2$.
We set $C=\{a, b, p\}$,
where $p$ is the attacker's preferred candidate. In what follows, we describe
each district $D_i$ by a tuple $(v_{ia}, v_{ib}, v_{ip})$.
There are $n = 12y\ell + 1$ voters distributed over
$k=4\ell+2$ districts, which are further partitioned into four sets $I_1, I_2, I_3, I_4$ as follows:
\begin{itemize}
\item
For each $x \in X$ there is a district in $I_1$
with votes $(2y+4x, 2y - 4x, 0)$. Thus, $|I_1|=\ell$.
\item
Set $I_2$ consists of $\ell-1$ districts with votes $(2y, 2y, 0)$ in each district.
\item
For each $x \in X$ there are two districts in $I_3$
with votes $(y-2x, y + 2x, 0)$. Thus, $|I_3|=2\ell$.

\item Set $I_4$ consists of three districts with
votes $(y,y,0)$, $(y,y,0)$, and $(0,0,1)$.
\end{itemize}
We set $B_\calA = \ell$, $B_\calD = 0$ and $\gamma_i=n_i$ for each $i\in [k]$.

We have $\SW(a)=\SW(b)=6y\ell$ and $\SW(p)=1$.
Hence, the true winner is $a$ or $b$, depending on the tie-breaking rule.
We claim that the attacker can make $p$ the winner if and only if
there exists a non-empty subset $X' \subseteq X$ such that $\sum_{x \in X'} x = 0$.

To see this, assume first that there exists a subset $X' \subseteq X$ such that
$|X'|\ge 1$ and $\sum_{x \in X'} x = 0$.
Then the attacker can distort the votes in the $|X'|$ districts of $I_1$
corresponding to the elements of $X'$, and in arbitrary $\ell-|X'|$ districts of $I_2$,
by transferring all votes to $p$ in each of these districts. In the resulting election,
$p$ gets $4y\ell+1$ votes, while
$a$ and $b$ get $4y\ell$ votes each, so $p$ becomes the winner.

Conversely, suppose that the attacker has a successful manipulation $(M, \tildevv)$ with $|M|\le \ell$.
For each $c\in C$, let $s_c$ denote the number of votes that $c$ receives in $\tildevv$.
For $p$ to be the winner in $\tildevv$, it must hold that $s_p\ge n/3 = (12y\ell+1)/3$;
since $s_p$ is an integer and $\SW(p)=1$, this means that the manipulation transfers
at least $4y\ell$ votes to $p$.
On the other hand, in every district there are at most $4y$ voters who vote for $a$ or $b$, so $p$ can gain
at most $4y\ell$ votes from the manipulation. It follows that $s_p=4y\ell+1$,
$s_a+s_b=8y\ell$.
If these $8y\ell$ votes are not split evenly between $a$ and $b$,
at least one of these candidates would get strictly more than $4y\ell$ points; since each district
allocates an even number of votes to both $a$ and $b$, this further means that one of them would get
at least $4y\ell+2$ votes, a contradiction with $p$ being the winner at $\tildevv$.
Thus, it must be the case that $s_a = s_b = 4y\ell$.

Further, $s_p=4y\ell+1$, $|M|=\ell$ implies that $M\subseteq I_1\cup I_2$ and
$M \cap I_1 \neq \varnothing$. Moreover, we have
$\tilde{v}_{ia} = \tilde{v}_{ib} = 0$ for every district $i\in M$.
Hence,
$$
s_a = 4y\ell - 4\sum_{i\in M\cap I_1} x_i, \qquad
s_b = 4y\ell + 4\sum_{i\in M\cap I_1} x_i,
$$
where $x_i$ is the integer in $X$ that corresponds to district $D_i$.
Thus, $\sum_{i\in M\cap I_1} x_i = 0$,
and hence $X' = \{x_i\in X: i\in M\cap I_1\}$
is a witness that $X$ is a yes-instance of {\sc Subset Sum}.

\paragraph{Part (ii).}
To prove that {\sc $\PV$-Man} is NP-hard when the input is given in unary,
we provide a reduction from X3C; see Definition~\ref{def:X3C}.

Given an instance $\langle E, \mathcal{S} \rangle$ of X3C with $|E|=3\ell$, $|\calS|=s$,
we construct an instance of {\sc $\PV$-Man} as follows.
We set $C = \{j_e : e \in E \} \cup \{p\}$, where $p$ is the attacker's preferred candidate.
The districts are partitioned into three sets $I_1, I_2, I_3$:
\begin{itemize}
\item
For each subset $S \in \calS$ the set $I_1$ contains a district $D_S$.
In this district each candidate $j_e$ such that $e \in S$ gets $3\ell$
votes, and all other candidates get no votes. Thus, $|D_S| = 9 \ell$.

\item
For each element $e\in E$, the set $I_2$
contains $3\ell s + 9\ell^2 - 3\ell\cdot|\{S \in \calS: e\in S\}|$ districts;
each of these districts consists of a single voter who votes for $j_e$.

\item The set $I_3$ contains a single district $D^*$ that consists of
$3\ell s -2\ell$ voters who vote for $p$.
\end{itemize}
We set $B_\calA = \ell$, $B_\calD = 0$ and $\gamma_i = n_i$ for all $i\in [k]$.

We have $\SW(j_e) = 3\ell s + 9\ell^2$ for all $e \in E$
and $\SW(p) = 3\ell s- 2\ell$. Hence, the true winner is the candidate in $C\setminus\{p\}$
who is favored by the tie-breaking rule.
We show that the attacker is able to make $p$ the winner if and only if
$E$ admits an exact cover by sets from $\calS$.

Suppose that $\calQ \subseteq \calS$ is an exact cover for $E$; note that $|\calQ| = \ell$.
The attacker can manipulate the $\ell$ districts in $I_1$ that correspond to sets in $\calQ$
by reassigning all the $9 \ell$ votes in each of them to $p$.
In the resulting election, $p$ gets $3\ell s + 9 \ell^2 - 2\ell$ votes,
while every other candidate $j_e$ gets $3 \ell s + 9 \ell^2 - 3 \ell$ votes,
as every $e$ is covered by exactly one set in $\calQ$.

Conversely, suppose the attacker has a successful manipulation $(M, \tildevv)$ with $|M| \le \ell$.
For each $c\in C$, let $s_c$ denote the number of votes that $c$ receives in $\tildevv$.
As $p$ can gain at most $9\ell$ votes for each district in $M$,
we have $s_p \le 3 \ell s + 9 \ell^2 - 2\ell$.
Let $\calQ=\{S\in \calS: D_S~\text{ is manipulated}\}$; note that $|\calQ|\le\ell$.
We claim that $\calQ$ is a cover for $E$. Indeed,
if for some $e \in E$ no district in $\{D_S : e \in S\}$ is manipulated,
the manipulation lowers the score of  $j_e$ by at most $\ell$, so
$s_{j_e} \ge 3 \ell s + 9 \ell^2 - \ell > s_p$, a contradiction.
\end{proof}

In the hardness reductions in the proof of Theorem~\ref{thm:PV-Man-hardness}
the defender's budget is $0$. This indicates that the attacker's problem remains
NP-hard even if the defender is known to use a heuristic (e.g., a greedy algorithm)
to compute her response.

We remark that {\sc $\PV$-Rec} and {\sc $\PV$-Man} with $B_\calD=0$ are very similar in spirit
to combinatorial (shift) bribery~\citep{BFNT16}. In both models, a budget-constrained
agent needs to select a set of vote-changing actions, with each action affecting
a group of voters. However, there are a few technical differences between the models.
For instance, in our model different actions are associated with non-overlapping groups of voters,
which is not the case in combinatorial shift bribery. On the other hand, in shift
bribery under the Plurality rule votes can only be transferred to/from the manipulator's
preferred candidate $p$, while our model does not impose this constraint
(see, however, Section~\ref{sec:regular}). Consequently, it appears that
the technical results in our paper cannot be derived from known results for combinatorial shift bribery.


\section{Plurality over Districts}\label{sec:pd}
In this section we study Plurality over Districts.
For the defender's problem, we can replicate
the results we obtain for Plurality over Voters,
by using similar techniques.

\begin{theorem}\label{thm:PD-Rec-hardness}
{\sc $\PD$-Rec} is {\em NP}-complete even when
\begin{itemize}
  \item[(i)] $m=3$, or
  \item[(ii)] the input vote profile and district weights are given in unary.
\end{itemize}
\end{theorem}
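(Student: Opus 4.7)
NP membership for $\PD$-Rec is immediate: the recount set $R$ is a polynomial-size certificate, and evaluating the $\PD$ winner after the recount is efficient. Both hardness parts parallel the corresponding reductions in Theorem~\ref{thm:PV-Rec-hardness}, but now the arithmetic is carried by the district weights $w_i$ (which determine each district's ``prize'') rather than by the per-candidate vote counts.

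For part (i), the plan is to replay the Subset Sum reduction from Theorem~\ref{thm:PV-Rec-hardness}(i) with $C = \{a, b, p\}$, tie-breaking $a \succ b \succ p$, and one swap district per element of $X$: a district of weight $2x$ going from $b$ (original) to $p$ (distorted) if $x > 0$, and of weight $2|x|$ going from $p$ to $b$ if $x < 0$. Three undistorted anchor districts set the original weights to $(y+1,y,y)$ for $a, b, p$ for a sufficiently large $y$; the manipulated weights become $(y+1,\, y - 2\sum_X x,\, y + 2\sum_X x)$. Setting $B_\calD = \ell - 1$, a short calculation shows that the defender can restore $a$ precisely when the districts \emph{not} recounted form a nonempty subset of $X$ summing to zero, matching the Subset Sum instance.

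For part (ii), the $\PV$-Rec X3C template does not transfer cleanly, because a single $\PD$ recount shifts weight between only two candidates (the pre- and post-distortion winners of that district), whereas the X3C gadget of Theorem~\ref{thm:PV-Rec-hardness}(ii) exploits one recount to simultaneously change several $j_e$'s scores. I plan to reduce from 3-Partition instead, which is NP-hard in unary. Given items $a_1, \dots, a_{3n}$ in $(B/4, B/2)$ summing to $nB$, take candidates $\{a, b, q_1, \dots, q_n\}$ and, for every $(i,j) \in [3n] \times [n]$, a distorted district $D_{i,j}$ of weight $a_i$ going from $b$ to $q_j$. With undistorted anchors giving $a$ weight $nB$ and each $q_j$ baseline weight $B$, budget $B_\calD = 3n$, and ties broken in favor of $a$, a winning recount $R$ is forced by the constraints $\SW(b) \le \SW(a)$ and $\SW(q_j) \le \SW(a)$ to have total weight exactly $nB$ and to contribute exactly $B$ to each $q_j$-block; the item-range restriction then forces each block to consist of exactly three items, giving a 3-Partition solution.

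The main obstacle is nailing down the reverse direction: a winning $R$ might in principle reuse an item across several $q_j$-blocks, since $D_{i, j_1}$ and $D_{i, j_2}$ correspond to the same item $i$. I would bridge this by an exchange argument built on the counting identities $\sum_i a_i (y_i - 1) = 0$ and $\sum_i y_i = 3n$ (with $y_i$ the recount multiplicity of item $i$), iteratively swapping a reused item for an unused one of the same size, with the narrow window $(B/4, B/2)$ bounding the remaining cases. Should this step prove fragile on pathological inputs, a cleaner alternative is to reduce from NUMERICAL MATCHING WITH TARGET SUMS (whose three-fold typing of items eliminates the reuse ambiguity by construction), or to append per-item ``uniqueness'' gadgets---single-weight districts routing to dedicated dummy candidates---that tighten the budget enough to rule out item reuse.
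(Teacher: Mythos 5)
Your part~(i) is essentially the paper's proof: the authors also observe that in the Subset Sum reduction of Theorem~\ref{thm:PV-Rec-hardness}(i) every district is unanimous, so setting $w_i = n_i$ transfers the argument verbatim to $\PD$. That part is fine.

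Part~(ii) has a genuine gap, and it is exactly the one you flagged but did not close: item reuse across blocks. In your construction the districts $D_{i,j_1}$ and $D_{i,j_2}$ are distinct districts carrying the same value $a_i$, and nothing stops the defender from recounting both. Your counting constraints do force the recount to consist of $n$ triples of districts, one triple per $q_j$, each triple's item-values summing to $B$ --- but these triples need not use disjoint item indices. Concretely, if the 3-Partition instance contains even one triple $\{a_{i_1},a_{i_2},a_{i_3}\}$ with $a_{i_1}+a_{i_2}+a_{i_3}=B$, the defender can recount $D_{i_1,j},D_{i_2,j},D_{i_3,j}$ for every $j\in[n]$: this uses exactly $3n$ districts, removes exactly $B$ from each $q_j$, and hands $b$ exactly $nB$, so it is a winning recount even when the instance is a no-instance of 3-Partition. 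The reduction is therefore unsound as stated, not merely incompletely argued. The proposed exchange argument cannot repair it: the identity $\sum_i a_i(y_i-1)=0$ only equates aggregate weights of over-used and unused items, and in the scenario above there is no unused item of the same size to swap in without breaking a block sum. The alternative fixes (Numerical Matching, uniqueness gadgets) would each need a concrete mechanism that forces every item index to appear in at most one block while the budget is already saturated at $3n$; none is supplied, and the obvious digit-encoding tricks are unavailable under the unary restriction.

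The paper sidesteps this entirely by reducing from {\sc Independent Set} rather than a numerical partition problem: candidates $j_u$ and $j_e$ for vertices and edges, edge districts $D_{e,x},D_{e,y}$ whose recount is forced whenever $D_u$ is recounted, and weights arranged so that recounting both endpoints of an edge over-inflates $j_e$. There the ``no reuse'' structure is supplied by the graph (an edge has only two endpoints), so no exchange argument is needed. If you want to keep a number-theoretic source problem you must first build a gadget that certifies disjointness of the blocks; otherwise I would follow the paper and switch the source problem.
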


\begin{proof}
This problem is clearly in NP. We give separate hardness proofs
for the case $m=3$ (part (i))
and for the case where the input is given in unary (part (ii)).

\paragraph{Part~(i).}
We use the same reduction as in the proof of the first part of Theorem~\ref{thm:PV-Rec-hardness}.
An important feature of this reduction is that all voters in each district vote for the same candidate.
Thus, if we set the weight of each district to be equal to the number of voters therein,
the proof goes through without change.

\paragraph{Part~(ii).}
We provide a reduction from {\sc Independent Set}; see Definition~\ref{def:IS}.
Given an instance $\langle G, \ell\rangle$ of {\sc Independent Set}, where $G=(V, E)$,
we construct an instance of {\sc $\PD$-Rec} as follows.
Let $\nu=|V|$, $\mu=|E|$; we can assume without loss of generality that $\mu\ge 1$.
We set $C = \{j_u : u \in V \} \cup \{j_e : e \in E\} \cup \{a, p\}$, where $p$ is the attacker's
preferred candidate; thus, $|C| = \nu + \mu + 2$.
We create the following districts. For our argument, the district sizes and the values of $\gamma_i$
do not matter; for concretness, we assume that each district
consists of a single voter, whose vote can be changed by the manipulator.
\begin{itemize}
\item
For each edge $e=\{x, y\} \in E$,
there are two districts $D_{e,x}$ and $D_{e,y}$ with weight $2$ each. In each such district $D_{e, u}$
the winner before manipulation is $j_e$, and the winner after manipulation is $j_u$.
\item
For each node $u \in V$, there is a district $D_u$ with weight $2\mu$;
in this district
the winner before manipulation is $j_u$, and the winner after manipulation is $p$.
\item
There is a set $I$ of $2(\nu+\mu)+1$ districts with weight $\frac{2}{2(\nu+\mu)+1}$
each\footnote{For
convenience, we use fractional weights. We can turn all weight into integers,
by multiplying them by $2(\nu+\mu)+1$.};
in each such district the winner before manipulation is $a$,
and the winner after manipulation is $p$.
\item
There is a district of weight $2(\nu-\ell)\mu + 3$ with winner $a$;
this district is not manipulated.
\item
For each $e\in E$, there is a district of weight
$2(\nu-\ell)\mu$ with winner $j_e$;
this district is not manipulated.
\item
For each $u\in V$, there is a district of weight $2(\nu-\ell)\mu -2\mu+2$
with winner $j_u$; this district is not manipulated.
\end{itemize}
The budget of the defender is $B_\calD = \nu + \mu$.
The candidates' weights before and after manipulation are given in the following table:
\begin{center}
\begin{tabular}{c l l}
\noalign{\hrule height 1pt}
	  & true weight & distorted weight \\
\noalign{\hrule height 1pt}
$a$ & 		$2(\nu-\ell)\mu + 5$ 	& $2(\nu-\ell)\mu+3$ \\
$p$ 		& $0$ 			& $2\nu\mu+2$ \\
$j_e$, $e\in E$ & $2(\nu-\ell)\mu+4$  	& $2(\nu-\ell)\mu$ \\
$j_u$, $u\in V$ & $2(\nu-\ell)\mu+2$	& $\leq 2(\nu-\ell)\mu+2$ \\
\noalign{\hrule height 1pt}
\end{tabular}
\end{center}
Hence, the true winner is candidate $a$ and the winner after manipulation is $p$.

If $V'\subseteq V$ is an independent set of size $\ell$ in $G$, the defender
can proceed as follows. For each $u\in V'$, she demands a recount in $D_u$
and in each district $D_{e, u}$ such that $e$ is incident to $u$. Since $V'$
forms an independent set, this requires recounting at most $\nu+\mu$ districts.
Moreover, after the recount the weight of $p$ is $2(\nu-\ell)\mu+2$,
the weight of $a$ is $2(\nu-\ell)\mu+3$,
the weight of each candidate $j_u$ such that $u\in V'$ is $2(\nu-\ell)\mu+2$,
the weight of each candidate $j_u$ such that $u\in V\setminus V'$ is at most $2(\nu-\ell)\mu+2$,
and the weight of each candidate $j_e$ such that $e\in E$ is at most $2(\nu-\ell)\mu+2$.
Thus, this recounting strategy successfully restores $a$ as the election winner.

Conversely, suppose that the defender has a recounting strategy $R$ that results in making $a$
the election winner. Since $|R|\le B_\calD$, at most $\nu+\mu$ districts in $I$
can be recounted, so $a$'s weight after the recount is at most
$2(\nu-\ell)+3+\frac{2(\nu+\mu)}{2(\nu+\mu)+1} < 2(\nu-\ell)+4$.
Now, if $R$ contains at most $\ell-1$ districts in $\{D_u : u\in V\}$,
then $p$'s weight after the recount is at least $2(\nu-\ell+1)\mu+2 \ge 2(\nu-\ell)+4$,
a contradiction with $a$ becoming the winner after the recount.
Hence, $R$ contains at least $\ell$ districts in $\{D_u : u\in V\}$;
let $V'$ be the subset of nodes corresponding to these districts.
We claim that $V'$ forms an independent set in $G$.

Indeed, consider a node $u\in V'$. If the defender does not recount some district
$D_{e, u}$ such that $u$ is incident to $e$ then after the recount the weight
of $j_u$ is at least $2(\nu-\ell)\mu+4$, a contradiction with $a$ becoming the winner
after the recount. Thus $D_{e, u}$ is necessarily recounted. Now, suppose
that $e = \{x, y\}\in E$ for some $x, y\in V$. We have just argued that
both $D_{e, x}$ and $D_{e, y}$ have to be recounted. But this means that the score of
$j_e$ is at least $2(\nu-\ell)\mu+4$ after the recount, a contradiction again.
Thus, $V'$ is an independent set.
\end{proof}

\begin{theorem}\label{thm:PD-Rec-poly}
{\sc $\PD$-Rec} can be solved in time $O(k \cdot B_\calD \cdot (n+1)^m)$.
\end{theorem}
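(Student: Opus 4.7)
The plan is to adapt the dynamic programming approach of Theorem~\ref{thm:PV-Rec-poly}. The key structural observation is that, under PD, each district $i$ has a single winner in both the true profile $\vv_i$ (namely $a_i^* \in \arg\max_{c\in C} v_{ic}$, with ties broken by $\succ$) and the distorted profile $\tildevv_i$ (namely $\tilde a_i \in \arg\max_{c\in C} \tilde v_{ic}$), and the entire weight $w_i$ is awarded to that winner. Hence the defender's decision for each district is binary, and recounting $D_i$ simply transfers the weight $w_i$ from $\tilde a_i$ back to $a_i^*$. In contrast to PV, no intra-district vote re-accounting is needed; the whole reallocation is captured by a single indicator per district.

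I would then define a table $T(i, \ell, \ww)$ for $i \in \{0, 1, \ldots, k\}$, $\ell \in \{0, 1, \ldots, B_\calD\}$, and candidate-indexed weight vectors $\ww = (w_c)_{c \in C}$, where $T(i, \ell, \ww) = \true$ iff there exists $R \subseteq [i]$ with $|R| \le \ell$ such that recounting exactly the districts in $R$ yields accumulated weight $w_c$ for each candidate $c$ over the first $i$ districts. The recurrence considers the two possibilities for district $i$: either it is not recounted (adding $w_i$ to $\tilde a_i$) or it is recounted at a cost of one unit of budget (adding $w_i$ to $a_i^*$):
\[
T(i, \ell, \ww) \;=\; T\bigl(i-1,\,\ell,\,\ww - w_i \mathbf{e}_{\tilde a_i}\bigr) \;\vee\; T\bigl(i-1,\,\ell-1,\,\ww - w_i \mathbf{e}_{a_i^*}\bigr),
\]
where $\mathbf{e}_c$ is the unit vector in direction $c$, entries are declared $\false$ whenever any coordinate becomes negative, and the second disjunct is dropped if $\ell=0$; the base case is $T(0,0,\mathbf{0}) = \true$ and $T(0,\ell,\ww)=\false$ otherwise. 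The defender succeeds in electing $a$ iff there exists $\ww$ with $T(k, B_\calD, \ww) = \true$ for which $a$ beats every other candidate under PD given weights $\ww$ and the priority order $\succ$; this is a simple scan over the top-level entries.

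The main point I expect to need to argue carefully is bounding the number of reachable states at each level $(i,\ell)$ by $O((n+1)^m)$, exactly as in the analysis for PV: rather than enumerating all formally admissible $\ww$, one stores only those that are realizable as sums of district-winner contributions along some recount pattern, which inductively matches the bookkeeping of Theorem~\ref{thm:PV-Rec-poly}. With $O(k \cdot B_\calD)$ choices of $(i,\ell)$ and $O(1)$ work per transition, the total running time is $O(k \cdot B_\calD \cdot (n+1)^m)$ as claimed. Correctness follows by induction on $i$ directly from the recurrence, so the only genuinely new ingredient over Theorem~\ref{thm:PV-Rec-poly} is verifying that the per-district binary collapse in PD preserves the state-space bound.
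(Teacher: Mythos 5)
Your proposal is exactly the paper's intended argument: the paper's proof of this theorem is a single sentence stating that the algorithm is a simple adaptation of the dynamic program from Theorem~\ref{thm:PV-Rec-poly}, and your per-district binary recurrence (weight $w_i$ credited to $a_i^*$ if district $i$ is recounted and to $\tilde a_i$ otherwise) is precisely that adaptation. The one caveat --- which affects the theorem statement itself as much as your write-up --- is that for general binary weights the coordinates of $\ww$ range over $\{0,\dots,W\}$ with $W=\sum_{i\in[k]} w_i$ rather than $\{0,\dots,n\}$, so the bound is properly $O(k\cdot B_\calD\cdot (W+1)^m)$; this remains polynomial for constant $m$ when weights are given in unary (or when $w_i\le n_i$), which is the regime in which the result is meant to be applied.
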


\begin{proof}
The algorithm is a simple adaptation of the dynamic program
presented in the proof of Theorem~\ref{thm:PV-Rec-poly}.
\end{proof}

We also obtain a positive result that does not have an analogue in the PV setting; if all districts
have the same weight, the recounting problem can be solved efficiently.

\begin{theorem}\label{thm:PD-Rec-unweighted}
{\sc $\PD$-Rec} can be solved in polynomial time if $w_i=1$ for all $i \in [k]$.
\end{theorem}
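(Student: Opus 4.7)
My plan is to reduce the problem to polynomially many min-cost flow instances. Since $w_i=1$ for all $i$, a candidate's PD-score is simply the number of districts they win. Writing $a$ for the target candidate and $\alpha_i,\tilde\alpha_i$ for the winners of district $i$ under $\vv$ and $\tildevv$ respectively, recounting district $i$ switches its winner from $\tilde\alpha_i$ to $\alpha_i$; only the pair $(\tilde\alpha_i,\alpha_i)$ matters, so I would group districts by type, letting $t_{cd}$ count districts with $(\tilde\alpha_i,\alpha_i)=(c,d)$ and denoting by $\tilde n_c$ the number of districts $c$ wins under $\tildevv$. A recount plan is then encoded by variables $x_{cd}\in\{0,\dots,t_{cd}\}$ with $\sum_{c,d} x_{cd}\le B_\calD$.

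A first useful observation is that we may fix $x_{a,c}=0$ for every $c\neq a$: recounting a type-$(a,c)$ district only lowers $a$'s score and can never help. I would then enumerate the intended final score $s\in\{\tilde n_a,\tilde n_a+1,\dots,k\}$ of $a$ and set thresholds $T_c := s - \mathbbm{1}[a\not\succ c]$ for each $c\neq a$, so that $a$ wins under the tie-breaking order iff each $c\neq a$ finishes with score at most $T_c$.

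For each fixed $s$, I would formulate feasibility as a min-cost flow problem on the candidate graph: the nodes are the candidates, and for every ordered pair $(c,d)$ with $c\neq a$ there is a unit-cost edge from $c$ to $d$ of capacity $t_{cd}$; node $a$ must have net inflow exactly $s-\tilde n_a$, and each $c\neq a$ must have net outflow at least $\max(0,\tilde n_c-T_c)$. A valid recount strategy within budget $B_\calD$ exists iff this network admits an integer feasible flow of cost at most $B_\calD$. The main obstacle is encoding the node-level lower bounds; I would handle this with the standard super-source/super-sink transformation, after which the instance becomes an ordinary min-cost flow with integer data and hence admits an integer optimum computable in polynomial time. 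Iterating over the $O(k)$ values of $s$ and returning yes as soon as feasibility holds gives the desired polynomial-time algorithm.
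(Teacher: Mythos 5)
Your flow-based route is genuinely different from the paper's proof, which simply maps each district to a single voter and invokes the known polynomial-time algorithm for nonuniform bribery under Plurality (price $0$ for keeping the distorted winner, $1$ for restoring the true winner, $+\infty$ for anything else). Your ingredients are mostly sound: grouping districts by the pair $(\tilde\alpha_i,\alpha_i)$, discarding type-$(a,c)$ recounts, enumerating $a$'s final score $s$, and identifying recount plans with integer flows on the candidate graph (so that each candidate's score change equals its net inflow) all work, and the thresholds $T_c$ correctly encode the tie-breaking rule.

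There is, however, a concrete error in the node constraints. Requiring each $c\neq a$ to have net outflow at least $\max(0,\tilde n_c-T_c)$ forbids any candidate with $\tilde n_c\le T_c$ from ending up with positive net inflow, but such inflow can be unavoidable: restoring a district's true winner may necessarily hand that district to a third candidate. Take $C=\{a,b,e\}$ with $a\succ b\succ e$, two unmanipulated districts won by $a$, three districts of type $(b,e)$ (distorted winner $b$, true winner $e$), and $B_\calD=1$. Recounting one type-$(b,e)$ district gives scores $2,2,1$ and $a$ wins by tie-breaking, so this is a yes-instance; but in your network the only usable edge is $(b,e)$, any feasible choice of $s$ forces $s=2$ and at least one unit of flow into $e$, while your constraint at $e$ demands net outflow $\ge\max(0,0-2)=0$. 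Your algorithm would therefore answer no. The fix is small: the correct requirement is that the net outflow of $c$ be at least $\tilde n_c-T_c$, a possibly negative quantity, i.e., the net inflow of $c$ may be as large as $T_c-\tilde n_c$. With that correction the divergence constraints are still network constraints (handled by your super-source/super-sink transformation), total unimodularity yields an integer optimum, and the rest of the argument goes through.
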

\begin{proof}
We reduce our problem to nonuniform bribery~\citep{F08}.
An instance of nonuniform bribery under the Plurality rule is given by a set of voters
and a set of candidates; for each voter $i$ and each candidate $c$ there
is a price $\pi_{ic}$ for making voter $i$ vote for $c$,
and the briber's goal is to make her preferred candidate the Plurality winner\footnote{\citet{F08}
assumes that ties are broken in favor of the briber,
but his results extend to lexicographic tie-breaking.}
while staying within a budget $B$. This problem is known to be in P~\citep{F08}.
To reduce {\sc $\PD$-Rec} to nonuniform bribery,
we map each district $D_i$ to a single voter $i$;
if the true winner in $D_i$ is $x$,
and in the distorted profile the winner in $D_i$ is $y$, we set $\pi_{iy}=0$,
$\pi_{iz}=+\infty$ for $z\in C\setminus\{x, y\}$, and if $x\neq y$
(i.e., if the attacker has changed the outcome in $D_i$), we set $\pi_{ix}=1$.
Then for any candidate $c\in C$ it holds that in {\sc $\PD$-Rec}
the defender can make $c$ win by recounting at most $B_\calD$ districts
if and only if in our instance of nonuniform bribery the briber can make $c$ win
by spending at most $B_\calD$.
\end{proof}

We now consider the attacker's problem. It turns out that for the $\PD$ rule
we can obtain a stronger hardness result than for $\PV$:
we will now argue that when weights and vote counts are given in binary,
{\sc $\PD$-Man} is $\Sigma_2^P$-complete even for $m=3$.
Our reduction uses a variant of the {\sc Subset Sum}
problem, which we term {\sc Sub-Subset Sum} (SSS); this problem may be of independent interest.

\begin{definition}[{\sc Sub-Subset Sum}]\label{def:SSS}
An instance of {\sc Sub-Subset Sum} is a set $X\subseteq {\mathbb Z}$ and a positive integer $\ell$.
It is a yes-instance if there is a subset $X' \subseteq X$ with $|X'|=\ell$
such that $\sum_{x \in X''} x \neq 0$
for every non-empty subset $X'' \subseteq X'$, and a no-instance otherwise.
\end{definition}

\noindent
Our proof proceeds by establishing that SSS is $\Sigma_2^P$-complete (Lemma~\ref{lem:SSS-hardness};
the proof can be found in the appendix),
and then reducing this problem to {\sc $\PD$-Man}.

\begin{lemma}\label{lem:SSS-hardness}
SSS is $\Sigma_2^P$-complete.
\end{lemma}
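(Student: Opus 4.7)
The proof splits cleanly into membership and hardness.

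\emph{Membership in $\Sigma_2^P$.} The problem has the syntactic form ``$\exists X'\subseteq X$ with $|X'|=\ell$ such that $\forall$ non-empty $X''\subseteq X'$, $\sum_{x\in X''} x\neq 0$''. A $\Sigma_2^P$-machine guesses the polynomial-size certificate $X'$ and then invokes a coNP oracle to verify the universal condition, whose complement is a vanilla {\sc Subset Sum} instance on $X'$ and hence in NP.

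\emph{Hardness.} I would reduce from the canonical $\Sigma_2^P$-complete problem $\exists\forall$-3SAT, phrased as follows: given a 3-CNF formula $\phi(\vec u,\vec v)$, decide whether there is an assignment $\alpha$ to $\vec u$ such that $\phi(\alpha,\cdot)$ is unsatisfiable. This phrasing is convenient because the inner unsatisfiability check can be encoded via a Karp-style {\sc Subset Sum} gadget. Reserve a distinct digit column in a large base $B$ for every variable in $\vec u\cup\vec v$ and every clause of $\phi$. Build two ``literal'' integers per variable, a pair of slack integers per clause, and one distinguished integer $-T$ whose digits equal the canonical target pattern ($1$ in each variable column, $3$ in each clause column). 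Since all other integers are non-negative, every non-empty zero-sum sub-subset must contain $-T$, and the remainder must sum to $T$, which happens exactly when the selected literals encode a satisfying assignment to $\phi$. Now let $X$ consist of all the above, and choose $\ell$ so that every feasible $X'$ contains (i) exactly one $\vec u$-literal per $\vec u$-variable, (ii) both $\vec v$-literals per $\vec v$-variable, (iii) every slack integer, and (iv) $-T$. Then $X'$ unambiguously determines an assignment $\alpha$ to $\vec u$, and a non-empty zero-sum sub-subset of $X'$ exists iff $\phi(\alpha,\cdot)$ is satisfiable. The SSS instance is thus a yes-instance iff $\exists\alpha\,\forall\beta\,\neg\phi(\alpha,\beta)$.

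\emph{Main obstacle.} The delicate step is enforcing the structural shape of $X'$ using only the cardinality constraint $|X'|=\ell$. A size budget alone does not rule out pathological choices, such as including both $\vec u$-literals for one variable while omitting both for another; such an $X'$ would fail to encode a legal $\vec u$-assignment and could spuriously have no zero-sum sub-subset even when $\phi(\alpha,\cdot)$ is satisfiable for every $\alpha$. The standard remedy is to attach small ``consistency gadgets'' — auxiliary integers that form a trivial zero-sum pair with any literal whose partner has been dropped — so that every $X'$ not placing exactly one literal per $\vec u$-variable immediately admits a zero-sum sub-subset and is therefore excluded. Together with base-$B$ column separation (to prevent carries and to rule out unintended cancellations among slacks, among two $\vec u$-literals, etc.) and appropriate padding to hit the target size, these gadgets should complete the reduction, yielding a digit-wise correspondence between zero-sum sub-subsets of $X'$ and satisfying assignments to $\phi(\alpha,\cdot)$.
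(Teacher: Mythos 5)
Your route is genuinely different from the paper's. The paper does not reduce from $\exists\forall$-3SAT directly: it reduces from {\sc bilevel Subset Sum} (a known $\Sigma_2^P$-complete problem) to an intermediate problem {\sc SSS$^+$} (same structure but with a positive target $f$ instead of $0$ and all integers positive), and then from {\sc SSS$^+$} to SSS. Starting from a problem that already has the leader--follower numeric structure lets the paper skip the clause gadgetry entirely; the only work is converting ``leader picks any subset of $A$, follower picks from a disjoint set $B$'' into ``leader picks exactly $\ell$ elements, follower picks a sub-subset of the leader's choice,'' which is handled with digit sections and carefully counted multiplicities of a few padding integers. Your membership argument coincides with the paper's and is routine either way.

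The gap in your hardness plan is the consistency gadget, and it is not a detail that can be deferred. As literally described --- ``auxiliary integers that form a trivial zero-sum pair with any literal whose partner has been dropped'' --- the gadget fires on the \emph{intended} $X'$: in a legal encoding of an assignment $\alpha$, every retained $\vec u$-literal is exactly one whose partner has been dropped, so the intended $X'$ would always contain a zero-sum pair and the forward direction of the reduction collapses. What you actually need is a gadget that fires when \emph{both} literals of a $\vec u$-variable are present (e.g.\ an integer equal to $-(T_i+F_i)$, replicated more than $|X|-\ell$ times so that pigeonhole forces a copy into every $X'$), combined with a counting argument over the exactly $|X|-\ell$ omitted elements showing that \emph{every} deviation from the intended structure (omitting $-T$, omitting a slack or a $\vec v$-literal, omitting both literals of some variable, omitting gadget copies) forces some other variable to retain both literals and hence triggers a gadget. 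You must then also check that the multiplicities introduced for forcing do not create spurious cancellations --- for instance, if slack integers are replicated to force them into $X'$, several copies of one slack can fake a clause column's target digit with no literal satisfying the clause, which breaks the Karp encoding --- and that ``copies'' can be made formally distinct (SSS is stated over a set) without destroying the exact cancellations the gadgets rely on. None of this is unfixable, but it constitutes the entire substance of the hardness proof, and the proposal leaves it unresolved.
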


\begin{theorem}
\label{thm:PD-Man-hardness}
{\sc $\PD$-Man} is $\Sigma_2^P$-complete, even when $m=3$.
\end{theorem}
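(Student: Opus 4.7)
The plan divides into $\Sigma_2^P$-membership and $\Sigma_2^P$-hardness. Membership is routine: an attacker strategy $(M, \tildevv)$ is a polynomial-sized certificate, and verifying that every defender response $R \subseteq M$ with $|R| \leq B_\calD$ leaves $p$ as the $\PD$-winner is a coNP check (guess a ``bad'' $R$ and confirm in polynomial time that $p$ does not win). So {\sc $\PD$-Man} lies in $\Sigma_2^P$. For hardness, I would reduce from SSS (which is $\Sigma_2^P$-complete by Lemma~\ref{lem:SSS-hardness}). Given an instance $(X, \ell)$ with $0 \notin X$ WLOG, I would build a three-candidate $\PD$-Man instance over $C = \{a, b, p\}$ aiming to realise the correspondence: attacker's $(M, \tildevv) \leftrightarrow$ a subset $X' \subseteq X$ with $|X'| = \ell$; defender's $R \leftrightarrow$ a non-empty ``leftover'' $X'' \subseteq X'$ that remains effectively manipulated; and recounted winner is $p$ iff $\sum_{x \in X''} x \neq 0$. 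Under this correspondence, the attacker wins iff some $X'$ has no non-empty zero-sum sub-subset, which is exactly SSS.

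For the construction, I would introduce per-element gadgets whose weights encode $x$, together with auxiliary non-manipulated districts that calibrate baseline scores. A natural first attempt pairs two ``twin'' districts per $x$: $D_x^+$ of weight $K + x$ with true winner $a$, and $D_x^-$ of weight $K - x$ with true winner $b$, where $K$ is large enough to keep all weights positive. The attacker manipulates both twins of each $x \in X'$ and distorts each winner to $p$, using $B_\calA = 2\ell$; the defender has $B_\calD = 2(\ell-1)$, enough to recount all but one full pair. Under symmetric full-pair recounts, the signed structure makes $\SW(p) - \SW(a)$ affine in $+\sum_{X''} x$ and $\SW(p) - \SW(b)$ affine in $-\sum_{X''} x$, so the two opponents $a$ and $b$ jointly ``catch'' the two directions of non-zero sums: tie-breaking and auxiliary baselines are then tuned so that $p$ wins iff $\sum_{X''} x \neq 0$, with $\sum_{X''} x = 0$ tipping to a non-$p$ winner.

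The main obstacle, and the technical core of the proof, is ruling out non-canonical strategies and reconciling the $\PD$ structure with the non-convex condition ``$\sum \neq 0$''. Specifically: (a) the defender may recount \emph{asymmetrically}, recounting one twin without its partner, producing score shifts not captured by the ``$\sum_{X''} x$'' encoding --- gadget weights and $B_\calD$ must be chosen so that asymmetric recounts are weakly dominated by symmetric ones (or the encoding extended so it remains robust to them); (b) the attacker may distort a manipulated district to a candidate other than $p$, which must be shown weakly suboptimal; and (c) the $|X''|$-dependent terms in $\SW(p)$, $\SW(a)$, $\SW(b)$ must interact with the chosen baselines so that the ``$p$ wins iff $\sum \neq 0$'' condition is uniform across $|X''|$. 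Since ``$\sum \neq 0$'' is non-convex while the $\PD$ region ``$p$ beats both opponents'' is an intersection of two linear half-spaces (hence convex), the two half-spaces have to be configured --- via the signed $\pm x$ twin structure and carefully chosen tie-breaking order --- so that their intersection is precisely the set of defender responses that induce a non-zero sum. Verifying all of this requires a careful choice of $K$, auxiliary weights, and tie-breaking order, followed by a case analysis over all (possibly asymmetric) defender responses to confirm the intended reduction.
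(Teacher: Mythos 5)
Your membership argument and your choice of reduction source (SSS, via Lemma~\ref{lem:SSS-hardness}) match the paper, as does the intended correspondence: attacker's choice $\leftrightarrow X'$ with $|X'|=\ell$, defender's leftover $\leftrightarrow$ non-empty $X''\subseteq X'$, and ``$a$ and $b$ jointly catch the two signs of $\sum_{x\in X''}x$.'' But the concrete gadget you sketch does not realize this correspondence, and the three ``obstacles'' you list are not loose ends to be tidied up --- at least two of them are fatal to the twin-district design. With $D_x^+$ of weight $K+x$ (true winner $a$) and $D_x^-$ of weight $K-x$ (true winner $b$), both distorted to $p$, a leftover set $X''$ gives $p$ an extra $2K|X''|$ while $a$ and $b$ each lose only about $K|X''|$. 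Since $K$ must dominate $\max|x|$ to keep weights positive, $p$'s margin over both opponents grows linearly in $|X''|$ at rate $\approx 3K$, which swamps the $O(\sum_{x\in X''}x)$ term you want the outcome to hinge on; no choice of baselines makes ``$p$ wins iff $\sum_{X''}x\neq 0$'' uniform over $|X''|\in\{1,\dots,\ell\}$. Your obstacle (a) compounds this: with $B_\calD=2(\ell-1)$ the defender can leave two \emph{unpaired} districts, e.g.\ $D_{x_1}^+$ and $D_{x_2}^-$, whose effect is not expressible as $\sum_{x\in X''}x$ for any $X''$ at all. There is also a calibration issue you do not flag: the defender's objective is welfare maximization, so the attacker is only defeated by candidates with higher $\SW$ than $p$; in your design $b$ truly wins $\ell$ districts of weight $\approx K$, so $\SW(b)$ is large and a $b$-victory would defeat the attacker, breaking the intended ``two half-spaces'' logic.

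The paper's construction is engineered precisely to avoid all of this, and the differences are the substance of the proof. It uses \emph{one} district per element, of weight exactly $3|x|$, with true winner $b$ if $x>0$ and $p$ if $x<0$; the manipulation swaps $b$ and $p$ in that district. Consequently $a$'s weight is never touched, and the leftover $X''$ shifts weight \emph{only} between $b$ and $p$, by exactly $\pm\sum_{x\in X''}3x$, with no dependence on $|X''|$ and no offset $K$ to cancel. Candidate $a$ is parked at $2y+5$, between the post-attack baselines $2y+3$ (for $b$) and $2y+4$ (for $p$), so $a$ survives iff $\sum_{x\in X''}x=0$; and $\SW(b)=2y+3<2y+4=\SW(p)$ is arranged deliberately so that only restoring $a$ counts as defeating the attacker. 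Finally, instead of your budget arithmetic $B_\calA=2\ell$, $B_\calD=2(\ell-1)$, the paper forces $X''\neq\varnothing$ with a single high-weight ``trigger'' district ($I_2$, weight $y+3$) that the attacker also flips to $p$ and that the defender \emph{must} recount, burning one unit of her budget $B_\calD=\ell$ against $B_\calA=\ell+1$. Your obstacle (b) (non-$p$ distortions in the converse direction) is handled in the paper by a short case split, but without a working gadget the case analysis you defer is not a routine verification --- it is where your construction breaks.
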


\begin{proof}
Clearly, {\sc $\PD$-Man} is in $\Sigma_2^P$. To prove hardness, we reduce from SSS.
Given an instance $\langle X,\ell \rangle$ of SSS, we construct an instance of {\sc $\PD$-Man} with three candidates $\{a,b,p\}$.
Let $X^+ = \{x \in X: x > 0\}$ and $X^- = X \setminus X^+$.
Set $y = \sum_{x\in X} 3|x|$.
In what follows we describe the votes in each district $D_i$ as a list $(v_{ia},v_{ib},v_{ip})$.
The districts are partitioned into three sets $I_1$, $I_2$ and $I_3$:
\begin{itemize}
\item $I_1$ has a district with votes $(0, 3x, 0)$ for each $x \in X^+$,
and a district with votes $(0, 0, -3x)$ for each $x \in X^-$.

\item $I_2$ consists of a single district with votes $(0, y+3, 0)$.

\item $I_3$ consists of three districts with votes $(2y+5,0,0)$,
$(0,y-\sum_{x\in X^+}3x,0)$, and $(0,0, 2y + 4 + \sum_{x\in X^-}3x)$.
\end{itemize}
For every district $D_i$ we set $w_i=n_i$.
The attacker is allowed to change all votes in each district in $I_1$ and $I_2$,
but none in $I_3$. Finally, let $B_\calA = \ell + 1$ and $B_\calD = \ell$.
The true winner in this profile is candidate $a$ with weight $2y+5$, compared to the
weight $2y+3$ of $b$ and $2y+4$ of $p$.

Given a set of integers $Y\subseteq X$, let $I_1(Y)$ be the corresponding set of districts in $I_1$.
Assume that there is a subset $X' \subseteq X$ with $|X'|=\ell$ such that
no $X'' \subseteq X'$ has sum equal to $0$. The attacker can then exchange the weights of $b$ and
$p$ in the districts in $I_1(X')$ and the district in $I_2$.
This way, $p$ becomes the winner with weight $3y + 7 + \sum_{x \in X'} 3x  \ge 2y + 7$,
compared to the weight $2y+5$ of $a$ and the weight $y - \sum_{x \in X'} 3x \le 2y$ of $b$.

Since $\SW(p)>\SW(b)$, to defeat the attacker,
the defender needs to restore $a$ as the winner.
To this end, she must recount the district in $I_2$,
as otherwise $p$'s weight will remain
at least $2y+7$. Hence she can recount at most $\ell-1$
manipulated districts in $I_1$. Let the set of non-recounted districts in $I_1$
be $I_1(X'')$ for some $X''\subseteq X'$; note that $X''\neq\varnothing$,
so by assumption, $\sum_{x \in X''} x \neq 0$. Then, the weight of $b$ is
$2y + 3  - \sum_{x \in X''} 3x$
and the weight of $p$ is
$2y + 4 + \sum_{x \in X''} 3x$.
At least one of these numbers is greater than or equal to $2y+6$;
thus, $a$ cannot be restored as the winner.

Conversely, suppose that for every subset $X' \subseteq X$ of size $\ell$ there exists a non-empty
$X''\subseteq X'$ such that $\sum_{x\in X''} x = 0$. Then, the attacker cannot win.
Indeed, let $M$ be the set of manipulated districts.
If a district is changed in favor of $a$, the defender can recount all other districts in $M$.
On the other hand,
if all districts in $M$ are won by $b$ or $p$, the defender can identify a non-empty subset
of $M\cap I_1$
such that the corresponding integers sum up to $0$, and request a recount
of all other districts in $M$.
Such a recount recovers the correct weights of $b$ and $p$, and $a$
is restored as the winner.
\end{proof}

We conjecture that {\sc $\PD$-Man} remains $\Sigma_2^P$-complete
when the input is given in unary; however, for this setting we are
only able to prove that this problem is NP-hard.

\begin{theorem}
\label{thm:PD-Man-NPhardness1}
  {\sc $\PD$-Man} is {\em NP}-hard, even when $B_\calD =0$ and the input vote profile
       and district weights are given in unary.
\end{theorem}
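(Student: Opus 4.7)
The plan is to show membership in NP---immediate by guessing $(M,\tildevv)$ with $|M|\le B_\calA$ and verifying in polynomial time that $p$ wins---and to establish hardness by reducing from an NP-hard problem whose input is naturally encoded in unary, such as {\sc Exact Cover by 3-Sets (X3C)}. A useful preliminary observation is that when $B_\calD=0$ and $\gamma_i=n_i$, reassigning a manipulated district to any candidate other than $p$ strictly decreases $p$'s final weight without offsetting gain, so without loss of generality the attacker flips every manipulated district's winner to $p$. Hence the attacker's task reduces to choosing $M$ with $|M|\le B_\calA$ such that $p$'s initial weight plus $w(M)$ exceeds, for every other candidate $c$, $c$'s initial weight minus the weight of districts in $M$ originally won by $c$ (subject to tie-breaking).

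Given an X3C instance $\langle E,\calS\rangle$ with $|E|=3\ell$, I would introduce an element candidate $j_e$ per $e\in E$ and a set-guardian $j_S$ per $S\in\calS$. Each set $S=\{e_1,e_2,e_3\}$ is associated with a four-district gadget consisting of an anchor district (won by $j_S$, with a large weight) together with three element districts (each of unit weight, won by $j_{e_1}$, $j_{e_2}$, $j_{e_3}$ respectively), all manipulable. Locked (unmanipulable) districts calibrate the initial weight of every $j_e$ and $j_S$ to a uniform threshold, and $B_\calA$ is chosen so that the attacker has exactly enough budget to fully activate $\ell$ gadgets. The intended correspondence is that an exact cover $\calQ\subseteq\calS$ of size $\ell$ corresponds to flipping all four districts in each gadget of every $S\in\calQ$, which simultaneously (i) gives $p$ enough total weight, (ii) drops each activated $j_S$ below $p$, and (iii) reduces each $j_e$ by exactly one (since each element is covered by exactly one $S\in\calQ$), bringing $p$ just ahead of every rival.

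The principal obstacle is enforcing all-or-nothing gadget activation: because a single manipulation flips only one district's winner, the attacker could in principle use partial gadget activations or spread flips across more than $\ell$ sets. The gadget weights must be chosen so that flipping a set's element districts without also flipping its anchor leaves $j_S$ above $p$'s threshold (eliminating any reward from partial activation), while flipping too many anchors overshoots $B_\calA$ and leaves too little budget to cover $E$ at the element level. Combined with the tie-breaking rule, these tight constraints force the attacker's flipped element districts to come from exactly $\ell$ fully-activated gadgets whose elements partition $E$---that is, an exact cover. Since all weights and the number of districts are polynomial in the X3C input size, the reduction preserves unary encoding, so the NP-hardness of X3C transfers to {\sc $\PD$-Man} in this regime.
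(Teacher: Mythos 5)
Your reduction has a fatal structural problem, and it originates in the preliminary ``WLOG'' observation. That observation is indeed correct when $\gamma_i=n_i$ for all $i$ and $B_\calD=0$: flipping a manipulated district to $p$ weakly dominates flipping it to any other candidate. But this is precisely why the plan cannot succeed. Once every manipulated district is flipped to $p$, the attacker's problem becomes: choose at most $B_\calA$ districts so that $\SW(p)+w(M)\ge \SW(c)-w(M_c)$ holds for every rival $c$, where $M_c$ is the set of chosen districts originally won by $c$. For each $c$ one may assume $M_c$ consists of the $|M_c|$ heaviest districts won by $c$ (each $w(M_{c'})$ enters every constraint with non-negative sign), and by iterating over the total transferred weight $\Omega=w(M)$ --- polynomially many values under unary encoding --- computing for each $c$ the fewest districts needed to reach weight $\max\{0,\,\SW(c)-\SW(p)-\Omega\}$, and padding greedily with the heaviest unused districts, one solves this restricted problem in polynomial time. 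So the special case you are reducing {\sc X3C} to is in P, and no such reduction can establish NP-hardness; consistently, Table~\ref{fig:results} does not claim that Theorem~\ref{thm:PD-Man-NPhardness1} holds under unrestricted $\gamma_i$, and the paper explicitly leaves hardness for $\gamma_i=n_i$ open.

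The concrete gadget also fails on its own terms. With all $j_e$ and $j_S$ calibrated to a uniform threshold $T$ and $B_\calA=4\ell$, the attacker can flip $\ell+1$ anchors plus $3\ell-1$ element districts: $p$ then gains one full anchor weight beyond the intended target of $T-1$, so its final weight exceeds $T$ and it beats every $j_e$ without covering $E$ at all. You cannot block this by raising the guardians' threshold above what $\ell+1$ anchors provide, because then the unflipped $j_S$ would also beat $p$ in the intended cover solution. The paper's proof escapes this trap precisely by \emph{not} letting the attacker send any weight to $p$: it reduces from {\sc Independent Set} and uses small values of $\gamma_i$ so that each manipulable district can be flipped only to one specific non-$p$ candidate. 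The attacker lowers the true winner $a$ by flipping node districts to vertex candidates $j_u$, is then forced to lower each such $j_u$ by flipping its incident edge districts to edge candidates $j_e$, and this cascade keeps every rival at or below $p$'s (unchanged) weight exactly when the chosen vertices form an independent set. Any repair of your approach must retain the ability to flip districts to candidates other than $p$ as an essential feature, which means abandoning both the $\gamma_i=n_i$ restriction and the WLOG it licenses.
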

\begin{proof}
To show that {\sc $\PD$-Man} is NP-hard even
when the input votes and district weights are given in unary,
we provide a reduction from {\sc Independent Set}; see Definition~\ref{def:IS}.

Given an instance $\langle G, \ell\rangle$ of {\sc Independent Set}
with $G=(V, E)$,
we construct the following instance of {\sc $\PD$-Man}.
Let $\nu=|V|$, $\mu=|E|$.
We set $C = \{j_u : u \in V \} \cup \{j_e : e \in E\} \cup \{a, p\}$, where $p$ is the attacker's
preferred candidate; thus, $|C| = \nu + \mu + 2$.
Then, we create the following districts;
the weight of each district is equal to the number of voters therein.
\begin{itemize}
\item
    For every edge $e=\{x,y\} \in E$, we create two districts $D_{e,x}$ and $D_{e,y}$
    with $5$ voters each; thus, $w_{e,x} = w_{e,y} = 5$.
    In each such district $D_{e,u}$ there are two voters who vote for $j_e$ and
    three voters who vote for $j_u$.
    We set $\gamma_{e, u}=1$; thus, the attacker can change the winner in this district
    from $j_u$ to $j_e$.
\item
    For every node $u\in V$, we create a district $D_u$ with $5\mu$ voters; thus, $w_u = 5\mu$.
    In each such district there are $2\mu$ voters who vote for $j_u$ and $3\mu$ voters who vote for $a$.
    We set $\gamma_u = \mu$; thus, the attacker can change the winner in this district from $a$
    to $j_u$.
\item
    There are also some districts that cannot be manipulated (i.e., $\gamma=0$).
    We specify the weights and the winners of these districts.
\begin{itemize}
\item For each $e\in E$, there is a district with weight $5\mu(\nu-\ell) - 5$ and winner $j_e$.
\item For each $u\in V$, there is a district with weight $5\mu(\nu-\ell-1)$ and winner $j_u$.
\item Finally, there is a district with weight $5\mu(\nu-\ell)+1$ and winner $p$.
\end{itemize}
\end{itemize}
The budgets are $B_\calA = \nu + \mu$ and $B_\calD = 0$.

We have $\SW(a) = 5\mu\nu$, $\SW(p) = 5\mu(\nu-\ell)+1$, $\SW(j_e) = 5\mu(\nu-\ell)-5$ for each $e \in E$,
and $\SW(j_u) = 5\mu(\nu-\ell - 1) + 5 \, |\{e\in E: u\in e \}| \leq 5\mu(\nu-\ell)$ for each $u \in V$.
Hence, the true winner of the election is candidate $a$.
We show that the attacker can make $p$ the winner if and only if $\langle G, \ell\rangle$ is a
yes-instance of {\sc Independent Set}, i.e., there is an independent set of size $\ell$ in $G$.

Suppose first that there is an independent set $V'\subseteq V$, $|V'| = \ell$, in $G$.
The following manipulation strategy makes $p$ the winner.
For every $u \in V'$, change the winner of district $D_u$ from $a$ to $j_u$,
and for every $e\in E$ such that $u\in e$, change the winner of district $D_{e,u}$ from $j_u$ to $j_e$.
Note that since $V'$ is an independent set, the weight of each candidate $j_e$, $e\in E$,
increases by at most $5$.
Let $\omega_c$ denote the weight of each candidate $c \in C$ after manipulation.
We have
$\omega_a = 5\mu(\nu-\ell)$,
$\omega_p = 5\mu(\nu-\ell)+1$,
$\omega_{j_e} \in\{5\mu(\nu-\ell)-5, 5\mu(\nu-\ell)\}$ for each $e \in E$, and
$\omega_{j_u} = 5\mu(\nu-\ell )$ for each $u \in V$; thus, candidate $p$ becomes the winner of the election.

Conversely, suppose that the attacker has a manipulation that
makes $p$ the election winner;
for each $c\in C$, let $\omega_c$ be the weight of candidate $c$ after this manipulation.
Since $p$ cannot be made the winner in any additional district,
we have $\omega_p = 5\mu(\nu-\ell)+1$.
Let $V'$ be the set of all nodes $u\in V$ such that the attacker changes
the winner of $D_u$ from $a$ to $j_u$.
Since $\omega_a \le \omega_p$, we have $|V'|\ge \ell$; we will now argue
that $V'$ is an independent set. Indeed, consider a node $u\in V'$.
Changing the winner in $D_u$ from $a$ to $j_u$ increases the weight of $j_u$
by $5\mu$.
As we have $\omega_{j_u} \le \omega_p$, the manipulation
needs to reduce the weight of $j_u$ by $5|\{e\in E: u \in e\}|$.
The only way to do so is to change the winner from $j_u$ to $j_e$
in all districts $D_{e,u}$ with $u\in e$, thereby increasing the weight of $j_e$ by $5$.
Now, suppose that $x, y\in V'$ and $e=\{x, y\}\in E$. Then the manipulation
increases the weight of $j_e$ by $10$,
so we have $\omega_{j_e}= 5\mu(\nu-\ell)+5>\omega_p$, a contradiction. Thus, $V'$ is an independent set.
\end{proof}

{\sc $\PD$-Man} remains NP-hard even if all districts have the same weight;
however, under this assumption this problem can be placed in NP, i.e.,
the unweighted variant of {\sc $\PD$-Man} is strictly easier than its weighted variant
unless NP$=\Sigma_2^P$ (which is believed to be highly unlikely).

\begin{theorem}
\label{thm:PD-Man-NPhardness2}
{\sc $\PD$-Man} is {\em NP}-complete when $w_i=1$ for all $i \in [k]$.
\end{theorem}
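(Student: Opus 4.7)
The plan is to prove NP-completeness in two independent parts.

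\textbf{Membership in NP.} I would use the attacker's strategy $(M, \tildevv)$ as a polynomial-size certificate: since each $\tilde v_{ia}$ lies in $\{0,\dots,n\}$, the pair $(M,\tildevv)$ can be written down in polynomial space. Given such a certificate, the defender's optimal recounting response can be computed in polynomial time by invoking Theorem~\ref{thm:PD-Rec-unweighted}, which applies because all district weights equal $1$; one then checks whether $p$ is the resulting winner. This yields the NP upper bound. Note that this is precisely where the unit-weight assumption matters: in the weighted variant the defender's problem is itself hard (Theorem~\ref{thm:PD-Rec-hardness}), which is why the attacker's problem jumps up to $\Sigma_2^P$ in Theorem~\ref{thm:PD-Man-hardness}.

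\textbf{NP-hardness.} For the lower bound, I would adapt the reduction from \textsc{Independent Set} used in the proof of Theorem~\ref{thm:PD-Man-NPhardness1}. The original construction uses districts of several different weights to encode the independent-set structure; to port it to the unit-weight setting, the natural idea is to replicate each weighted district as a group of identical unit-weight copies (a district of weight $w$ is replaced by $w$ copies that share the same voters and the same $\gamma_i$), and to scale the attacker's budget $B_\calA$ proportionally. The intended correspondence is then: an independent set $V'\subseteq V$ of size $\ell$ translates into manipulating every copy in the groups associated with the vertices in $V'$ and their incident edges, so that the resulting candidate scores mirror those computed in the weighted proof.

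\textbf{The main obstacle} is that this naive replication gives the attacker strictly more flexibility than in the weighted setting, since she may now manipulate only a proper subset of the copies of any given ``group.'' To push the reduction through, I would calibrate the padding districts so that any non-canonical strategy is strictly suboptimal: the baseline scores of the node candidates $j_u$ and edge candidates $j_e$ should be set so that even a small imbalance between flipped and unflipped copies of a group tips some $j_u$ or $j_e$ score above that of $p$, while full group flips do not. Verifying this requires a careful case analysis ruling out every partial-group manipulation, and may require amplifying the construction by a polynomial factor to force a parameter regime in which only full group manipulations can simultaneously suppress $a$'s score below $p$'s while keeping every $j_u,j_e$ at or below $p$'s score.
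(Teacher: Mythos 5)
Your NP membership argument is exactly the paper's: guess $(M,\tildevv)$ and verify it by computing the defender's optimal response in polynomial time via Theorem~\ref{thm:PD-Rec-unweighted}. That part is fine.

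The hardness part has a fatal gap. You propose to port the reduction of Theorem~\ref{thm:PD-Man-NPhardness1} to unit weights by replicating each weight-$w$ district into $w$ unit-weight copies, keeping (implicitly) $B_\calD=0$. But the paper observes, right after Theorem~\ref{thm:PD-Man-NPhardness2}, that \emph{unweighted} {\sc $\PD$-Man} with $B_\calD=0$ is solvable in polynomial time (by a reduction to nonuniform bribery, as in Theorem~\ref{thm:PD-Rec-unweighted}). So the class of instances your reduction produces lies in P, and no reduction into it can establish NP-hardness unless P${}={}$NP. The difficulty you flag --- that replication lets the attacker flip a proper subset of the copies in a group --- is not a calibration issue but the symptom of this collapse: with unit weights the attacker gains fine-grained, essentially ``fractional'' control. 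Concretely, in the replicated construction she can flip about $5\mu-5\,\mathrm{deg}(u)$ of the $5\mu$ copies of each vertex group $D_u$, keeping every $j_u$ at or below $p$'s score with no edge flips at all, and this already suppresses $a$ below $p$ whenever $\nu\ge\ell+2$ --- independently of whether $G$ has an independent set of size $\ell$. No choice of padding rescues an all-or-nothing gadget once partial flips are available.

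The paper's proof therefore uses a genuinely different construction in which the hardness comes from the defender's response rather than from the manipulation alone: it sets $B_\calD=\ell>0$, introduces candidates $a_u$, $b_u$, $a_e$, $p$ with a tie-breaking order favoring $p$, and designs the instance so that the attacker's manipulation does \emph{not} make $p$ win outright --- instead it makes the interim winners the candidates $a_u$ with $u$ in the independent set, all of whom have low social welfare, so that the defender's welfare-maximizing recount is forced to install $p$. The independent-set structure is what prevents the defender from instead restoring some high-welfare edge candidate $a_e$. If you want to fix your proof, you need a mechanism of this kind with a positive defender budget; any construction with $B_\calD=0$ and unit weights is doomed from the start.
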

\begin{proof}
To see that {\sc $\PD$-Man} is in NP when $w_i=1$ for all $i\in [k]$, it suffices to note that
{\sc $\PD$-Rec} is in P under this assumption (Theorem~\ref{thm:PD-Rec-unweighted}).
To prove that {\sc $\PD$-Man} remains NP-hard even in this case,
we again provide a reduction from {\sc Independent Set}; see Definition~\ref{def:IS}.

Given an instance $\langle G, \ell\rangle$ of {\sc Independent Set},
where $G=(V, E)$,
we construct an instance of {\sc $\PD$-Man} as follows.
Let $\nu=|V|$, $\mu=|E|$, and for each $u\in V$ let $\deg(u)$ denote the degree of vertex $u$ in $G$;
without loss of generality, we can assume that $\mu>0$ and $\deg(u)>0$ for all $u\in V$.
Let $A_V = \{a_u: u \in V\}$,  $A'_V=\{b_u : u \in V \}$, $A_E=\{a_e : e \in E \}$,
and set $C = A_V\cup A'_V\cup A_E \cup \{p\}$,
where $p$ is the attacker's preferred candidate; thus, $|C| = 2\nu + \mu + 1$.
The tie-breaking order $\succ$ is defined so that $p \succ c$ for all $c \in C \setminus \{p\}$,
and $c\succ c'$ for all $c\in A_V$, $c'\in A_E$.
We create the following districts (note that the weight of each district is $1$).
\begin{itemize}
\item
    For every edge $e=\{x,y\} \in E$, we create two districts $D_{e,x}$ and $D_{e,y}$
    with $5$ voters each.
    In each such district $D_{e,u}$ there are two voters who vote for $a_u$ and
    three voters who vote for $a_e$.
    We set $\gamma_{e, u}=1$; thus, the attacker can change the winner in this district
    from $a_e$ to $a_u$.
\item
    For every vertex $u\in V$, we create a district $D_u$ with two voters who vote
    for $a_u$ and three voters who vote for $b_u$.
    We set $\gamma_u=1$; thus, the attacker can change the winner in this district
    from $b_u$ to $a_u$.
\item
    There are also some districts that cannot be manipulated (i.e., $\gamma=0$);
    for concreteness, we assume that each such district has five voters, and they
    all vote for the same candidate:
\begin{itemize}
\item For each $e \in E$, there are $\mu-1$ districts where the winner is $a_e$;
\item For each $u \in V$, there are $\mu-\deg(u)$ districts
     where the winner is $a_u$;
\item There are $\mu$ districts where the winner is $p$.
\end{itemize}
\end{itemize}
The budgets are $B_\calA = 2\mu + \ell$ and $B_\calD = \ell$.
Thus, we have $\SW(p) = \mu$, $\SW(a_e) = \mu+1$ for each $e\in E$,
$\SW(a_u) = \mu-\deg(u) < \mu$ and $\SW(b_u) = 1$ for each $u \in V$.
Consequently, the true winner is one of the candidates in $A_E$.

We will now argue that $G$ admits an independent set of size $\ell$ if and only if
there is a winning strategy for the attacker.

Suppose first that $V' \subseteq V$ is an independent set of size $\ell$.
Consider the following strategy for the attacker,
which changes votes in exactly $B_\calA$ districts:
\begin{itemize}
\item For each $e=\{x,y\} \in E$, change the winner of $D_{e,x}$ from $a_e$ to $a_x$,
    and the winner of $D_{e,y}$ from $a_e$ to $a_y$.
\item For each $u \in V'$, change the winner of $D_u$ from $b_u$ to $a_u$.
\end{itemize}
Let $\omega_c$ denote the weight of each candidate $c \in C$ after this manipulation.
We have $\omega_p = \mu$, $\omega_{a_e} = \mu-1$ for each $e \in E$,
$\omega_{a_u} = \mu$ for each $u \in  V \setminus V'$,
$\omega_{a_u} = \mu+1$ for each $u \in V'$, and $\omega_{b_u} = 0$ each $u \in V$.
Hence, in the manipulated instance the winner is chosen from
$\{a_u: u\in V'\}$ according to the tie-breaking rule.

Even though $p$ does not win the election at this point, we will now show that
$p$ becomes the winner once the defender respond optimally to this manipulation.

First, we show that the defender can make $p$ win.
To this end, for each $u\in V'$ the defender can pick one edge $e^u$
such that $u\in e^u$ and demand a recount in district $D_{e^u, u}$; altogether,
this strategy requires recounting $\ell=B_\calD$ districts.
Since $V'$ is an independent set,
after the recount the weight of each candidate $a_e$, $e\in E$, is at most $\mu$,
and also the weight of each candidate $a_u$, $u\in V$, is at most $\mu$.
Since $\omega_p=\mu$ and $p$ is favored by the tie-breaking rule, $p$ becomes
the election winner.

We will now argue that for every candidate $a$ that can be made the election winner by recounting
at most $\ell$ districts we have $\SW(a)\le \SW(p)$; since defender breaks ties
according to $\succ$, this proves that the defender will choose a recounting
strategy that makes $p$ win.
To see this, suppose for the sake of contradiction that there is a recounting strategy
that results in a candidate $a$ with $\SW(a)>\SW(p)$ becoming the election winner.
Note that $\SW(a)>\SW(p)$ implies that $a\in A_E$ and hence $\omega_a=\mu-1$.
Let $\omega'_c$ denote the weight of each candidate $c \in C$ after the recount.
The attacker does not transfer any district to $a$, which implies that
$\omega'_a\le \SW(a)=\mu+1$.
On the other hand, since $\omega'_p=\mu$, and the tie-breaking rule
favors $p$ over all other candidates,
we have $\omega'_a\ge \mu+1$. Thus, $\omega'_a=\mu+1$. This means that
$\omega'_a-\omega_a=2$, i.e., if $a=a_e$ and $e=\{x, y\}$, both $D_{e, x}$ and $D_{e, y}$
are recounted. We will now argue that $x, y\in V'$.
Indeed, for each $u\in V'$ we have $\omega_{a_u}=\mu+1$;
on the other hand, $a_u \succ a$ and hence $\omega'_{a_u} < \omega'_{a} = \mu + 1$.
Thus, the defender must demand that for each $u\in V'$ the district $D_u$
is recounted; since $B_\calD=\ell$, the set of recounted districts is exactly $V'$,
and hence $x, y\in V'$, as claimed. But this is a contradiction, since $\{x, y\}\in E$,
and $V'$ is an independent set. This proves that if $\langle G, \ell\rangle$
is a yes-instance of {\sc Independent Set}, there is a winning strategy for the attacker.

Conversely, suppose that $G$ has no independent set of size $\ell$.
Consider an attack that changes votes in at most $B_\calA$ districts.
For each $c\in C$, let $\omega_c$ denote the weight of candidate $c$ after the attack.
Note that $\omega_p=\mu$; moreover,
any attack can only increase the weight of candidates
in $A_V$, and the weight of any such candidate after the attack is at most $\mu+1$.
Let $V' = \{u \in V : \omega_{a_u} = \mu+1\}$ and $C' = \{a_u \in C : u \in V'\}$.
We consider three cases:
\begin{itemize}
\item
$|V'| > B_\calD$. Since recounting a district only reduces the weight of one candidate,
the weight of some candidate $a_u\in C'$ will still be $\mu+1$ after the recount,
so $p$ will be beaten by $a_u$.
\item
$|V'| \le B_\calD$, $V'$ is not an independent set.
Pick an edge $e^*=\{x, y\}$ such that $x, y\in V'$, and consider the following recounting strategy.
For each $u\in V'\setminus\{x, y\}$, the defender picks one edge $e^u$
such that $u\in e^u$, and demands a recount in districts $D_{e^u, u}$
for each $u\in V'\setminus\{x, y\}$ as well as in $D_{e^*, x}$ and in $D_{e^*, y}$.
This recounting strategy requires recounting $|V'|\le B_\calD$ districts,
reduces the weight of every candidate $c \in C'$ by $1$ and increases the
weight of $a_{e^*}$ by $2$. Thus, after the recount the weight of $a_{e^*}$
is $\mu+1$, whereas the weights of all candidates in $C\setminus A_E$
do not exceed $\mu$, so the winner is a candidate $a\in A_E$.
Since $\SW(a)>\SW(p)$, this means that $p$ cannot win after the recount.
\item
$|V'|\le B_\calD$, $V'$ is an independent set.
Then by our assumption $|V'|<\ell=B_\calD$.
Consider an edge $e^*=\{x, y\}$ with $x\in V'$, $y\not\in V'$.
For each $u\in V'\setminus\{x\}$, the defender can pick one edge $e^u$
such that $u\in e^u$, and demand a recount in districts $D_{e^u, u}$
for each $u\in V'\setminus\{x\}$ as well as in $D_{e^*, x}$ and in $D_{e^*, y}$.
This strategy requires recounting $|V'|+1\le B_\calD$ districts and ensures
that after the recount the weight of $e^*$
is $\mu+1$, whereas the weights of all candidates in $C\setminus A_E$
are at most $\mu$, so the winner is a candidate $a\in A_E$.
Since $\SW(a)>\SW(p)$, this means that $p$ cannot win after the recount.
\end{itemize}
Hence, the attacker cannot win in any case. This completes the proof.
\end{proof}

Theorem~\ref{thm:PD-Man-NPhardness1}
holds even for $B_\calD=0$, but for
Theorems~\ref{thm:PD-Man-hardness} and~\ref{thm:PD-Man-NPhardness2}
this is not the case. Indeed,
{\sc $\PD$-Man} is in NP when $B_\calD=0$, since the attacker simply needs
to guess a manipulation and check whether it makes $p$ the winner. The unweighted problem
(Theorem~\ref{thm:PD-Man-NPhardness2}) can be shown to be in P when $B_\calD=0$;
the argument uses a reduction to nonuniform bribery similar to
the one in the proof of Theorem~\ref{thm:PD-Rec-unweighted}.
Thus, recounting has a clear impact on the complexity of the attacker's problem.

\section{Regular Manipulations}\label{sec:regular}
In our model, the attacker does not have to transfer votes to his preferred candidate $p$
in the manipulated districts; indeed, he may even choose to transfer votes {\em from} $p$
to another candidate. However, manipulations that give additional votes to candidates other than $p$
are counter-intuitive and may be difficult to implement in practice. Therefore, in this section
we study what happens if the attacker is limited to transferring votes (in case of $\PV$)
or vote weight (in case of $\PD$) to his preferred candidate $p$.

\begin{definition}[Regular manipulation]
Let $p$ be the preferred candidate of the attacker.
A manipulation $(M, \tildevv)$
is said to be {\em regular}
if for every district $i\in M$ it holds that
\begin{itemize}
\item
the voting rule is $\PV$ and
$
\tilde{v}_{ia} \le v_{ia} \text{ for all $a\in C\setminus\{p\}$};
$
\item
the voting rule is $\PD$
and in $\tildevv$ candidate $p$ is the winner in each district in $M$.
\end{itemize}
\end{definition}

The difference between our general model and the one where the attacker is limited to using regular
manipulations is similar to the difference between swap bribery and shift bribery \citep{EFA09}:
in swap bribery the attacker can change the vote in any way he likes subject to budget constraints,
while in shift bribery he is limited to shifting his preferred candidate in voters' rankings.

One may expect that the restriction to regular manipulations is without loss of generality:
indeed, why would the attacker want to transfer votes to candidates other than $p$? However,
our next example shows that this intuition is incorrect.

\begin{example}\label{ex:regular}
We show an example for $\PV$; the example also works for $\PD$ by setting $w_i=n_i$ for every $i\in[k]$. Consider an instance with $3$ candidates $\{a,b,p\}$ and $19$
voters who are distributed to $12$ districts.
The vote profile is as follows:
\begin{center}
\small
\begin{tabular}{c c c c c}
\noalign{\hrule height 1pt}
Candidate & $D_1$ & $D_2$ & \quad $D_3,\dots,D_8$ \quad & \quad $D_9,\dots, D_{12}$ \\
\noalign{\hrule height 1pt}
$a$  & $0$ & $3$ & $1$ & $0$ \\
$p$  & $6$ & $0$ & $0$ & $0$ \\
$b$  & $0$ & $0$ & $0$ & $1$ \\
\noalign{\hrule height 1pt}
\end{tabular}
\end{center}

\noindent
Also, $B_\calA=2$, $B_\calD=1$, and $\gamma_i=n_i$ for all $i\in\{1, \dots, 12\}$.
The true winner is candidate $a$ with $9$ votes, compared to the $6$ votes of $p$ and the $4$ votes of $b$.
No regular manipulation can make $p$ win: no matter what the attacker does,
by recounting at most one district the defender can ensure that $a$ gets at least $8$
votes and $p$ gets at most $7$ votes.

Now, consider a non-regular manipulation that distorts all votes in $D_1$ 
in favor of $b$, and all votes in $D_2$ 
in favor of $p$. Then in the distorted profile
$a$ has $6$ votes and $p$ has $3$ votes, and $b$ wins with $10$ votes.
If the defender does not recount $D_1$, 
$b$ remains the winner after recounting,
and if she does recount it, $p$ becomes the winner. Crucially, since $\SW(b)<\SW(p)$,
the defender prefers the latter option, so $p$ wins after the recount.
\hfill\qed
\end{example}

Example~\ref{ex:regular} shows that only considering regular manipulations
may be suboptimal for the attacker. However, the attacker may be limited to
regular manipulations by practical considerations. For instance, the election officials
in the manipulated districts may find it difficult to follow complex instructions.
Thus, it is interesting to understand
if focusing on regular manipulations affects the complexity of the problems we consider.

The following observation will be useful for our analysis.
\begin{proposition}\label{prop:regular}
Let $\calR\in\{\PV, \PD\}$, and let $(M, \tildevv)$ be a winning regular manipulation.
Then for every recounting strategy $R\subseteq M$ 
it holds that after the recount $p$ is the election winner. 
\end{proposition}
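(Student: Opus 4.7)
The plan is to argue by contradiction, combining two ingredients: a monotonicity consequence of regularity, and the optimality of the defender's chosen response $R^*$ (which, since $(M,\tildevv)$ is assumed winning, returns $p$ as the winner at the profile $\uu(R^*)$).

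I would first establish the key monotonicity claim: for every $R\subseteq M$, if $\uu$ is the resulting profile and $s_c(\uu)$ denotes the score of $c$ under the applicable rule, then $s_p(\uu)\ge\SW(p)$ and $s_c(\uu)\le\SW(c)$ for every $c\ne p$. For $\PV$ this is immediate from the defining inequalities $\tilde v_{ia}\le v_{ia}$, since restoring a subset of manipulated districts can only partially undo the shift of mass toward $p$. For $\PD$ it follows because in each manipulated district $p$ captures the full weight $w_i$ while every other candidate gets zero from that district, so again the effect of recounting only a subset is to partially cancel changes that all hurt $c\ne p$ and help $p$ relative to $\vv$.

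With the monotonicity in hand, suppose for contradiction that some valid recount $R$ produces a winner $a'\ne p$ at $\uu$. Because $a'$ wins, $s_{a'}(\uu)\ge s_p(\uu)$, with equality permitted only when $a'\succ p$; chaining with the monotonicity inequalities gives $\SW(a')\ge s_{a'}(\uu)\ge s_p(\uu)\ge \SW(p)$. Since the defender chose $R^*$ (with winner $p$) over $R$ (with winner $a'$), her optimality criterion forces $\SW(p)\ge\SW(a')$, with equality permitted only when $p\succ a'$. Combining the two directions collapses every inequality to equality, and the tie-breaking rules then simultaneously demand $a'\succ p$ (because $s_{a'}(\uu)=s_p(\uu)$ and $a'$ is declared the winner at $\uu$) and $p\succ a'$ (because $\SW(a')=\SW(p)$ and the defender prefers $p$), a contradiction. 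The only nontrivial step is the monotonicity claim, which has to be verified separately for each rule; the tie-break bookkeeping that follows is routine.
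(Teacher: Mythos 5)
Your proof is correct and uses the same two ingredients as the paper's: regularity forces $s_p\ge\SW(p)$ and $s_c\le\SW(c)$ under any partial recount, and the defender's optimality (given that $(M,\tildevv)$ is winning) forces any alternative winner to be weakly dispreferred to $p$, so the chained inequalities plus the two opposing tie-breaks rule out any winner other than $p$. The paper phrases this directly (any non-$p$ winner must lie in the set of candidates the defender disprefers to $p$, and no such candidate can beat $p$ after a recount) rather than by contradiction, but the argument is essentially identical.
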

\begin{proof}
Let $B=\{b\in C\setminus\{p\}:
\SW^\calR(b)<\SW^\calR(p)~\text{or}~\SW^\calR(b)=\SW^\calR(p), p\succ b\}$.
Since $M$ is a winning manipulation, 
the winner after recounting is either $p$ or some candidate in $B$;
we will show that, since $M$ is regular, the latter case is, in fact, impossible. 
For each $c\in C$, let $s_c$
denote the number of votes/vote weight of $c$ after the recount. 
Since $M$ is a regular manipulation, for each candidate $b\in B$
$$
s_b\le SW^\calR(b)\le \SW^\calR(p)\le s_p, 
$$
and if $b\succ p$, the second inequality is strict. Thus, $p$ beats every candidate
in $B$ after recounting, so no such candidate can be the election winner.
\end{proof}
By setting $R=\varnothing$ in Proposition~\ref{prop:regular}, we observe that
$p$ is the winner at $\tildevv$, i.e., the situaiton described in Example~\ref{ex:regular},
where $p$ does not win after the manipulation, but the defender is forced to make $p$
the election winner, cannot occur if the attacker is limited to regular manipulations.

In what follows, we consider the complexity of
{\sc $\calR$-Rec} and {\sc $\calR$-Man}
for $\calR\in \{\PV, \PD\}$ under the assumption that the attacker is limited to regular
manipulations; we denote these versions of our problems
by {\sc $\calR$-Rec-Reg} and {\sc $\calR$-Man-Reg}, respectively.
We first consider the defender's problem
({\sc $\calR$-Rec-Reg}) and then the attacker's problem ({\sc $\calR$-Man-Reg}).

\subsection{The Defender's Problem}

Let $\calR\in\{\PV, \PD\}$, and
consider a regular manipulation $(M, \tildevv)$. Recall that we assume that $|M| > B_\calD$.
Note that if $p$ is not a winner at $\tildevv$, the attacker necessarily
By Proposition~\ref{prop:regular}, we can assume
that $p$ is the winner at $\tildevv$. The defender can then try the following greedy strategy.
Initially, it defines the set of {\em provisional winners} to consist of $p$.
Then, for each $a\in C\setminus\{p\}$ such that $\SW^\calR(a)>\SW^\calR(p)$ or
$\SW^\calR(a)=\SW^\calR(p)$ and $a\succ p$
the algorithm sorts the districts in $M$ in non-increasing order in terms of the quantity
$
(v_{ia} - v_{ip}) - (\tilde{v}_{ia} - \tilde{v}_{ip})
$
for $\PV$, and the quantity
$
(w_{ia} - w_{ip}) - (\tilde{w}_{ia} - \tilde{w}_{ip})
$
for $\PD$; ties are broken arbitrarily.
Next, it checks what happens if the first $B_\calD$ districts in this order
are recounted; if this results in a candidate $b\in C\setminus\{p\}$ with $\SW^\calR(b)>\SW^\calR(p)$
or $\SW^\calR(b)=\SW^\calR(p)$, $b \succ p$, winning the election,
the defender adds $b$ to the set of {\em provisional winners}.
Finally, it outputs the provisional winner with the maximum social welfare,
breaking ties according to $\succ$.
We refer to this algorithm as {\em greedy recounting};
note that its running time is polynomial in the input size.

\begin{lemma}
\label{lem:Rec-regular-better}
Let $\calR\in\{\PV, \PD\}$. Suppose that the attacker uses a regular manipulation $(M, \tildevv)$.
Then greedy recounting outputs $p$ if and only if $(M, \tildevv)$ is a winning strategy for
the attacker.
\end{lemma}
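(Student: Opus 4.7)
The plan is to prove both directions of the equivalence by exploiting the fact that, under a regular manipulation, the only candidates that can ever beat $p$ after \emph{any} recount are those in
$A = \{a \in C\setminus\{p\} : \SW^\calR(a) > \SW^\calR(p), \text{ or } \SW^\calR(a) = \SW^\calR(p) \text{ with } a\succ p\}$,
which is exactly the set that greedy recounting iterates over. The underlying monotonicity claim is the following: for a regular manipulation, recounting any subset of $M$ can only weakly decrease $p$'s score (PV) or weight (PD), and can only weakly increase that of any other candidate. Hence the maximum possible score/weight of $a\neq p$ across all recounts is $\SW^\calR(a)$ and the minimum possible score/weight of $p$ is $\SW^\calR(p)$, so any $c\notin A\cup\{p\}$ is dominated by $p$ under every recount.

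For the forward direction, suppose $(M,\tildevv)$ is a winning strategy. By Proposition~\ref{prop:regular}, $p$ is the election winner after every recount $R\subseteq M$; in particular, for each $a\in A$ the set $R^*_a$ that greedy constructs produces $p$ as the winner. So no candidate other than $p$ is ever added to the provisional winners set, and greedy outputs $p$.

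For the reverse direction, I would argue by contrapositive: assume $(M,\tildevv)$ is not winning, so the defender has a recount $R$ whose resulting winner $b\neq p$ satisfies $\SW^\calR(b)>\SW^\calR(p)$, or $\SW^\calR(b)=\SW^\calR(p)$ with $b\succ p$; in particular $b\in A$. I would then run the greedy iteration for this specific $b$. The key step is to verify that the sorting key $(v_{ib}-v_{ip})-(\tilde v_{ib}-\tilde v_{ip})$, and its PD analogue, is exactly the per-district gain in $b$'s lead over $p$ obtained by recounting district $i$, and that this quantity is nonnegative under regularity. It follows that $R^*_b$ maximizes $b$'s lead over $p$ among all $R'\subseteq M$ with $|R'|\le B_\calD$, and so $b$'s lead at $R^*_b$ is at least its lead at $R$, which is strictly positive (or zero with $b\succ p$). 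Therefore $b$ still beats $p$ at $R^*_b$, so the actual winner $c$ at $R^*_b$ is not $p$; since $c$ beats $p$, the monotonicity claim forces $c\in A$. Greedy then adds $c$ to the provisional winners, and because $c\in A$ dominates $p$ under the final selection rule (maximum $\SW^\calR$ with $\succ$-tiebreaking), the output is not $p$.

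The main obstacle I anticipate is the monotonicity-plus-optimality step for PD, where $w_{ia}$ is a $0/w_i$ indicator of $a$ being the true winner of district $i$ and regularity pins down $\tilde w_{ip}=w_i$, $\tilde w_{ia}=0$ for every $i\in M$, $a\neq p$. One has to carefully verify that the greedy's additive sorting is still exactly the per-district gain in $b$'s lead over $p$ and that each such gain is nonnegative; for PV this is a direct linear calculation, so the bulk of the work is in spelling out the PD bookkeeping cleanly.
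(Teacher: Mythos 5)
Your proposal is correct and follows essentially the same route as the paper's proof: both rest on (a) regularity forcing $p$'s score to only decrease and every other candidate's to only increase under recounting, so that any non-$p$ winner of any recount must lie in $A$, and (b) the observation that the greedy sorting key is exactly the per-district gain in $a$'s lead over $p$, so the greedy set maximizes that lead over all feasible recounts. The only (harmless) difference is organizational: you justify the maximality of the greedy set via nonnegativity of the keys, whereas the paper first normalizes the defender's witness recount to have size exactly $B_\calD$ and compares equal-size sums, and you fold the paper's ``winner at the greedy set is $p$'' case into a direct argument rather than a separate contradiction.
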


\begin{proof}
We provide the proof for $\calR = \PV$; the proof for $\calR = \PD$ is obtained
by replacing candidates' vote counts with weights.
Consider the attacker's strategy $(M, \tildevv)$.
Given a set of districts $R\subseteq M$ and a candidate $a\in C$,
let $s_a(R)$ denote the number of votes that candidate $a$ gets after
the attacker manipulates according to $(M, \tildevv)$
and the defender recounts the districts in $R$.
Given two candidates $a, b\in C$ and a subset of districts $R\subset M$,
we write $s_a(R) \rhd s_b(R)$ if $s_a(R)>s_b(R)$ or $s_a(R)=s_b(R)$ and $a\succ b$.

Since $(M, \tildevv)$ is a regular manipulation, for every  $R\subseteq M$ we have
$$
s_p(R) \ge \SW(p)\quad\text{and}\quad
s_c(R) \le \SW(c) \text{ for all $c\in C\setminus\{p\}$}.
$$

Suppose that $(M, \tildevv)$ is not a winning strategy for the attacker.
Then there exists a subset $R^*$ of at most $B_\calD$ districts
such that after recounting $R^*$ the winner is a candidate
$a\in C\setminus\{p\}$ such that $\SW(a) > \SW(p)$ or $\SW(a)=\SW(p)$, $a\succ p$.
We can assume without loss of generality that $|R^*|=B_\calD$.
Indeed, suppose that $|R^*|<B_\calD$ and $a$ wins after recounting the districts in $R^*$.
Let $Q$ be an arbitrary set of $B_\calD$ districts such that $R^*\subset Q\subseteq M$,
and suppose that once the votes in $Q$ are recounted, the winner is $c$.
Since $(M, \tildevv)$ is a regular manipulation, we have
\begin{align*}
\SW(c)\ge s_c(Q)\ge s_a(Q)\ge s_a(R^*)\ge s_p(R^*)
      \ge \SW(p).
\end{align*}
If any of these inequalities is strict, we have $\SW(c)>\SW(p)$.
Otherwise, the second inequality implies $c\succ a$ and the fourth inequality implies
$a\succ p$, so we have $\SW(c)=\SW(p)$, $c\succ p$. In either case,
recounting the districts in $Q$ results in an outcome that the defender prefers to $p$.

We will argue that the greedy recounting algorithm does not output $p$.
Let $R_a$ be the set of districts recounted by this algorithm
when it considers candidate $a$ (i.e., $R_a$ contains the first $B_\calD$ districts
in non-increasing order of the quantity $(v_{ia}-v_{ip}) - (\tilde{v}_{ia} - \tilde{v}_{ip})$),
and let $b$ be the winner after recounting the districts in $R_a$.
Consider the following possibilities.
\begin{itemize}
\item
$b\neq p$.
If $\SW(b)>\SW(p)$ or $\SW(b)=\SW(p)$, $b\succ p$,
the algorithm adds $b$
to the set of provisional winners and thus does not output $p$.
Otherwise, we have
$$
s_p(R_a) \ge \SW(p) \ge \SW(b) \ge s_b(R_a);
$$
if $b\succ p$, the second inequality is strict.
Consequently, $s_p(R_a)\rhd s_b(R_a)$, so
$b$ cannot be the winner after recounting the districts in $R_a$, a contradiction.

\item
$b=p$.
By our choice of $R_a$ and the fact that $|R_a|=|R^*|$ we have
\begin{align*}
0 &\le s_a(R^*) -s_p(R^*)\\
  &= \sum_{i\not\in R^*}(\tilde{v}_{ia}-\tilde{v}_{ip})+\sum_{i\in R^*} (v_{ia}-v_{ip})\\
  &= \sum_{i\in [k]}(\tilde{v}_{ia}-\tilde{v}_{ip})+
    \sum_{i\in R^*} (v_{ia}-v_{ip}-\tilde{v}_{ia}+\tilde{v}_{ip})\\
  &\le \sum_{i\in [k]}(\tilde{v}_{ia}-\tilde{v}_{ip})+
    \sum_{i\in R_a} (v_{ia}-v_{ip}-\tilde{v}_{ia}+\tilde{v}_{ip})\\
  &= \sum_{i\not\in R_a}(\tilde{v}_{ia}-\tilde{v}_{ip})+\sum_{i\in R_a} (v_{ia}-v_{ip})\\
  &= s_a(R_a) -s_p(R_a).
\end{align*}
Combining this with
the fact that $s_p(R_a) \rhd s_a(R_a)$, we conclude that $s_p(R_a) = s_a(R_a)$
and $p\succ a$.
Thus, all inequalities above are, in fact, equalities,
so in particular $s_p(R^*) = s_a(R^*)$. Together with $p\succ a$
this implies that $s_p(R^*)\rhd r_a(R^*)$,
a contradiction with our assumption that $a$ is the winner after recounting $R^*$.
\end{itemize}
This completes the proof.
\end{proof}

Notably, greedy recounting does not constitute an algorithm for {\sc $\calR$-Rec-Reg}:
it is unable to decide whether there is a recounting strategy that results
in a specific candidate becoming the election winner.
However, it serves as a $1/2$-approximation algorithm
for the defender: it outputs a candidate $a$ such that for every candidate $a'$ that
can be made a winner by recounting at most $B_\calD$ districts it holds that $\SW(a)\ge \SW(a')/2$.

\begin{theorem}
\label{thm:Rec-approx-regular}
Greedy recounting is a $1/2$-approximation algorithm
for the optimization versions of {\sc $\PV$-Rec-Reg} and {\sc $\PD$-Rec-Reg}.
\end{theorem}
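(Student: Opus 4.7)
The plan is to show that if $a^*$ denotes the optimal candidate for the defender (i.e., the one with the largest social welfare among those that can be made the election winner by recounting at most $B_\calD$ districts), then the winner that greedy recounting obtains when processing $a^*$ itself is already a $1/2$-approximation. Concretely, let $a^*$ achieve the maximum $\SW(a^*)$ over all reachable winners. If $a^*=p$, then by Lemma~\ref{lem:Rec-regular-better} greedy recounting outputs $p=a^*$, and we are done. Otherwise, there is some witness $R^*\subseteq M$ with $|R^*|\le B_\calD$ making $a^*$ the winner, and the interesting case is when greedy considers $a^*$ and recounts its greedily chosen set $R_{a^*}$, producing some actual winner $b$.

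The first step is to argue $b\neq p$. This is essentially the second bullet of the proof of Lemma~\ref{lem:Rec-regular-better}: the sorting rule used by greedy implies the rearrangement inequality $s_{a^*}(R_{a^*})-s_p(R_{a^*})\ge s_{a^*}(R^*)-s_p(R^*)\ge 0$ (with the tie-breaking handled as in that proof), so $s_p(R_{a^*})\not\rhd s_{a^*}(R_{a^*})$, ruling out $b=p$.

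The second and most important step is the inequality
\[
s_{a^*}(R_{a^*}) + s_p(R_{a^*}) \;\ge\; \SW(a^*) + \SW(p),
\]
which is where regularity is used. For each $i\in M\setminus R_{a^*}$, the fact that total votes (for $\PV$) or total weight (for $\PD$) in a district is conserved, combined with the regularity requirement that $\tilde v_{ia}\le v_{ia}$ for every $a\neq p$, yields $\tilde v_{ip}-v_{ip}\ge v_{ia^*}-\tilde v_{ia^*}$; summing over $i\in M\setminus R_{a^*}$ and adding the unchanged districts gives the displayed inequality (the $\PD$ case is identical with $v$ replaced by $w$, since regularity there forces all district weight to go to $p$ in manipulated districts). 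This is the main obstacle: it is the step where the proof genuinely depends on the manipulation being regular, and without it Example~\ref{ex:regular} shows that the guarantee fails.

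The rest is assembly. Since $b$ wins at $R_{a^*}$, $s_b(R_{a^*})\ge \max\{s_{a^*}(R_{a^*}),\,s_p(R_{a^*})\}\ge \tfrac12(s_{a^*}(R_{a^*})+s_p(R_{a^*}))\ge \tfrac12(\SW(a^*)+\SW(p))\ge \SW(a^*)/2$, and regularity (applied to $b\neq p$) gives $\SW(b)\ge s_b(R_{a^*})\ge \SW(a^*)/2$. To make sure that $b$ actually becomes a provisional winner in the algorithm, I will check that $\SW(b)\ge \SW(p)$ from the same chain of inequalities, and handle the equality case: if $\SW(b)=\SW(p)$ then each inequality in the chain is tight, so in particular $s_b(R_{a^*})=s_p(R_{a^*})$, and because $b$ wins the tie under $R_{a^*}$ we get $b\succ p$, so $b$ is added. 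Since greedy returns the provisional winner with the largest social welfare, its output $a$ satisfies $\SW(a)\ge \SW(b)\ge \SW(a^*)/2$, completing the proof for both $\calR=\PV$ and $\calR=\PD$.
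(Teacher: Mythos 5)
Your argument is correct, and while it rests on the same two ingredients as the paper's proof, it assembles them quite differently. Both proofs use (a) the greedy sorting rule, via a rearrangement argument, to guarantee that after recounting the greedily chosen set the optimal candidate $a^*$ (the paper's $c$) still beats $p$, so the resulting winner $b$ is not $p$; and (b) regularity, which makes every per-district transfer quantity nonnegative. The paper then does explicit bookkeeping: it partitions $M$ into $I_G, I_O, I_{OG}, I_{\overline{OG}}$, introduces the transfer quantities $\Delta(I,J)$, derives $\Delta(I_O\cup I_{\overline{OG}},c)\le \tfrac12\SW(c)$, and transfers the bound to the actual winner via $\SW(a)-\Delta(\cdot,a)\ge \SW(c)-\Delta(\cdot,c)$. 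You replace all of that with a single conservation inequality, $s_{a^*}(R_{a^*})+s_p(R_{a^*})\ge \SW(a^*)+\SW(p)$ (which follows because, in each still-manipulated district, the votes gained by $p$ are at least the votes lost by $a^*$ there), plus the observation that the winner's score is at least the average of any two scores; combined with $\SW(b)\ge s_b(R_{a^*})$ for $b\ne p$ this gives $\SW(b)\ge\tfrac12(\SW(a^*)+\SW(p))\ge\tfrac12\SW(a^*)$. This averaging step is essentially equivalent to the paper's inequality $s_c(G)\ge\tfrac12\SW(c)$, but it avoids the four-set decomposition and the $\Delta$ notation entirely, and it makes visible exactly where the factor $2$ comes from (dropping $\SW(p)\ge 0$). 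Your handling of the boundary cases -- ruling out $b=p$ via the ordering inequality (which works even when $|R^*|<B_\calD$ because the sorted quantities are nonnegative under regularity), and checking that $b$ is actually added to the provisional-winner set, including the tie $\SW(b)=\SW(p)$ -- is also sound. One small presentational point: for $\PD$ you should state explicitly that regularity forces the entire weight $w_i$ of a manipulated district to move to $p$, which is what makes the per-district inequality $\tilde w_{ip}-w_{ip}\ge w_{ia^*}-\tilde w_{ia^*}$ immediate; but this is exactly the substitution the paper also waves at when it says the $\PD$ case follows by replacing vote counts with weights.
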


\begin{proof}
We focus on PV; the analysis can be adapted for PD, by modifying the notation
so as to take into account the weights of the candidates rather than their vote counts.

Consider an instance with a set of candidates $C$, $|C|=m$,
and let $p$ be the attacker's preferred candidate.
Suppose that the attacker uses a regular manipulation $(M,\tildevv)$;
we will assume that $p$ is the winner in the manipulated instance,
as otherwise the attacker does not have an incentive to manipulate.
For each $c\in C$, let $s_c$ denote the vote count of candidate $c$ in the manipulated instance.
If $p$ is the winner before the manipulation or if no recounting
strategy can change the outcome, then greedy recounting is
trivially optimal. Hence, in the remainder of the proof we assume that
there is a candidate $b \neq p$ such that $\SW(b)> \SW(p)$ or $\SW(b)=\SW(p)$, $b\succ p$,
such that the defender can make $b$ win; let $c$ be the defender's most
preferred candidate with this property, and let $R$ be a recounting strategy
that results in $c$ becoming the winner after a recount.
We consider the round in which greedy recounting examines candidate $c$;
suppose that greedy recounting selects a subset of districts $G$.
Let $A = C \setminus \{c,p\}$ denote the set of the remaining $m-2$ candidates.

We define the following pairwise disjoint sets of districts:
\begin{itemize}
\item $I_G=G\setminus R$;
\item $I_O=R\setminus G$;
\item $I_{OG}=R\cap G$;
\item $I_{\overline{OG}}=M\setminus(R\cup G)$.
\end{itemize}
Given a set of districts $I\subseteq M$ and
a subset of candidates $J\subseteq C\setminus\{p\}$,
let
$$
\Delta(I, J)= \sum_{i \in I}\sum_{a \in J}(v_{ia}-\tilde{v}_{ia})
$$
denote the total number of votes in $I$ that are transferred by the attacker from candidates in $J$
to $p$;
if $J$ or $I$ is a singleton, we omit the curly braces and write $\Delta(I, j)$
or $\Delta(i, J)$, respectively.
Since $(M, \tildevv)$ is a regular manipulation,
we have
\begin{align}
s_p &= \SW(p) + \Delta(M, c) + \Delta(M, A),\label{eq:score-p}\\
s_c &= \SW(c) - \Delta(M, c),\label{eq:score-c}\\
s_a &= \SW(a) - \Delta(M, a)~\text{for each $a\in A$}\label{eq:score-a}.
\end{align}
Since recounting the districts in $R = I_O \cup I_{OG}$
ensures that $c$ becomes the winner, we obtain
\begin{align}\label{eq:cp1}
s_c + \Delta(I_O \cup I_{OG}, c)
\geq s_p - \Delta(I_O \cup I_{OG}, c) - \Delta(I_O \cup I_{OG}, A);
\end{align}
if $p\succ c$, this inequality is strict.

Next, let us focus on the behavior of the greedy recounting. Let $g \in I_G$ and $o \in I_O$.
Since the greedy algorithm selects $g$, but not $o$, we have
\begin{align*}
(v_{gc} - v_{gp}) - (\tilde{v}_{gc} - \tilde{v}_{gp})
\ge (v_{oc} - v_{op}) - (\tilde{v}_{oc} - \tilde{v}_{op}).
\end{align*}
Since
$v_{ic} - \tilde{v}_{ic} = \Delta(i, c)$ and
$\tilde{v}_{ip} - v_{ip} = \Delta(i, c) + \Delta(i, A)$ for every $i \in M$,
we then obtain
\begin{align*}
2 \Delta(g, c) + \Delta(g, A) \ge
2 \Delta(o, c) + \Delta(o, A).
\end{align*}
Since $|G|=B_\calD$, $|R|\le B_\calD$,
we have $|I_G|\ge |I_O|$. Pick a subset of districts $I'_G\subseteq I_G$
with $|I'_G|=|I_O|$.
We can pair each $o \in I_O$ with a unique $g \in I'_G$,
and add all corresponding inequalities to get
$$
 2\Delta(I'_G, c) + \Delta(I'_G, A)
 \ge 2\Delta(I_O, c) + \Delta(I_O, A);
$$
since $\Delta(i, b)\ge 0$ for all $b\in A\cup\{c\}$ and all $i\in I_G\setminus I'_G$,
we get
\begin{align}\label{eq:greedy}
 2\Delta(I_G, c) + \Delta(I_G, A)
 \ge 2\Delta(I_O, c) + \Delta(I_O, A).
\end{align}

By adding inequalities \eqref{eq:cp1} and \eqref{eq:greedy}, we obtain
\begin{align*}
&  s_c  +  \Delta(I_O \cup I_{OG}, c)+ 2\Delta(I_G, c) + \Delta(I_G, A)\\
&\ge s_p - \Delta(I_O \cup I_{OG}, c) - \Delta(I_O \cup I_{OG}, A)+ 2\Delta(I_O, c) + \Delta(I_O, A),
\end{align*}
or, simplifying,
\begin{align}\label{eq:cp2}
s_c + \Delta(I_G \cup I_{OG}, c)
\ge s_p - \Delta(I_G \cup I_{OG}, c) - \Delta(I_G \cup I_{OG}, A);
\end{align}
if $p\succ c$, then Inequality~\eqref{eq:cp1} is strict and hence Inequality~\eqref{eq:cp2}
is strict as well.
Inequality~\eqref{eq:cp2} means that after recounting the districts in
$G = I_G \cup I_{OG}$, $c$ beats $p$, i.e., the winner is either $c$
or another candidate $a\in A$.
To conclude the proof, it suffices to show that if the winner in the recounted instance
is some candidate $a\in A$ then $\SW(a) \geq \frac{1}{2}\SW(c)$.

By substituting expressions for $s_c$ and $s_p$ from~\eqref{eq:score-c} and~\eqref{eq:score-p},
we can write Inequality~\eqref{eq:cp2} as
\begin{align*}
\SW(c) - \Delta(I_O \cup I_{\overline{OG}}, c)
\geq \SW(p) + \Delta(I_O \cup I_{\overline{OG}}, c) +
\Delta(I_O \cup I_{\overline{OG}}, A).
\end{align*}
Since $\SW(p) \geq 0$ and $\Delta(I_O \cup I_{\overline{OG}}, A) \geq 0$,
it follows that
\begin{align}\label{eq:cc}
\Delta(I_O \cup I_{\overline{OG}}, c) \leq \frac{1}{2}\SW(c).
\end{align}
By our assumption, recounting the districts in $G = I_G \cup I_{OG}$ results in $a$
getting at least as many votes as $c$, so we obtain
\begin{align*}
s_a + \Delta(I_G \cup I_{OG}, a)
\geq s_c + \Delta(I_G \cup I_{OG}, c).
\end{align*}
By substituting expressions for $s_a$ and $s_c$ from~\eqref{eq:score-a} and~\eqref{eq:score-c},
we can rewrite this inequality as
\begin{align*}
\SW(a) - \Delta(I_O \cup I_{\overline{OG}}, a)
\geq \SW(c) - \Delta(I_O \cup I_{\overline{OG}}, c).
\end{align*}
Finally, since $\Delta(I_O \cup I_{\overline{OG}}, a) \geq 0$,
from Inequality~\eqref{eq:cc} we obtain
\begin{align*}
\SW(a) \geq \frac{1}{2}\SW(c),
\end{align*}
as desired.
\end{proof}

In fact, the bound on the approximation ratio
provided by Theorem~\ref{thm:Rec-approx-regular} is essentially tight.

\begin{theorem}
\label{thm:Rec-inapprox-regular}
For any constant $\varepsilon>0$,
neither {\sc $\PV$-Rec-Reg} nor {\sc $\PD$-Rec-Reg} admits a polynomial-time
$(\frac{1}{2}+\varepsilon)$-approximation algorithm unless {\em P = NP}, even when $m=3$.
\end{theorem}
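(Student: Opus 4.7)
The plan is to establish both inapproximability results via a single gap-preserving reduction from an NP-complete arithmetic problem---most naturally \textsc{Subset Sum} or \textsc{Partition}---to \textsc{$\PV$-Rec-Reg} with $m=3$; the construction will then be lifted to \textsc{$\PD$-Rec-Reg} by setting each district's weight equal to its number of voters. The target ratio $\tfrac{1}{2}$ matches the guarantee of Theorem~\ref{thm:Rec-approx-regular}, so the natural strategy is to build instances that are essentially tight for the greedy analysis: $\SW(p)=0$, the attacker moves only the high-value candidate's votes to $p$, and the optimal recount restores exactly half of those votes.

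Given a \textsc{Subset Sum} instance with target $T$, I would build a three-candidate instance on $\{a,c,p\}$ in which $c$ is a ``high-value'' candidate whose manipulable votes are split across districts $D_1,\dots,D_n$ in proportions $x_1,\dots,x_n$, all regularly transferred to $p$. To prevent ``overshoot''---without which any subset of districts with sufficiently large $\sum_{i\in R}x_i$ would restore $c$ as the winner---I would \emph{couple} each manipulable $D_i$ with additional votes for $a$, so that any recount restoring enough of $c$'s votes inevitably restores enough of $a$'s votes for $a$ to overtake $c$. Safe (unmanipulated) districts together with the coupling coefficient and the tie-breaking order are chosen so that: (i) $c$ beats both $a$ and $p$ after the recount if and only if $\sum_{i\in R}x_i$ equals $T$ exactly, because for smaller sums $p$ still wins while for larger sums $a$ overtakes $c$; (ii) whenever $T$ is unreachable, the defender can still fall back to a recount that makes $a$ win; and (iii) the manipulation remains regular (all transfers go to $p$). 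Thus on a yes-instance the defender can pick the witness subset and obtain $c$ as the winner with $\SW(c)\approx 2K$, while on a no-instance she is forced into the fallback and obtains $a$ with $\SW(a)\approx K$.

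Amplification is routine: by scaling the safe-district sizes (and tuning the coupling coefficient) the ratio $\SW(a)/\SW(c)$ can be driven arbitrarily close to $\tfrac{1}{2}$, so any polynomial-time $(\tfrac{1}{2}+\varepsilon)$-approximation algorithm would separate yes- from no-instances and thereby solve \textsc{Subset Sum} in polynomial time. For \textsc{$\PD$-Rec-Reg}, the same construction applies with each district weighted by its size; when necessary, the coupling of $c$'s and $a$'s votes can be realised by pairing adjacent districts (one with winner $c$, one with winner $a$, both flipped to $p$ by the manipulation) rather than mixing voters inside a single district, and the budget $B_\calD$ is adjusted accordingly so that the same ``single-point window'' argument goes through.

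The main obstacle is the precise arithmetic: choosing the coupling coefficient, the safe-district sizes, and $B_\calD$ so that (a) the set of recount sums for which $c$ wins collapses to the single integer value $T$, (b) the manipulation remains regular (in particular, $p$ truly wins the manipulated election so that the defender is forced to act), and (c) the fallback outcome $a$ is guaranteed to be available whenever $T$ is unreachable. Once the parameters are pinned down, verifying the gap and performing both the amplification and the $\PD$ lift reduce to direct bookkeeping.
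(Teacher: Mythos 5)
Your overall frame is the right one and coincides with the paper's: a gap reduction from a subset-sum-type problem (the paper uses {\sc Partition}) with three candidates, in which the high-$\SW$ candidate is restorable if and only if an exact-sum condition holds, every other candidate's social welfare is below $(\frac{1}{2}+\varepsilon)$ times the optimum so that any $(\frac{1}{2}+\varepsilon)$-approximation is forced to decide restorability, and the $\PD$ case follows by setting each district's weight equal to its number of voters. The gap is in the coupling gadget, and it is not ``direct bookkeeping.'' For over-restoration to hand the win to $a$, each recounted district must restore strictly more votes to $a$ than to $c$: comparing the win condition at $t=T$ with the loss condition at the next achievable sum $t=T+\delta$ forces your coupling coefficient $\alpha$ to satisfy $\alpha>1$, and forces $C_0-A_0\approx(\alpha-1)T$, where $C_0,A_0$ are the safe (unrecounted-because-unmanipulated) vote counts of $c$ and $a$ and $S=\sum_i x_i$. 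But then, if all of $c$'s and $a$'s remaining support sits in the coupled districts, $\SW(a)-\SW(c)=(A_0+\alpha S)-(C_0+S)\approx(\alpha-1)(S-T)>0$. So for \emph{every} choice of safe-district sizes and coupling coefficient the fallback $a$ has higher social welfare than the target $c$; since $a$ is by design always achievable, the defender's optimum is the same on yes- and no-instances and there is no gap at all. In other words, the two knobs you propose to turn (``scaling the safe-district sizes and tuning the coupling coefficient'') provably cannot drive $\SW(a)/\SW(c)$ toward $\frac{1}{2}$.

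The missing ingredient, which is the heart of the paper's construction, is to give the target candidate a large extra block of support that is manipulated (transferred to $p$) yet effectively unrecoverable because it is spread over far more single-voter districts than $B_\calD$ (the paper's set $I_2$, of size $2z\ell$ with $z=\lceil y/\varepsilon\rceil$); this inflates the target's social welfare to roughly twice everyone else's without perturbing the recount dynamics, and it is also what makes $p$ actually win the distorted profile. The paper moreover sidesteps your coupling entirely: in its gadget the recounted votes shuttle between the two \emph{low}-$\SW$ candidates $b$ and $p$, whose combined count is fixed at roughly twice the target's, so the target wins iff the split is exactly even --- which is precisely {\sc Partition}, and which makes the single-point-window and divisibility arguments clean. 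Your gadget could probably be repaired by adding the unrecoverable block for $c$, but as written the plan does not go through.
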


\begin{proof}
We focus on PV; the proof for PD follows by setting the weight of each district in the reduction below
to be equal to the number of voters therein.

We will show that if there is a $(\frac{1}{2}+\varepsilon)$-approximation algorithm for
{\sc $\PV$-Rec-Reg}, it can be used to solve {\sc Partition}; see Definition~\ref{def:Partition}.

Given an instance $X$ of {\sc Partition} with $|X|=\ell$,
we construct an instance of {\sc $\PV$-Rec-Reg} with a set of candidates $C=\{a, b, p\}$,
where $p$ is the attacker's preferred candidate,
as follows. Let $y = \sum_{x\in X} x$, and $z=\lceil y/\varepsilon\rceil$.
Without loss of generality, we assume that all integers in $X$ are divisible by $4$ and hence $y \ge 4$.
In what follows, we describe each district $D_i$ by a tuple $(v_{ia}, v_{ib}, v_{ip})$.
The districts are partitioned into the following three sets $I_1$, $I_2$ and $I_3$:
\begin{itemize}
\item
For each $x\in X$, there is a district in $I_1$ with votes
$(0, 2x\ell, 0)$, which are distorted to $(0, 0, 2x\ell)$.
\item
$I_2$ consists  of $2z\ell$ districts with votes $(1,0,0)$,
which are distorted to $(0,0,1)$.
\item
$I_3$ consists of two districts with votes
$(2z\ell + y\ell + 2\ell, 0, 0)$ and $(0, 2z\ell,0)$,
which are not distorted.
\end{itemize}
Finally, the budget of the defender is $B_\calD = \ell-1$.

Since votes are transferred to $p$ only, the manipulation is regular.
The vote counts of the candidates before and after the manipulation are as follows:
\begin{center}
\small
\begin{tabular}{c l l}
\noalign{\hrule height 1pt}
	& True vote counts ($\SW$) 	& Distorted vote counts \\
\noalign{\hrule height 0.5pt}
$a$ 	& $4z\ell +  y \ell + 2\ell$ 	& $2z\ell + y\ell + 2\ell$ \\
$b$ 	& $2z\ell + 2y\ell$	 	& $2z\ell$  \\
$p$ 	& $0$			 	& $2z\ell + 2y\ell$ \\
\noalign{\hrule height 1pt}
\end{tabular}
\end{center}
Therefore, before the manipulation the winner is $a$,
and the manipulation makes $p$ the election winner.
Since
$$
\frac{\SW(c)}{\SW(a)} \le \frac{2z\ell+2y\ell}{4z\ell +  y \ell + 2\ell}
< \frac{1}{2} + \varepsilon\quad\mbox{for each}~c \in \{b, p\},
$$
any $(\frac{1}{2}+\varepsilon)$-approximation algorithm can decide whether
$a$ can be restored as the winner. We will now argue that this is equivalent
to deciding whether the given instance of {\sc Partition} is a yes-instance.

Suppose that $X$ is a yes-instance of {\sc Partition}, i.e., there exists a subset
$X' \subseteq X$ such that $\sum_{x \in X'}x = y/2$; note that $|X'|\le \ell-1$.
Then, by recounting the $|X'|$ districts of $I_1$ that
correspond to the integers in $X'$, the defender lowers the vote count of $p$ by $y\ell$ and increases
the vote count of $b$ by $y\ell$. As a result,
$a$ gets $2z\ell + y\ell + 2\ell$ votes,
$b$ gets $2z\ell + y\ell$
votes, and $p$ gets $2z\ell + y\ell$ votes.
Therefore, $a$ is restored as the election winner.

Conversely, suppose that there is no subset
$X' \subseteq X$ such that $\sum_{x \in X'} x = y/2$.
Since all integers in $X$ are divisible by $4$,
$y/2$ is even and hence for any $X' \subseteq X$ we have
$|\sum_{x \in X'} x - y/2 | \geq 2$.
Suppose that the defender recounts districts in $I_1$
that correspond to a subset $X'\subseteq X$ as well as $q$ districts in $I_2$;
let $u = \sum_{x\in X'}x$.
Since $q< \ell$, the vote count of candidate $a$ after the recount is
$$
2z\ell + y\ell + 2\ell + q < 2z\ell + y\ell + 3\ell.
$$
If $u \geq y/2+2$, then the vote count of $b$ after the recount is
$$
2z\ell + 2 u\ell \geq 2z\ell + y\ell + 4\ell.
$$
Otherwise, $u \leq y/2-2$.
Since $q<\ell$, the vote count of $p$ after the recount is
\begin{align*}
2z\ell - q + 2\ell\sum_{x \in X \setminus X'}x
&= 2z\ell - q + 2y\ell - 2 u\ell\\
&> 2z\ell + 2y\ell - 2 u\ell - \ell \\
&\ge 2z\ell + y\ell + 3\ell.
\end{align*}
Therefore, in either case one of $b$ or $p$ gets more votes than $a$, and the theorem follows.
\end{proof}

\subsection{The Attacker's Problem}
Greedy recounting also plays an important role in our analysis of {\sc $\calR$-Man-Reg}.
Indeed, even though greedy recounting does not constitute an algorithm for
{\sc $\calR$-Rec-Reg}, Lemma~\ref{lem:Rec-regular-better}
suffices to establish that {\sc $\calR$-Man-Reg} is in NP:
the attacker can guess a regular manipulation and
use greedy recounting to verify whether it is successful.
For PV, this complexity upper bound is tight:
one can check that in the hardness proofs in Theorem~\ref{thm:PV-Man-hardness}
the attacker's successful manipulation strategy is regular,
and hence {\sc $\PV$-Man-Reg} is NP-complete.
We summarize these observations in the following theorem.

\begin{theorem}\label{thm:Man-hardness-regular}
{\sc $\PV$-Man-Reg} is {\em NP}-complete.
The hardness result holds even if $m=3$ or if the input vote profile and district
weights are given in unary.
\end{theorem}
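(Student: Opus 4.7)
The plan is to lift the result already sketched in the paragraph preceding the theorem: membership via the greedy recounting verification procedure, and hardness by transporting the two reductions from Theorem~\ref{thm:PV-Man-hardness} unchanged.

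For membership in NP, I would let the certificate be a regular manipulation $(M,\tildevv)$ with $|M|\le B_\calA$, whose description has size polynomial in the input. The verifier simulates the defender by running greedy recounting on $(M,\tildevv)$, which takes polynomial time, and accepts iff the algorithm outputs $p$. Correctness is immediate from Lemma~\ref{lem:Rec-regular-better}: the greedy algorithm returns $p$ precisely when $(M,\tildevv)$ is a winning regular manipulation. Hence {\sc $\PV$-Man-Reg} $\in$ NP.

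For NP-hardness under each of the two stipulations, I would reuse verbatim the reductions in the proof of Theorem~\ref{thm:PV-Man-hardness}: the reduction from {\sc Subset Sum} for $m=3$ (Part (i)) and the reduction from X3C for the unary case (Part (ii)). The key observation is that in both constructions the manipulation exhibited in the ``yes $\Rightarrow$ attacker wins'' direction transfers votes only to $p$ (in Part (i), all votes in the chosen districts of $I_1$ and $I_2$ are reassigned to $p$; in Part (ii), all $9\ell$ votes in each chosen district of $I_1$ are reassigned to $p$). Thus these manipulations are regular, and the forward direction of each reduction remains valid for {\sc $\PV$-Man-Reg}.

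The converse direction is free: a regular winning manipulation is, a fortiori, a winning manipulation in the general model, so the ``attacker wins $\Rightarrow$ yes'' arguments already established in Theorem~\ref{thm:PV-Man-hardness} apply without modification. The only thing to double-check is that the regularity restriction does not block the forward direction in either reduction, which amounts to inspecting the two explicit manipulations; this is the sole point requiring care, but it is immediate from how those manipulations are defined. Putting these pieces together yields NP-completeness and preserves both the $m=3$ restriction and the unary encoding.
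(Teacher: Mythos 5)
Your proposal is correct and matches the paper's own argument exactly: membership in NP via guessing a regular manipulation and verifying it with greedy recounting through Lemma~\ref{lem:Rec-regular-better}, and hardness by observing that the successful manipulations constructed in both parts of Theorem~\ref{thm:PV-Man-hardness} only transfer votes to $p$ and are therefore regular, while the converse directions carry over since any winning regular manipulation is a winning manipulation. No gaps.
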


We cannot use the same approach to show that {\sc $\PD$-Man-Reg} is NP-hard:
the hardness proofs in Theorems~\ref{thm:PD-Man-hardness}--\ref{thm:PD-Man-NPhardness2} 
rely on the attacker using a non-regular strategy.
In fact, it turns out that {\sc $\PD$-Man-Reg} is polynomial-time solvable, 
i.e., for PD focusing on regular manipulations brings down
the complexity of the attacker's problem from $\Sigma_2^P$ to P.

\begin{theorem}\label{thm:PD-Man-poly-regular}
{\sc $\PD$-Man-Reg} can be solved in polynomial time.
\end{theorem}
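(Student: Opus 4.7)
The plan is to exploit Proposition~\ref{prop:regular}: a winning regular manipulation makes $p$ the election winner under every defender response, so the attacker's task decouples into one condition per opponent. Throughout, I assume WLOG that $a_i\neq p$ for every $i\in M$ (manipulating a district already won by $p$ is useless) and that $|M|=B_\calA$ (adding districts under regularity only increases $p$'s post-recount weight).

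Step 1: Characterize the defender's best response. Under a regular manipulation, $p$'s post-recount weight is $\SW(p)+\sum_{i\in M\setminus R}w_i$ and $c$'s post-recount weight is $\SW(c)-\sum_{i\in M\cap D_c\setminus R}w_i$, where $D_c=\{i:a_i=c\}$. A short calculation shows that the defender's margin for promoting a specific opponent $c\neq p$ is linear in $\sum_{i\in R} v_i^c$, where
\[
v_i^c \;:=\; w_i\bigl(1+\mathbf{1}[a_i=c]\bigr).
\]
Hence the defender's unique best recount against $c$ is to pick the $B_\calD$ districts in $M$ with the largest $v_i^c$.

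Step 2: Reduce to joint bottom-sum feasibility. The attacker wins iff some $M$ with $|M|=B_\calA$ and $M\subseteq\{i:a_i\neq p\}$ satisfies, for every $c\neq p$,
\[
\sum_{i\in M} v_i^c \;-\; \bigl(\text{top-}B_\calD\ v^c\text{-values in }M\bigr) \;\ge\; \theta_c,
\]
where $\theta_c=\SW(c)-\SW(p)$, strengthened by $1$ when $c\succ p$ to handle tiebreaking. For a single fixed $c$, a simple exchange argument (swap any element of $M$ for an outside element with strictly larger $v^c$; the bottom-sum weakly increases) shows that the bottom-sum is maximized by the $B_\calA$ districts with the largest $v_i^c$. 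This already gives a necessary condition: if for some $c$ even this greedy set fails $\theta_c$, output \textsc{no}.

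Step 3: A polynomial algorithm for the joint problem. The core idea is to enumerate the identity of the tightest opponent $c^\star$ and, within this guess, use the fact that $v_i^c$ only takes two values per $c$ (namely $w_i$ or $2w_i$) to reduce to a sorting/greedy selection refined inside each class $D_c$. Concretely, for each guessed $c^\star$, sort by $v^{c^\star}$, construct an $M$ greedily from the top, and adjust by swapping in heavier districts from each $D_c$ only as needed to meet the remaining opponents' thresholds; the slack from Step 2's single-candidate optimality provides the budget for such swaps. The main obstacle is proving that this enumeration is complete: since different opponents induce different orderings of the districts, one must establish an exchange lemma showing that any feasible $M$ can be transformed, without violating feasibility, into a canonical form aligned with some $c^\star$. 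Once completeness of the enumeration is established, the polynomial running time is immediate, giving \textsc{P}-membership in contrast to the $\Sigma_2^P$-completeness of the unrestricted problem (Theorem~\ref{thm:PD-Man-hardness}).
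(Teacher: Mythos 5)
Your Steps 1 and 2 are correct and mirror the role that greedy recounting plays in the paper's argument: under a regular manipulation the defender's problem decouples into one linear condition per opponent $c$, her best response against $c$ is to recount the $B_\calD$ districts of $M$ maximizing $v_i^c=w_i(1+\mathbf{1}[a_i=c])$, and for a single fixed $c$ the exchange argument for the ``bottom-sum'' is sound. (One small omission: $M$ must be drawn from the set of districts that can actually be flipped to $p$ given the $\gamma_i$ constraints, not merely from $\{i:a_i\neq p\}$; the paper computes this set via nonuniform bribery.)

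The genuine gap is Step 3, which is the heart of the theorem. You must exhibit a single $M$ meeting \emph{all} the per-opponent thresholds simultaneously, and since the orderings induced by $v^c$ differ across opponents (a district won by $c$ counts double only against $c$), the single-opponent greedy sets need not be compatible. Your proposed fix --- guess a ``tightest opponent'' $c^\star$, take the top districts by $v^{c^\star}$, and ``swap in heavier districts from each $D_c$ as needed'' --- is exactly the step you concede is unproven, and it is not clear it can be made to work: the paper's algorithm does not align the selection with any single $c^\star$, but instead builds a committed set $Q$ iteratively, where each iteration tests $Q$ padded with the globally heaviest remaining districts, and when some candidate $a$ defeats this tentative manipulation, adds a heaviest manipulable district whose true winner is $a$. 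The correctness of each such commitment is a nontrivial exchange lemma (Lemma~\ref{lem:pd-reg} in the paper): one must show that \emph{every} winning manipulation extending $Q$ intersects $S_a\setminus Q$, and that replacing such a district by a heaviest one in $S_a\setminus Q$ preserves the property of being a winning manipulation against every recount. Since successive iterations may be driven by different failing candidates, a one-shot canonical form keyed to a single $c^\star$ does not obviously capture all winning manipulations; without the completeness lemma you acknowledge is missing, the proposal does not establish membership in P.
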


\begin{proof}
Let $p$ be the attacker's preferred candidate, and 
let $A = \{c \in C: \SW(c) > \SW(p) \text{ or } \SW(c) = \SW(p), c \succ p \}$ 
be the set of candidates that are preferred to $p$ by the defender.
For each $c \in C\setminus\{p\}$, we denote by $S_c$ the set of districts 
that have $c$ as their true winner and can be  
manipulated in favor of $p$. Let $S = \bigcup_{c \in C\setminus\{p\}} S_c$
denote the set of all districts that can be manipulated in favor of $p$.
Note that for every $c\in C$ the set $S_c$ can be computed 
efficiently: the problem of deciding if the winner of district $D_i$ 
can be changed to $p$ can be viewed as an instance of nonuniform bribery under Plurality with 
prices in $\{0, 1\}$ and budget $\gamma_i$, and nonuniform bribery is in P
for the Plurality rule
(see the proof of Theorem~\ref{thm:PD-Rec-unweighted} 
for the definition of nonuniform bribery and references).

Since the manipulation is regular, the attacker's strategy can be identified 
with a subset $M \subseteq S$.
Let $\ell = \min\{B_\calA, |S|\}$ be the maximum number of districts that can be manipulated.
For any set $Q \subseteq S$, $|Q| \le \ell$, let $f(Q)$ be the set that 
consists of $\ell - |Q|$ heaviest districts in $S \setminus Q$, 
with ties broken arbitrarily; thus, $|f(Q) \cup Q| = \ell$.
Our algorithm is based on the following lemma.

\begin{lemma}\label{lem:pd-reg}
Consider a subset $Q \subset S$ such that
there exists a winning regular manipulation $M$, $|M| \le \ell$, with $Q\subset M$.
Suppose that when the attacker manipulates the districts in 
$Q \cup f(Q)$, there is a candidate $a\in A$ such that the defender 
can make $a$ beat $p$ by recounting at most $B_\calD$ districts. 
Let $S_a^{\max}=\arg\max_{j \in S_a \setminus Q} w_j$. 
Then 
\begin{itemize}
\item[(i)] $S_a\setminus Q\neq\varnothing$, and
\item[(ii)] for each $i\in S_a^{\max}$ there is a winning regular manipulation $M'$, 
$|M'| \le \ell$, with $Q \cup \{i\} \subseteq M'$.
\end{itemize}
\end{lemma}
Before we prove this lemma, we will explain how to use it to find 
a winning regular manipulation if it exists. The algorithm proceeds as follows.
\begin{itemize}
\item[1.]
Set $Q = \varnothing$.
\item[2.]
Apply greedy recounting to $Q\cup f(Q)$ to check whether $Q\cup f(Q)$
is a winning regular manipulation. If yes, terminate and return $Q\cup f(Q)$.
Otherwise greedy recounting returns a candidate $a\in A$ such that 
the defender can make $a$ beat $p$ by recounting at most $B_\calD$ districts.
\item[3.]
If $S_a \setminus Q = \varnothing$ or $|Q|=\ell$, then output $\varnothing$.
Otherwise, select an arbitrary $i \in S_a^{\max}$, 
set $Q \leftarrow Q \cup \{i\}$, and go back to Step 2.
\end{itemize}
By Lemma~\ref{lem:Rec-regular-better}, if the algorithm returns $Q\cup f(Q)$
at the end of Step 2, then $Q\cup f(Q)$ is a winning regular manipulation.
Otherwise, by Lemma~\ref{lem:pd-reg}, there is no winning strategy.
This shows that our algorithm is correct. To see that it runs in polynomial time, 
note that every execution of Step 2 increases $|Q|$ by $1$,  
and $|Q|$ is bounded from above by $\ell$.

To complete the proof, it remains to prove Lemma~\ref{lem:pd-reg}
\begin{proof}[Proof of Lemma~\ref{lem:pd-reg}]
Suppose 
that there exist $Q$, $M$ and $a$  
that satisfy the conditions in the statement of the lemma. 
For each candidate $c\in C$ and each $X\subseteq S$, 
let $s_c(X)$ denote the weight of $c$ 
after the districts in $X$ have been manipulated in favor of $p$.
We prove each claim of the lemma separately.

\paragraph{Proof of claim~(i).}
We will prove a stronger claim, 
namely, that $M \cap (S_a \setminus Q) \neq \varnothing$.

Suppose for the sake of contradiction that $M \cap (S_a \setminus Q) = \varnothing$.
We will argue that in this case $Q\cup f(Q)$ is a winning manipulation, 
thereby obtaining a contradiction with the assumptions of the lemma.
To this end, we consider an arbitrary recounting strategy $R'\subseteq Q\cup f(Q)$, 
$|R'| \le B_\calD$,
transform it into a recounting strategy $R\subseteq M$, and use the fact 
that $M$ is a winning manipulation.

Since $Q \subseteq M$, we have $M \cap S_a \subseteq Q$ and 
$(M\setminus Q) \cap S_a = \varnothing$. Hence,
\begin{align}\label{eq:PD-Man-poly-regular:1}
\sum_{i \in M\setminus Q} w_{ia} = \sum_{i \in (M\setminus Q) \cap S_a} w_{i} = 0.
\end{align}
Fix a recounting strategy $R' \subseteq Q\cup f(Q)$, $|R'| \le B_\calD$. 
Let $R''$ be the set of  $\min \left\{ |R' \cap f(Q)|, |M\setminus Q| \right\}$ 
heaviest districts in $M\setminus Q$, 
and set $R = (R' \cap Q ) \cup R''$.
Note that $R\subseteq M$ and $|R|\le B_\calD$:
we have 
$|R| \le |R' \cap Q| + |R' \cap f(Q)| \le |R'| \le B_\calD$.
Moreover, $Q \setminus R = Q \setminus R'$, and 
\begin{subequations}\label{eq:PD-Man-poly-regular:2}
\begin{align*}
s_a \big( (Q\cup f(Q)) \setminus R' \big)
&= \SW(a) - \sum_{i \in (Q\cup f(Q)) \setminus R'} w_{ia} \\
&\le \SW(a) - \sum_{i \in Q \setminus R'} w_{ia} \\
&= \SW(a) - \sum_{i \in Q \setminus R} w_{ia} - \sum_{i \in M\setminus Q} w_{ia} \\
&\le \SW(a) - \sum_{i \in Q \setminus R} w_{ia} - \sum_{i \in (M\setminus Q) \setminus R} w_{ia} \\
&= s_a ( M \setminus R ),
\tag{\ref{eq:PD-Man-poly-regular:2}}
\end{align*}
\end{subequations}
where the third transition follows by \eqref{eq:PD-Man-poly-regular:1}.

Next, we claim that
\begin{align*}
\sum_{i \in f(Q) \setminus R'} w_{i} \ge  \sum_{i \in (M \setminus Q) \setminus R} w_{i}.
\end{align*}
Indeed, if $|R''|=|M\setminus Q|$, then
$M\setminus Q\subseteq R$, so the right-hand side of this inequality is $0$,
and our claim is immediate. Otherwise, 
$|R''| = |f(Q)\cap R'| = |(M\setminus Q)\cap R|$
and $|f(Q)|=|M\setminus Q|$, i.e., 
both sums have the same number of summands.
Moreover, 
$f(Q)$ contains the heaviest $\ell-|Q|$ districts in $S \setminus Q$, 
$M\setminus Q\subseteq S\setminus Q$,
and $(M\setminus Q)\cap R$ consists of $|(M\setminus Q)\cap R|$
heaviest districts in $M\setminus Q$, so the claim follows.

We can now write
\begin{subequations}\label{eq:PD-Man-poly-regular:3}
\begin{align*}
s_p \big( (Q\cup f(Q)) \setminus R' \big)
&= \SW(p) + \sum_{i \in Q \setminus R'} w_{i} + \sum_{i \in f(Q) \setminus R'} w_{i} \\
&\ge \SW(p) + \sum_{i \in Q \setminus R} w_{i} + \sum_{i \in (M \setminus Q) \setminus R} w_{i} \\
&= s_p ( M \setminus R ). \tag{\ref{eq:PD-Man-poly-regular:3}}
\end{align*}
\end{subequations}
Combining inequalities~\eqref{eq:PD-Man-poly-regular:2} 
and \eqref{eq:PD-Man-poly-regular:3}, we obtain
\begin{align*}
s_a ( (Q\cup f(Q)) \setminus R') - s_p( (Q\cup f(Q))\setminus R')
\le s_a (M \setminus R) - s_p(M \setminus R).
\end{align*}
Since $M$ is a winning manipulation, we have $s_a(M\setminus R)\le s_p(M\setminus R)$, 
and if $a\succ p$, this inequality is strict. As this hold for any defender's strategy $R'$, 
it follows that $Q\cup f(Q)$ is a winning manipulation, too, which contradicts 
the assumptions of the lemma.

\paragraph{Proof of claim~(ii).}
We have established that $S_a\setminus Q\neq\varnothing$
and hence $S_a^{\max}\neq\varnothing$.
Now, suppose for the sake of contradiction that for some $i \in S_a^{\max}$ 
there is no winning regular manipulation $M'$ with 
$|M'| \le \ell$, $Q \cup \{i\} \subseteq M'$.
Since all districts in $S_a^{\max}$ 
are identical from both the attacker's and the defender's perspective, 
it holds that, in fact, for every $i \in S_a^{\max}$ 
there is no winning regular manipulation $M'$ with 
$|M'| \le \ell$, $Q \cup \{i\} \subseteq M'$.

We have argued that $M \cap (S_a \setminus Q) \neq \varnothing$;
pick some $j \in M \cap (S_a \setminus Q)$.
Since $Q\cup\{j\}\subseteq M$ and $M$ is a winning regular manipulation, 
it follows that $j \notin S_a^{\max}$. 
Pick some $i \in S_a^{\max}$ and set $M'=(M\setminus\{j\})\cup\{i\}$. 
We will now obtain a contradiction by showing 
that $M'$ is a winning regular manipulation.
Consider an arbitrary recounting strategy $R' \subseteq M'$, 
$|R'| \le B_\calD$.

\begin{itemize}
\item[(a)]
If $i \in R'$, let $R = (R'\setminus \{i\}) \cup \{j\}$ 
so that $|R| = |R'| \le B_\calD$, and $M'\setminus R' = M\setminus R$. Since
$M$ is a winning strategy, for every $c\in A$ we have
\begin{align*}
s_c(M'\setminus R') - s_p(M'\setminus R') =  s_c(M \setminus R) - s_p(M \setminus R) \le 0;
\end{align*}
if $c\succ p$, this inequality is strict.
\item[(b)]
If $i \notin R'$, let $R = R'$. Then for every $c\in C\setminus \{a, p\}$ we have
\begin{align*}
s_c (M'\setminus R') = s_c (M \setminus R),
\end{align*}
and, since $w_j < w_i$, we obtain
\begin{align*}
s_a (M'\setminus R') &= s_a (M \setminus R) + w_j - w_{i} < s_a (M \setminus R),\\
s_p (M'\setminus R') &= s_p (M \setminus R) - w_{i} + w_j > s_p (M \setminus R).
\end{align*}
Combining these facts, for every $c\in A$ we have
\begin{align*}
s_c (M'\setminus R') - s_p(M'\setminus R') < s_c (M \setminus R) - s_p(M \setminus R) \le 0.
\end{align*}
Thus, both in case (a) and in case (b), 
$p$ remains the winner after recounting.
\end{itemize}
This completes the proof of the lemma.
\end{proof}
We have described an algorithm that finds a winning regular manipulation 
(and returns $\varnothing$ if no such manipulation exists) 
in polynomial time. Thus, {\sc $\PD$-Man-Reg} is in P.
\end{proof}

\section{Conclusion and Open Problems}
We have studied the problem of protecting elections by means of recounting
votes in the manipulated districts. Our results offer an almost complete picture of the
worst-case complexity of the problems faced by the defender and the attacker. Perhaps
the most obvious open question is whether we can strengthen the NP-hardness results
for {\sc $\PV$-Man} and for {\sc $\PD$-Man} under unary representation to
$\Sigma_2^P$-completeness results. The next challenge is to extend our results
beyond Plurality; e.g., leadership elections are often conducted using Plurality
with Runoff, and it would be interesting to understand if similar results
hold for this rule.

Our model is quite expressive: districts may have different weights, and an attacker
may only be able to corrupt a fraction of votes in a district. These features of the model
are intended to capture the challenges of real-world scenarios; in particular, it is typically
infeasible for the attacker to change {\em all} votes in a district.
However, it is important to understand their impact on the complexity of the
problems we consider. We tried to indicate which of our hardness results hold for special
cases of the model, and proved some easiness results under simplifying assumptions,
but it would be good to obtain a more detailed picture, possibly using the tools
of parameterized complexity. A concrete open question is whether our $\Sigma_2^P$-hardness
result holds if $\gamma_i=n_i$ for all $i\in [k]$. 

We contrasted out model with that of \citet{yin2018optimal}, where the defender
moves first and protects some of the districts from manipulation. In practice, the defender
can use a variety of protective measures at different points in time, and an exciting direction
for future work is to analyze what happens when the defender can split her resources
among different activities, with some activities preceding the attack, and others
(such as recounting) undertaken in the aftermath of the attack.

\bibliographystyle{named}
\bibliography{recounting-bib}

\newpage
\appendix

\section{Appendix: Proof of Lemma~\ref{lem:SSS-hardness}}

\begin{figure*}[t]
\newcolumntype{R}{>{$}r<{$}}
\newcolumntype{C}{>{$}c<{$}}
\newcommand{\setline}{\rule[-0.4em]{0pt}{1.5em}}
\newcommand{\addblankrow}{\multicolumn{7}{r}{\rule[-0.3em]{0pt}{1em}}\\}
\newcommand{\setcell}[1]{\multicolumn{1}{c}{\makebox[7mm][c]{#1}}}
\small
\centering
\begin{tabular}{R|R|R|R|R|R|R|l}
\multicolumn{1}{C}{} & \setcell{$1$} 	&  \multicolumn{1}{c}{\makebox[16mm][c]{$\dots$}}  &  \setcell{$2i$}  & \setcell{$2i+1$}  &  \multicolumn{1}{c}{\makebox[16mm][c]{$\dots$}}  & \setcell{$2|B|+2$}  &  \\[1mm]
\hhline{~|------|~}
\setline x_i & \cellcolor{gray!25}	\overline{1}	 &&&&	    &  \cellcolor{gray!25} \overline{a_i}  &\emph{(one copy for each $i \in \{1,\dots,|A|\}$)}  \\\hhline{~|------|~}
\addblankrow\hhline{~|------|~}
\setline x_0 &	\cellcolor{gray!25} \overline{1}	 &&&&	    & \cellcolor{gray!25} \overline{0}  &\emph{($q$ copies)}  \\\cline{2-7}
\addblankrow[3mm]
\hhline{~|------|~}
\setline y_i &&& \cellcolor{gray!25} \overline{1} &&	    &   \cellcolor{gray!25}  \overline{b_i} &\emph{($q+1$ copies for each $i \in \{1,\dots,|B|\}$)}  \\\hhline{~|------|~}
\addblankrow\hhline{~|------|~}
\setline y'_i &&& \cellcolor{gray!25} \overline{1} &&	    &    \cellcolor{gray!25} \overline{0} &\emph{($q+1$ copies for each $i \in \{1,\dots,|B|\}$)}  \\\hhline{~|------|~}
\addblankrow[3mm]\hhline{~|------|~}
\setline w_i &&&& \cellcolor{gray!25} \overline{1} &	    &    & %
    \multirow{3}{*}{%
    $\left\}\  \begin{array}{l}
     \text{\em ($q$ copies of $y_{-i} = w_i-z_i$}
    \\[0.7mm] \text{\em  for each $i \in \{1,\dots,|B|\}$)}
    \end{array}\right.$ 
    }
\\\hhline{~|------|~}
\multicolumn{6}{r}{}\\[-0.5mm]\hhline{~|------|~}
\setline z_i &&&&  &	    &   \cellcolor{gray!25}  \overline{b_i} &  \\\hhline{~|------|~}
\addblankrow\addblankrow
\noalign{\hrule height 0.8pt width 120mm}
\addblankrow\addblankrow\hhline{~|------|~}
\setline s & \cellcolor{gray!25} \overline{q} & \multicolumn{1}{c|}{\cellcolor{gray!25} $\cdots$} &
\cellcolor{gray!25} \overline{2q+1} & \cellcolor{gray!25} \overline{q} &
\multicolumn{1}{c|}{\cellcolor{gray!25} $\cdots$}	
&   \cellcolor{gray!25} \overline{t} & \emph{(the goal)}  \\\cline{2-7}
\addblankrow
\end{tabular}
\caption{Reduction from {\sc bi-SS} to {\sc SSS$^+$} (all blank sections are $\overline{0}$s).
\label{fig:reduction-biSS-to-SSS-t}}
\end{figure*}

In order to prove that SSS is $\Sigma_2^P$-complete, we will first show
(Lemma~\ref{lem:complexity-SSS-t}) that a variant of this problem,
which we call {\sc SSS$^+$}, is $\Sigma_2^P$-complete.
We will then explain how to reduce {\sc SSS$^+$} to SSS.

An instance of {\sc SSS$^+$} is given by a set of positive integers $Y$
and two positive integers $r$ and $f$.
It is a yes-instance if there exists a subset $Y' \subseteq Y$ with $|Y'|=r$
such that for all $Y'' \subseteq Y'$ it holds that $\sum_{x \in Y''} x \neq f$,
and a no-instance otherwise.

\begin{lemma}\label{lem:complexity-SSS-t}
{\sc SSS$^+$} is $\Sigma_2^P$-complete.
\end{lemma}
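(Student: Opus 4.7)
The plan has two parts, membership and hardness.

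For membership in $\Sigma_2^P$, I would argue directly from the quantifier structure of the problem: a yes-instance is certified by a subset $Y' \subseteq Y$ with $|Y'| = r$, and given such a $Y'$, checking that no $Y'' \subseteq Y'$ sums to $f$ is precisely the complement of {\sc Subset Sum} on input $(Y', f)$ and therefore lies in coNP. Hence {\sc SSS$^+$} is in $\Sigma_2^P$.

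For hardness, I would reduce from a $\Sigma_2^P$-complete two-level variant of {\sc Subset Sum} (here called {\sc bi-SS}), in which one asks whether there exists an $A$-subset of a prescribed size such that no $B$-subset together with it realises a target. The reduction is the digit-block construction sketched in Figure~\ref{fig:reduction-biSS-to-SSS-t}: each integer placed in $Y$ is built as a string of $2|B|+2$ digits in a base $q$ chosen to be strictly larger than any coefficient appearing in any single column. The rows $x_i$ (one per $a_i\in A$), the $q$ padding copies of $x_0$, the $q+1$ copies of each $y_i$ and $y'_i$, and the paired rows $w_i,z_i$ are designed so that column~$1$ bookkeeps the size of the outer selection, columns $2,\dots,2|B|+1$ enforce ``exactly one of each pair'' in the inner sub-subset, and the last column accumulates the actual $A$- and $B$-contributions that must match $t$. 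The target $s$ encodes the prescribed column sums.

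The key steps would be: (i) show, using the fact that $q$ exceeds the maximum possible column sum, that no inner sum equalling $f$ can involve carries between digit positions, so the analysis reduces to matching each column independently; (ii) show that any $Y' \subseteq Y$ of size $r$ whose column~$1$ is consistent with the target pattern must pick exactly $|A|$ of the $x_i$-rows (together with the correct padding multiplicities), so that outer selections correspond bijectively to subsets $A' \subseteq A$ of the required size; (iii) show that, conditional on such an outer selection, inner sub-subsets $Y''\subseteq Y'$ with $\sum Y'' = f$ correspond bijectively to subsets $B' \subseteq B$ realising the forbidden target in {\sc bi-SS}, by tracing how the $y_i/y'_i$ and $w_i/z_i$ gadgets force exactly one of each pair and how the final column then adds up to $t$. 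Correctness of the reduction in both directions follows immediately from this digit-wise bijection.

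The main obstacle, I expect, will be the multiplicity/base bookkeeping in step~(i)--(ii): one has to verify that the numbers $q$, the padding counts ($q$ copies of $x_0$, $q+1$ copies of each $y_i,y'_i$, $q$ copies of $w_i-z_i$), and the outer size $r$ are simultaneously consistent with the target digits of $s$, and that no ``unintended'' combination of padding rows can create a spurious inner sub-sum equal to $f$. Once the no-carry invariant and the forced multiplicities are established, the translation between {\sc bi-SS} witnesses and {\sc SSS$^+$} witnesses is mechanical.
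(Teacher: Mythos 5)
Your membership argument is fine, and your hardness plan targets the same source problem ({\sc bi-SS}) and the same digit-block construction as the paper, so the overall route is right. However, there is a genuine gap in how you plan to prove correctness of the reduction: you never reconcile the mismatch in quantifier structure between the two problems. In {\sc bi-SS} the leader's integers are \emph{added to the sum} and the follower draws from a \emph{disjoint} pool $B$, whereas in {\sc SSS$^+$} the follower draws a \emph{sub-subset of the leader's own set} $Y'$. If you run the reduction with the quantifiers as you describe them --- ``outer selections $Y'$ of size $r$ correspond to subsets $A'$'' and ``inner sub-subsets $Y''\subseteq Y'$ correspond to subsets $B'$'' --- the argument fails: the leader in {\sc SSS$^+$} is under no obligation to make any column ``consistent with the target pattern''; she simply wants $f$ to be unreachable, and in that reading she could do so trivially by choosing a size-$r$ set that omits the gadget rows the follower would need. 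The paper resolves this by first observing that $\langle Y,r,f\rangle$ is equivalent to the instance $\langle Y,q,s\rangle$ with $q=|Y|-r$ and $s=\sum_{x\in Y}x-f$, in which the leader picks $q$ integers that are \emph{forced into} the sum and the follower then adds freely from the complement. All of the numerology in Figure~\ref{fig:reduction-biSS-to-SSS-t} is calibrated to this reformulation: the first section of the target is $\overline{q}$ precisely so that a leader who commits to $A'$ must spend her entire budget on $x_i$'s and $x_0$'s, and each $y_i,y'_i$ comes in $q+1$ copies precisely so that a budget-$q$ leader can never deprive the follower of the choice between contributing $\overline{b_i}$ or $\overline{0}$ to the last section. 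This multiplicity argument is the heart of the correctness proof and is absent from your plan (you gloss it as ``bookkeeping'').

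Two smaller inaccuracies: {\sc bi-SS} does not prescribe the size of $A'$ (the $q$ padding copies of $x_0$ exist exactly to absorb the slack $q-|A'|$), and the no-carry property comes from making each \emph{section wide enough} (on the order of the logarithm of the maximum possible column sum, which involves up to $2q+2$ summands per block), not from working ``in base $q$''.
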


\begin{proof}
It is easy to see that {\sc SSS$^+$} is in $\Sigma_2^P$.
In the remainder of the proof, we show that this problem is $\Sigma_2^P$-hard.
Consider an instance $\langle Y, r, f\rangle$ of {\sc SSS$^+$}.
Let $q=|Y|-r$, $s = \sum_{x\in Y} x- f$.
Note that $\langle Y, r, f\rangle$ is a yes-instance of {\sc SSS$^+$}
if and only if there exists a subset $Y' \subseteq Y$
with $|Y'|= q$ such that for all $Y'' \subseteq Y\setminus Y'$
it holds that $\sum_{y \in Y'} y + \sum_{y \in Y''} y \neq s$;
thus, this instance of {\sc SSS$^+$} can be equivalently described
by the triple $\langle Y, q, s\rangle$.

To prove hardness of {\sc SSS$^+$},
we show a reduction from the {\sc bilevel Subset Sum} ({\sc bi-SS}) problem, which is known to be
$\Sigma_2^P$-complete~\citep{berman2002complexity}.

\begin{definition}[{\sc bilevel Subset Sum ({\sc bi-SS})}]
An instance of {\sc bi-SS} is given
by a positive integer $t$
and two sets of positive integers $A$ and $B$.
It is a yes-instance if there exists a set $A'\subseteq A$ such that for
all $B'\subseteq B$ it holds that $\sum_{a\in A'} a + \sum_{b\in B'} b \neq t$, and a no-instance otherwise.
\end{definition}

It is convenient to think of both {\sc bi-SS} and
{\sc SSS$^+$} as leader-follower games.
The leader acts first by selecting a subset; his aim is to prevent
the sum of the integers chosen by both players from reaching a given target.
The follower acts second; her aim is to select a subset so that the sum
of the chosen integers equals the target. The difference between these two games
is that in the former game the leader and the follower select from two different sets
and there is no limit of the number of integers each of them can choose,
whereas in the latter game the leader is limited to $q$ integers and both parties
choose from the same base set.

Given an instance $\langle A, B, t\rangle$ of {\sc bi-SS},
where $A=\{a_1, \dots, a_{|A|}\}$, $B=\{b_1, \dots, b_{|B|}\}$,
we proceed as follows.
We will represent a positive integer $x$ as a vector of bits of length $L=\lfloor\log_2 x\rfloor+1$,
denoted $\overline{x}=\overline{x^1\dots x^L}$: we have $\sum_{i \in [L]} \overline{x^i} \cdot 2^{L-i} = x$.
We will consider numbers that correspond to bit vectors consisting of $2|B| + 2$ sections,
with each section consisting of
$$
\left\lceil \log_2\left(\sum_{a\in A} a + (|A|+1)\sum_{a\in B} b + 1\right) \right\rceil +
\left\lceil \log_2 (2|A|+2) \right\rceil
$$
bits; this value, which is polynomial in the size of the input,
is chosen so that addition operations do not carry bits across sections.
For $h=1, \dots, 2|B|+2$, let $\overline{x}(h)$ denote the $h$-th section of $\overline{x}$.

We now construct an instance of {\sc SSS$^+$} described by a triple
$\langle Y, q, s\rangle$.
Let $q = |A|$. The set $Y$ consists of the following integers
(see also Figure~\ref{fig:reduction-biSS-to-SSS-t}):
\begin{itemize}
\item
For each $i = 1, \dots, q$,
there is an integer $x_i$ such that $\overline{x_i}(1) = \overline{1}$,
$\overline{x_i}(2|B| + 2) = \overline{a_i}$, and
$\overline{x_i}(h) = \overline{0}$ for each section $h \neq 1, 2|B|+2$.

\item
There are $q$ copies of integer $x_0$ such that $\overline{x_0}(1) = \overline{1}$,
and $\overline{x_0}(h) = \overline{0}$ for each section $h \neq 1$.

\item For each $i = 1, \dots, |B|$, there are:

\begin{itemize}
\item
$q+1$ copies of integer $y_i$ such that
$\overline{y_i}(2i) = \overline{1}$,
$\overline{y_i}(2|B|+2) = \overline{b_i}$, and
$\overline{y_i}(h) = \overline{0}$ for each section $h \neq 2i, 2|B|+2$.

\item
$q+1$ copies of integer $y'_i$ such that
$\overline{y'_i}(2i) = \overline{1}$ and
$\overline{y'_i}(h) = \overline{0}$ for every section $h\neq 2i$.

\item $q$ copies of integer $y_{-i} = w_i - z_i$,
where $w_i$ is such that
$\overline{w_i}(2i+1) = \overline{1}$ and
$\overline{w_i}(h) = \overline{0}$ for every $h \neq 2i+1$,
while $z_i$ is such that
$\overline{z_i}(2|B|+2) =\overline{b_i}$ and
$\overline{z_i}(h) = \overline{0}$ for every $h \neq 2|B|+2$.
\end{itemize}

\end{itemize}
Also, we set the goal $s$ so that
$\overline{s}(1) = \overline{q}$,
$\overline{s}(2|B|+2) = \overline{t}$, and
$\overline{s}(2h) = \overline{2q+1}$,
$\overline{s}(2h+1) = \overline{q}$ for each $h\in\{1, \dots, |B|\}$.


To verify the correctness of the reduction, we first make the following observation.
In the {\sc SSS$^+$} instance, the follower can achieve the goal $s$ only if,
for each $i = 1, \dots, |B|$, all copies of $y_{-i}$ and
exactly $2q+1$ out of the $2q+2$ copies of $y_i$ and $y'_i$
are included in the set $Y' \cup Y''$, which is chosen by the joint efforts
of the leader and the follower: otherwise, the $2i$-th and the $(2i+1)$-th
sections of the sum would not match the corresponding sections in $s$.
The follower can decide whether $Y'\cup Y''$ will contain $q+1$ copy of $y_i$
and $q$ copies of $y'_i$ or vice versa,
since the leader's choice is restricted to $q$ integers, while the follower's choice is unrestricted.
Therefore, for each $i = 1, \dots, |B|$, the $(2|B|+2)$-th section of the sum
of the selected copies of $y_i$, $y'_i$ and $y_{-i}$ will be either
$\overline{0}$ or $\overline{b_i}$; effectively, the follower chooses whether to include $b_i$ in the sum.

Now, suppose that in the given {\sc bi-SS} instance there exists a subset $A' \subseteq A$
such that for all $B' \subseteq B$ it holds that $\sum_{a\in A'} a + \sum_{b\in B'} b \neq t$.
Then, in the corresponding instance of {\sc SSS$^+$}
the leader can choose the subset $Y'$ containing all $x_i$ such that $a_i \in A'$
and $q - |A'|$ copies of $x_0$. Given this choice of the leader,
the follower can only choose integers from the copies of $y_i$, $y'_i$, and $y_{-i}$
since any other choice will make the first section of the sum different from $\overline{q}$.
However, since $\sum_{a\in A'} a + \sum_{b\in B'} b \neq t$ for all $B' \subseteq B$, no matter which
integers the follower chooses, the last section of the sum cannot be $\overline{t}$.
Thus, this instance of {\sc SSS$^+$} is a yes-instance.

Conversely, suppose that the {\sc bi-SS instance} is such that
for every $A' \subseteq A$ there exists a $B' \subseteq B$ such that
$\sum_{a\in A'} a + \sum_{b\in B'} b = t$. We will now argue that in the
corresponding instance of {\sc SSS$^+$} the follower can always achieve the goal $s$.
Indeed, suppose the leader chooses a set $Y'$. Let $A'=\{a_i: x_i\in Y'\}$,
and, for each $i\in \{1, \dots, |B|\}$, let $\alpha_i$ be the number of copies
of $y_i$ in $Y'$, let $\alpha'_i$ be the number of copies of $y'_i$ in $Y'$,
and let $\beta_i$ be the number of copies of $y_{-i}$ in $Y'$.
Fix some set $B'\subseteq B$ such that $\sum_{a\in A'} a + \sum_{b\in B'} b = t$.
To achieve the goal $s$, the follower can include the following integers in $Y''$:
\begin{itemize}
\item
$q - |A'|$ copies of $x_0$, so that the first section of the sum is exactly $\overline{q}$;
\item
$q+1-\alpha_i$ copies of $y_i$, $q-\alpha'_i$ copies of $y'_i$, and $q-\beta_i$
copies of $y_{-i}$ for each $i$ such that $b_i \in B'$,
so that the last sections of the copies of $y_i, y'_i$, and $y_{-i}$
in $Y'\cup Y''$ sum up to $\overline{b_i}$;
\item
$q-\alpha_i$ copies of $y_i$, $q+1-\alpha'_i$ copies of $y'_i$, and $q-\beta_i$ copies of $y_{-i}$ for each $i$
such that $b_i \notin B'$,
so that the last sections of the copies of $y_i, y'_i$, and $y_{-i}$
in $Y'\cup Y''$
sum up to $\overline{0}$.
\end{itemize}
This completes the proof.
\end{proof}

We are now ready to show that SSS is $\Sigma_2^P$-complete.
This problem is obviously in $\Sigma_2^P$. We show that it is $\Sigma_2^P$-hard
via a reduction from {\sc SSS$^+$}.
Given an instance $\langle Y, r, f \rangle$ of {\sc SSS$^+$},
we construct an instance $\langle X, \ell \rangle$ of SSS as follows.
Let $q = |Y| - r$, $z = \sum_{x\in Y} x + 1$ and $z' = -f - (q+1)z$.
\begin{itemize}
\item
Let $X$ consist of all integers in $Y$,
$2q+1$ copies of $z$ and $q+1$ copies of $z'$. Thus, $|X| = |Y| + 3q + 2$.
\item
Set $\ell = |Y| + 2q + 2$.
\end{itemize}
Observe that any subset $X' \subseteq X$ of size $\ell$ must contain:
(1) at least $r$ integers from $Y$,
(2) at least one copy of $z'$, and
(3) at least $q+1$ copies of $z$.
Note also that $z'+(q+1)z = -f$.
We will show that $\langle Y, r, f \rangle$
is a yes-instance of {\sc SSS$^+$} if and only if
$\langle X, \ell \rangle$ is a yes-instance of SSS.

Suppose that $\langle X, \ell \rangle$ is a yes-instance of SSS;
thus, there exists $X' \subseteq X$ such that
$|X'| = \ell$ and $\sum_{x \in X''} x \neq 0$
for all $X'' \subseteq X'$ with $X''\neq\varnothing$.
By our observation, $X'$ contains at least $r$ integers from $Y$.
Consider a set $Y'$ obtained by picking exactly $r$ elements from $X' \cap Y$.
We claim that $\sum_{x \in Y''}x \neq f$ for all $Y'' \subseteq Y'$.
Indeed, pick an arbitrary subset $Y''\subseteq Y'$, and consider
a set $X''$ containing all integers in $Y''$,
one copy of $z'$ and $q + 1$ copies of $z$.
If $\sum_{x \in Y''} x = f$, then $\sum_{x \in X''} x = 0$,
contradicting our assumption that $\langle X, \ell \rangle$ is a yes-instance of SSS.
Therefore, $Y'$ witnesses that $\langle Y, r, f \rangle$ is a yes-instance of {\sc SSS$^+$}.

Conversely, suppose that $\langle Y, r, f \rangle$ is a yes-instance of {\sc SSS$^+$}, i.e.,
there exists a set $Y' \subseteq Y$ such that $|Y'| = r$ and
$\sum_{x \in Y''}x \neq f$ for all $Y'' \subseteq Y'$.
Consider the set $X'$ containing all integers in $Y'$,
all $q+1$ copies of $z'$, and $q+1$ copies of $z$; hence,
$|X'| = r + 2q + 2 = \ell$.
Suppose towards a contradiction that $\sum_{x \in X''} x = 0$ for some
$X'' \subseteq X'$ with $X'' \neq \varnothing$.

Let $n_z$ and $n_{z'}$ be the number of copies of $z$ and $z'$
in $X''$, respectively; we have $n_z, n_{z'}\le q+1$.
We have
\begin{align}\label{eq:complexity-sss:expand-z}
& 0 = \sum_{x\in X''} x = n_{z'} \cdot z' + n_z \cdot z + \sum_{x \in X'' \cap Y'} x.
\end{align}
Since $z$ and all numbers in $Y'$ are positive,
it holds that $n_{z'}>0$.
Substituting $z'=-f-(q+1)z$, we obtain
\begin{align*}
n_{z'} f + [(q+1)n_{z'}- n_z] z
= \sum_{x \in X'' \cap Y'} x
\le  \sum_{x \in Y} x < z
\end{align*}
and hence
\begin{align}
&(q+1)n_{z'} - n_z <  1. \label{eq:complexity-sss:expand-x-2}
\end{align}
Since $n_z, n_{z'} \le q + 1$ and $n_{z'}>0$,
Eq.~\eqref{eq:complexity-sss:expand-x-2} implies that $n_{z'}=1$ and $n_z=q+1$.
Substituting these values into Eq.~\eqref{eq:complexity-sss:expand-z},
we obtain $\sum_{x \in X'' \cap Y'} x = f$, which contradicts the assumption that
$\sum_{x \in Y''} x \neq f$ for all $Y'' \subseteq Y'$.
We conclude that $\sum_{x \in X''} x \neq 0$ for all $X'' \subseteq X'$
with $X'' \neq \varnothing$, so $\langle X, \ell \rangle$ is a yes-instance of SSS.
\hfill $\qed$

\end{document}